\definecolor{orcidlogocol}{HTML}{A6CE39}
\definecolor{Gray}{gray}{0.85}
\newcounter{example}[section]
\renewcommand{\theexample}{\nthesection.\arabic{example}}
\newenvironment{example}{
     \refstepcounter{example}
     {\vspace{1ex} \noindent\bf  Example  \theexample:}}{
     \eop\vspace{1ex}} %\hspace*{\fill}\vspace*{1ex}}
\newcounter{definition}[section]
\renewcommand{\thedefinition}{\nthesection.\arabic{definition}}
\newenvironment{definition}{
     \refstepcounter{definition}
     {\vspace{1ex} \noindent\bf  Definition  \thedefinition:}}{
     \eop\vspace{1ex}} %\hspace*{\fill}\vspace*{1ex}}
\newcounter{theorem}[section]
\renewcommand{\thetheorem}{\nthesection.\arabic{theorem}}
\newenvironment{theorem}{\begin{em}
        \refstepcounter{theorem}
        {\vspace{1ex} \noindent\bf  Theorem  \thetheorem:}}{
        \end{em}\vspace{0.5ex}} %\hspace*{\fill}\vspace*{1ex}}
\newcounter{lemma}[section]
\renewcommand{\thelemma}{\nthesection.\arabic{lemma}}
\newenvironment{lemma}{\begin{em}
        \refstepcounter{lemma}
        {\vspace{1ex}\noindent \bf Lemma \thelemma:}}{
        \end{em}\vspace{1ex}} %\hspace*{\fill}\vspace*{1ex}}
\newcommand{\nthesection}{\arabic{section}}
\newcommand{\eop}{\hspace*{\fill}\mbox{$\Box$}}
\newcommand{\stitle}[1]{\vspace{1ex} \noindent{\bf {#1}}}
\newcommand{\sstitle}[1]{\vspace{1ex} \noindent{\textit{ #1}}}
\newcommand{\kw}[1]{{\ensuremath {\mathsf{#1}}}\xspace}
\newcommand{\kwnospace}[1]{{\ensuremath {\mathsf{#1}}}}
\newcommand{\ei}{\end{itemize}}
\newcommand{\ee}{\end{enumerate}}
\newcommand{\beqn}{\begin{eqnarray*}}
\newcommand{\eeqn}{\end{eqnarray*}}
\newcounter{ccc}
\newcommand{\eat}[1]{}
\def\subfigcapskip{2pt}
\newcommand{\sql}{{\sc sql}\xspace}
\long\def\comment#1{}
\definecolor{lgray}{gray}{0.85}
\definecolor{llgray}{gray}{0.9}
\newcommand{\reffig}[1]{Figure~\ref{fig:#1}}
\newcommand{\refsec}[1]{Section~\ref{sec:#1}}
\newcommand{\reftable}[1]{Table~\ref{tab:#1}}
\newcommand{\refalg}[1]{Algorithm~\ref{alg:#1}}
\newcommand{\refdef}[1]{Definition~\ref{def:#1}}
\newcommand{\reflem}[1]{Lemma~\ref{lem:#1}}
\newcommand{\topcaption}{%
 \setlength{\abovecaptionskip}{3pt}%
 \setlength{\belowcaptionskip}{1pt}%
 \caption}
\newcommand{\baseline}{\kw{Baseline}}
\newcommand{\mbcenum}{\kw{MBCEnum}}
\newcommand{\mbcenump}{\kw{MBCEnum^*}}
\newcommand{\mbcenumu}{\kw{MBCEnumUtil}}
\newcommand{\mbcs}{\kw{MBCSear}}
\newcommand{\mbcsp}{\kw{MBCSear^*}}
\newcommand{\mbcsu}{\kw{MBCSearUtil}}
\newcommand{\mbcsup}{\kw{MBCSearUtil^*}}
\newcommand{\mbcss}{\kwnospace{MBCSear}\textrm{-}\kw{SSP}}
\newcommand{\mbcssp}{\kwnospace{MBCSear}\textrm{-}\kw{SSP^*}}
\newcommand{\vertexreductionv}{\kw{VertexReduction^+}}
\newcommand{\edgereductionv}{\kw{EdgeReduction^+}}
\begin{document}

\title {Balanced Clique Computation in Signed Networks: Concepts and Algorithms}

%\titlerunning{Short form of title}        % if too long for running head

\author{Zi Chen, Long Yuan$^{*}$, Xuemin Lin,  Lu Qin, Wenjie Zhang}

\thanks{* Zi Chen and Long Yuan are the joint first authors. Long Yuan is the corresponding author.}

\IEEEcompsocitemizethanks{

\IEEEcompsocthanksitem Z. Chen is with the Software Engineering Institute, East China Normal University, Shanghai, China.\protect\\
E-mail: zchen@sei.ecnu.edu.cn.

\IEEEcompsocthanksitem L. Yuan is with the School of Computer Science and Engineering, Nanjing University of Science and Technology, Nanjing, China.\protect\\
E-mail: longyuan@njust.edu.cn.

\IEEEcompsocthanksitem X. Lin and W. Zhang are with University of New South Wales, Sydney, Australia.\protect\\
E-mail: \{lxue,zhangw\}@cse.unsw.edu.au

\IEEEcompsocthanksitem L. Qin is with Centre for QCIS, University of Technology, Sydney, Australia.\protect\\
E-mail: lu.qin@uts.edu.au.

%\IEEEcompsocthanksitem X. Zhao is with National University of Defense Technology, Changsha, China.\protect\\
%E-mail: xiangzhao@nudt.edu.com.

}

%\thanks{Manuscript received xxx; revised xxx.}

%\markboth{IEEE TRANSACTIONS ON KNOWLEDGE AND DATA ENGINEERING, xxx, xxx, xxx}%
{CHEN \MakeLowercase{\textit{et al.}}:  BALANCED CLIQUE COMPUTATION IN SIGNED NETWORKS: CONCEPTS AND ALGORITHMS}

\IEEEtitleabstractindextext{%
\begin{abstract}

Clique is one of the most fundamental models for cohesive subgraph mining in network analysis.  Existing clique model mainly focuses on  unsigned networks. However, in real world, many applications are modeled as signed networks with positive and negative edges. As the signed networks hold their own properties different from the unsigned networks, the existing clique model is inapplicable for the signed networks. Motivated by this, we propose the balanced clique model that considers the most fundamental and dominant theory, structural balance theory, for signed networks. Following the balanced clique model, we study the \underline{m}aximal \underline{b}alanced \underline{c}lique \underline{e}numeration problem (\kw{MBCE})  which computes all the maximal balanced cliques in a given signed network. Moreover, in some applications, users prefer a unique and representative balanced clique with maximum size rather than all balanced cliques. Thus, we also study the \underline{m}aximum \underline{b}alanced \underline{c}lique \underline{s}earch problem (\kw{MBCS}) which computes the balanced clique with maximum size. We show that \kw{MBCE} problem and \kw{MBCS} problem are both NP-Hard. For the \kw{MBCE} problem, a straightforward solution is to treat the signed network as two unsigned networks and leverage the off-the-shelf techniques for unsigned networks. However, such a solution is inefficient for large signed networks. To address this problem, in this paper, we first propose a new maximal balanced clique enumeration algorithm by exploiting the unique properties of signed networks. Based on the new proposed algorithm, we devise two optimization strategies to further improve the  efficiency of the enumeration. For the \kw{MBCS} problem, we first propose a baseline solution. To overcome the huge search space problem of the baseline solution, we propose a new search framework based on search space partition. To further improve the efficiency of the new framework, we propose multiple optimization strategies regarding to redundant search branches and invalid candidates. We conduct extensive experiments on large real datasets. The experimental results demonstrate the efficiency, effectiveness and scalability of our proposed algorithms for \kw{MBCE} problem and \kw{MBCS} problem.

\end{abstract}

\begin{IEEEkeywords}
Balanced Clique, Structural Balance Theory, Signed Network, Graph Algorithm
\end{IEEEkeywords}}

\maketitle
\IEEEdisplaynontitleabstractindextext
\IEEEpeerreviewmaketitle
%\iffalse
%\bibliography{./reference.bib}
%\fi
\IEEEraisesectionheading{\section{Introduction}}

\IEEEPARstart{W}{ith} the proliferation of graph applications, research efforts have been devoted to many fundamental problems in analyzing graph data \cite{DBLP:conf/dasfaa/OuyangYZQL18,DBLP:journals/tkde/YuanQZCY18,DBLP:journals/pvldb/YuanQLCZ17,DBLP:journals/dpd/FengCLQZY18,DBLP:conf/www/LiuYLQZZ19,DBLP:conf/dasfaa/WuYLYZ19,DBLP:conf/dasfaa/QingYZQLZ18,DBLP:journals/pvldb/ouyang2020}. Clique is one of the most fundamental cohesive subgraph models in graph analysis, which requires each pair of vertices has an edge. Due to the completeness requirement,  clique model owns many interesting cohesiveness properties, such as the distance of any two vertices in a clique is one, every one vertex in a clique forms a dominate set of the clique and the diameter of a clique is one  \cite{DBLP:journals/eor/PattilloYB13}. As a result, clique model has wide application scenarios in social network mining, financial analysis and computational biology and has been extensively investigated for decades. Existing studies on clique mainly focus on the unsigned networks, i.e., all the edges in the graph share the same property \cite{bron1973finding,eppstein2010listing,eppstein2011listing,DBLP:journals/vldb/YuanQLCZ16}. Unfortunately, relationships between two entities in many real-world applications have completely opposite properties, such as friend-foe relationships between users in social networks \cite{easley2010networks,DBLP:conf/icdm/KumarSSF16}, support-dissent opinions in opinion networks \cite{kunegis2009slashdot}, trust-distrust relationships in trust networks \cite{leskovec2010signed} and partnership-antagonism in protein-protein interaction networks \cite{ou2015detecting}.  Modelling these applications as signed networks with positive and negative edges allows them to capture more sophisticated semantics than unsigned networks \cite{cartwright1956structural,marvel2009energy,abell2009structural,leskovec2010signed,marvel2011continuous,derr2018signed}. Consequently, existing studies on clique ignoring the sign associated with each edge may be inappropriate to characterize the cohesive subgraphs in a signed network and there is an urgent need to define an exclusive clique model tailored for the signed networks.

\begin{figure}[t]
\begin{center}
\subfigure[$G$]{
\includegraphics[width=0.245\columnwidth]{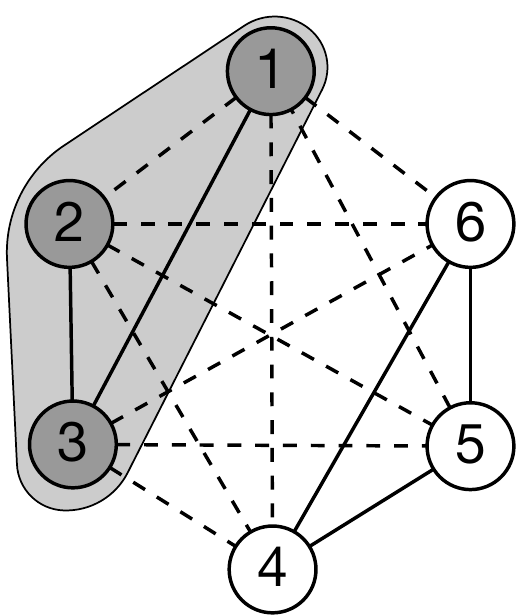}
}\hspace{1.2cm}\vspace{-0.1cm} \subfigure[$G'$]{
\includegraphics[width=0.245\columnwidth]{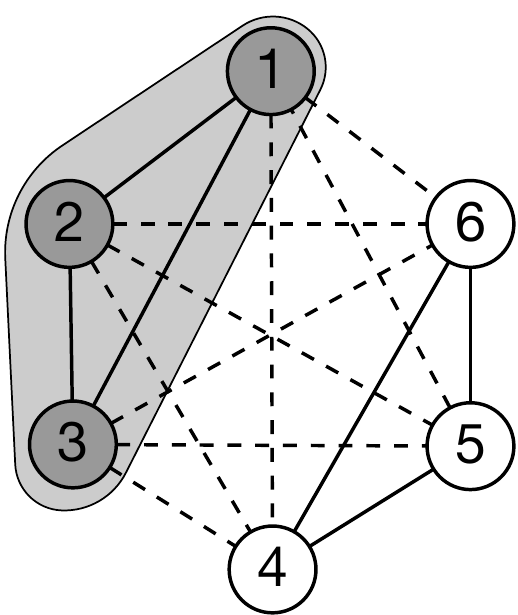}
}
\end{center}
\topcaption{Imbalanced Graph and  Balanced Graph}
\label{fig:balanced}
\vspace{-0.4cm}
\end{figure}

For the signed networks, the most fundamental and dominant theory revealing the dynamics and construction of the signed networks is the \emph{structural balance theory} \cite{heider1946attitudes,harary1953notion,cartwright1956structural,marvel2009energy,abell2009structural,easley2010networks,leskovec2010signed,marvel2011continuous,derr2018signed}. The intuition underlying the structural balance theory can be described as the aphorisms: ``The friend (resp. enemy) of my friend (resp. enemy) is my friend, the friend (resp. enemy) of my enemy (resp. friend) is my enemy''. Specifically, a signed network $G$ is structural balanced if $G$ can be split into two subgraphs such that the edges in the same subgraph are positive and the edges between subgraphs are negative \cite{harary1953notion}.  In a signed network, an imbalanced sub-structure is unstable and tends to evolve into a balanced state. Consider the graph $G$ shown in \reffig{balanced} (a). The negative edge between $v_1$ and $v_2$ makes $G$ imbalanced. $v_1$ and $v_2$ have a mutual ``friend'' $v_3$ and mutual ``enemies'' $v_4$, $v_5$ and $v_6$. It means $v_1$ and $v_2$ share more common grounds than differences. According to \emph{structural balance theory}, $v_1$ and $v_2$ tend to be allies as time goes by. $G'$  shown in  \reffig{balanced} (b) is the evolved balanced counterpart of $G$. In $G'$, the sign of the edge between $v_1$ and $v_2$ becomes positive. $\{v_1, v_2, v_3\}$ and $\{v_4, v_5, v_6\}$ form two alliances and the edges in the same alliance are positive and the edges connecting different alliances are negative. As illustrated in this example, structural balance reflects the key characteristics of the signed networks.

According to the above analysis, clique model is a fundamental cohesive subgraph model in graph analysis, but there is no appropriate counterpart in the signed networks. Meanwhile, the structure of the signed networks is expected to be balanced based on the structure balance theory. Motivated by this, we propose a maximal balanced clique model in this paper. Formally, given a signed network $G$, a maximal balanced clique $C$ is a maximal subgraph of $G$ such that (1) $C$ is complete, i.e., every pair of vertices in $C$ has an edge. (2) $C$ is balanced, i.e.,  $C$ can be divided into two parts such that the edges in the same part are positive and the edges connecting two parts are negative. This definition not only catches the essence of the clique model in the unsigned networks but also guarantees that a detected clique is stable in the signed networks. In this paper, we aim to devise efficient algorithms to enumerate all maximal balanced cliques in a given signed network.

% \begin{figure}
% \begin{center}
% \includegraphics[width=0.9\columnwidth]{fig/voter.pdf}
% \end{center}
% \topcaption{A Partial Opinion Network from Twitter}
% \label{fig:graph}
% \vspace{-0.5cm}
% \end{figure}

Moreover, in real signed networks, the number of maximal balanced cliques could be extremely large. For instance, in  "Douban" network which is a Chinese score service website, there are more than a million balanced cliques in it. However, in some applications, users prefer a unique and representative balanced clique with maximum size rather than all balanced cliques. Maximum clique search problem is a fundamental and hot research topic in graph analysis. In the literature, numerous studies have been conducted, such as maximum clique search \cite{DBLP:journals/vldb/Chang20,DBLP:journals/pvldb/LuYWZ17}, maximum quasi-clique search \cite{DBLP:conf/aaai/ChenCPWLZY21}, maximum bi-clique search\cite{DBLP:journals/pvldb/LyuQLZQZ20}, k*-partite clique with maximum edges\cite{DBLP:journals/pvldb/ZhouW020}, clique with maximum edge/vertex weight on weighted graph\cite{DBLP:journals/soco/SevincD20,DBLP:journals/access/LiWLWY18}. Motivated by this, we aim to devise a maximum balanced clique search algorithm to find out the balanced clique with maximum vertex size, which can scale to  large-scale  real signed networks (with more than 100 million edges).  
%}

\stitle{Applications.}  Balanced clique computation can be used in many applications, for example:

\sstitle{(1) Opinion leaders detection in opinion networks.} Opinion leaders are people who are active in a community  capturing the most representative opinions in the social networks \cite{song2007identifying}. In an opinion network, each vertex represents a user and there is a positive/negative edge between two vertices if one user support/dissent another user. A maximal balanced clique in an opinion network represents a group of users, such that these users actively involve in the opinion networks and have their clear standpoints. Hence, the users in the maximal balanced cliques are good candidates of opinion leaders in the opinion network.     	

\sstitle{(2) Finding international alliances-rivalries  groups.} The international relationships between nations can be modeled as a signed network, where each vertex represents a nation, positive and negative edges indicate alliances and rivalries, respectively. Computing the maximal balanced cliques in such networks reveals hostile groups of allied forces\cite{easley2010networks,Axelrod1993}.
%, such as the Allied and Axis power during World War II or the North Atlantic Treaty Organization and the Warsaw Pact during the Cold War.
  We can extend it to find the alliances-rivalries commercial groups among business organizations similarly, such as \{Pepsi, KFC\} vs \{Coke, McDonald\}\cite{Margarita2009}.

%\sstitle{(3) Semantic expansion.} In information retrieval, when a user submits a keyword query, say, "image", he, may also be interested in results related to other keywords such as "pictures," "photo," and so on. If a semantic link network [6] over keywords is available, then we can expand the query by including "positive" and "negative" keywords in the same "semantic maximal balanced clique".

\sstitle{(3) Synonym and antonym groups discovery.} In a word network, each vertex represents a word and there is a positive edge between two synonyms and a negative edge between two antonyms\cite{Miller1995}. In such signed networks, our model can discover synonym groups that are antonymous with each other, such as, \{interior, internal, intimate\} and \{away, foreign, outer, outside, remote\}. These discovered  groups may be further used in applications such as automatic question generation \cite{Vishwajeet19} and semantic expansion \cite{Adit2018}.

\stitle{Contributions.} In this paper,  we make the following contributions:

\sstitle{(1) The first work to study the maximal balanced clique model.} We formalize the balanced clique model in signed networks based on the structural balance theory. To the best of our knowledge, this is the first work considering the structural balance of the cliques in signed networks. We also prove the NP-Hardness of the problem.
	
\sstitle{(2) A new framework tailored for maximal balanced clique enumeration in signed networks.} After investigating the drawbacks of the straightforward approach, we propose a new framework for the maximal balanced clique enumeration. Our new framework enumerates the maximal balanced cliques based on the signed network directly and its memory consumption is linear to the size of the input signed network.
	
	%We firstly propose an intuitive baseline algorithm. To improve the efficiency, we develop a more efficient algorithm following grow-peel framework, which progressively grows one side of community, meanwhile, peels another side of community. We also design an overlap strategy and a greedy strategy during candidate selection. A new paradigm. after investigating the baseline approach.
	
\sstitle{(3) Two effective optimization strategies to further improve the enumeration  performance.} We explore two optimization strategies, in-enumeration optimization and pre-enumeration optimization, to further improve the enumeration performance. The in-enumeration optimization can avoid the exploration for unpromising vertices during the enumeration while the pre-enumeration techniques can prune unpromising vertices and edges before enumeration.  %Remarkably,  the time complexity of the pre-enumeration optimization techniques can be well bounded theoretically and they are efficient and effective in practice as shown in our experiments.
	
	%To reduce the graph size, We propose two useful global pruning techniques, 1-Hop reduction and 2-Hop reduction which consider 1-hop degree of vertices and 2-hop degree of edges separately, to prune unpromise vertices and edges. In addition, We also present several optimization techniques, in detail, we remove useless candidates and prune branches using early terminate condition and special pivot.

%	\textcolor{blue}{
	\sstitle{(4) An efficient maximum balanced clique search algorithm.} To address the maximum balanced clique search problem, we first propose a baseline algorithm. In order to reduce the search space during the search process of baseline, we propose a search space partition-based algorithm \mbcss by partitioning the whole search space into multiple search regions. In each search region, two size thresholds $\overline{\kappa}$ and $\underline{\kappa}$ are used to search the result matching the size requirement specific to this search region, such that the search space is limited into a small area. To further improve the efficiency of \mbcss algorithm, we also explore three optimization strategies to prune invalid search branches and candidates during the search process.
%	they are coloring-based branch pruning and vertex domination-based candidate pruning. Besides, we extend the pre-enumeration techniques of \kw{MBCE} to further prune useless vertices and edges.
%	}
	
%	
%	\sstitle{(5)A new graph structure.} After investigating the urgent need for local structure information and memory problem in our algorithms, we design a new graph structure that can not only support our \kw{MBCE} and \kw{MBCS} algorithms but also directly provide the local structure information required by our algorithms.
	
	\sstitle{(5) Extensive performance studies on  real datasets.} We first evaluate the performance of \kw{MBCE} algorithms by conducting extensive experimental studies on  real datasets. As shown in our experiments, the baseline approach only works on  small datasets while our approach can complete the enumeration efficiently on both small and large datasets. %for the small signed networks, our approach can achieve 1-2 orders of speedup compared with the baseline; for the large singed networks, the baseline fails to output results while our approach can complete the enumeration efficiently.  %The results indicate our Grow-Peel algorithm consistently outperforms the baseline algorithm. And 2-Hop reduction shows powerful effectiveness during global pruning. Besides, the extensive performance evaluation shows our several improvement techniques further enhance the performance of our algorithms. Finally, case study in word networks indicates our model can identify reasonable, cohesive and balance communities in real-world applications.
	 Then, we evaluate the performance of our proposed \kw{MBCS} algorithm. The baseline algorithm can not get the result within a reasonable time on large datasets, while our optimized algorithm shows high efficiency, effectiveness and scalability.

% \footnote{For reproducibility, our code is anonymously released: \url{https://drive.google.com/file/d/18L538NMN7xDO6Xu0CMa2oSYmx07K5vZ5/view?usp=sharing}}

%\stitle{Outline.} \refsec{related} reviews the related work. \refsec{preliminaries} provides preliminaries including the definition of balanced clique model and problem statement. \refsec{baseline} introduces the baseline algorithm.  \refsec{improveapproch} presents our new enumeration framework.  \refsec{optimization} shows several optimization techniques.
%\refsec{mbcs} studies the maximum balanced clique search problem.\refsec{data} proposes a new graph structure.
% \refsec{performance} reports the results of experimental studies.     \refsec{conclusion} concludes our paper.

\vspace{-0.2cm}

\section{Related Work}
\label{sec:related}

\stitle{Signed network analysis.} 
%Signed network analysis has attracted much attention in the literature. In these works, the theories explaining the potential social dynamics process in signed networks have been extensively studied. Among these theories, \emph{structural balance theory} is the most fundamental and dominant one \cite{zheng2015social}.
 Structural balance theory is originally introduced in \cite{heider1946attitudes} and generalized in the graph  formation in \cite{harary1953notion,cartwright1956structural}. After that, structural balance theory is developed extensively \cite{marvel2009energy,abell2009structural,leskovec2010signed,marvel2011continuous,derr2018signed}. 
In these works,  it is interesting to mention that the authors in \cite{marvel2011continuous} model the evolving procedure of a signed network and theoretically prove that the network would evolve into a balanced clique when the  mean value of the initial friendliness among the vertices $\mu \leq 0$. \cite{zheng2015social} provides a comprehensive survey on structural balanced theory.

Besides, a large body of literature on mining signed networks has been emerged.  Among them, the most closely related work to ours  is \cite{li2018efficient} in which an $(\alpha, k)$-clique model is proposed.
% Given a signed network $G$, an $(\alpha, k)$-clique is defined as a maximal clique $C$ such that the negative degree for each vertex in $C$ is not greater than $k$ and the positive degree for each vertex in $C$ is not less than $\alpha k$. 
 Compared with our model,  $(\alpha, k)$-clique model only considers the amount of positive and negative edges in the clique and the structural balance of the clique is totally ignored, which makes $(\alpha, k)$-clique model essentially different from our model. In \cite{hao2014detecting}, a $k$-balanced trusted clique model is proposed.
%  A $k$-balanced trusted clique is defined as a clique with $k$ vertices consisting with positive edges only. 
  Although the $k$-balanced trusted clique model has a similar name with our model, it ignores the negative edges in the clique, which means the information of the negative edges are totally missed.

%Community detection in signed networks is also related to our work. For example, \cite{gao2016detecting,lo2011mining,chu2016finding,nature2017,DBLP:journals/jmlr/LiuTM17,DBLP:journals/pami/LiuXTZ19} 
%aim to find the antagonistic communities in a signed network. These works mainly focus on exploring several groups of dense subgraphs and most of them don't have a clear structural definition of their community model, while our work aims to enumerate the clique structure in a signed network. Moreover, these solutions generally involve a complicated optimization procedure, thereby, they are hard to handle large signed networks, while our proposed algorithm is scalable to enumerate all the maximal balanced cliques in large signed networks with hundreds of millions of edges as verified in our experiments. A  survey on signed network mining can be found in \cite{tang2016survey}.

\stitle{Clique on unsigned networks.} Clique model is one of the most fundamental cohesive subgraph models. \cite{bron1973finding} proposes an efficient algorithm for maximal clique enumeration based on backtracking search.\cite{R2}  first considers the memory consumption during the maximal clique enumeration. Based on \cite{bron1973finding}, more efficient algorithms are investigated \cite{tomita2006worst,eppstein2010listing,eppstein2011listing}.\cite{eppstein2010listing} proposes a novel branch pruning strategy, which can efficiently reduce the search space by ignoring the search process from the neighbors of the pivot. 
%\cite{zhang2014finding} studies the maximal biclique enumeration problem on bipartite graphs.\cite{zhang2014finding} keeps growing the vertex set in one side and peeling the vertex set in another side to enumerate the maximal biciques.  It also utilizes some techniques to further improve the enumeration performance, such as choosing vertex with small degree from candidate set to reduce the search tree depth and pruning vertices which may produce non-maximal bicliques. These techniques for biclique enumeration inspire our techniques presented in \refsec{localpruning}.
\cite{fakhfakh2017algorithms} reviews recently advances in maximal clique enumeration.  Based on clique, other cohesive subgraph models are also studied, such as $k$-core \cite{seidman1983network}, $k$-truss\cite{A6,DBLP:conf/sigmod/HuangCQTY14}, $k$-edge connected component\cite{A7,DBLP:journals/pvldb/YuanQLCZ16,DBLP:journals/vldb/YuanQLCZ17}, and $(r, s)$-nuclei \cite{sariyuce2015finding,DBLP:journals/tweb/SariyuceSPC17}. 
%Note that our balanced clique model is different from the existing cohesive subgraph models on unsigned networks and it cannot be well solved by the existing works. If we just consider the positive edge in the signed network and use the traditional methods on unsigned networks for community detection, the found results would  ignore the negative edges and half meaningful information in the signed network is lost.

%\iffalse
%\bibliography{./reference.bib}
%\fi

\vspace{-0.2cm}

\section{Problem Statement}
\label{sec:preliminaries}
In this paper, we consider an undirected and unweighted signed network $G=(V,E^{+},E^-)$, where $V$ denotes the set of vertices, $E^+$ denotes the positive edges and $E^-$ denotes the negative edges connecting the vertices in $G$. We denote the number of vertices and number of edges by $n$ and $m$, respectively.  For each vertex $v \in G$, let $N^+_G(v)$ represents the positive neighbors of $v$, and let $N^-_G(v)$ represents the negative neighbors of $v$. We use $d^+_G(v)$ and $d^-_G(v)$ to denote the positive and negative degree of $v$, respectively. We also use $N_G(v)$ and $d_G(v)$ to denote the neighbors and degree of $v$, i.e., $N_G(v) = N^-_G(v) \cup N^+_G(v)$ and $d_G(v) = d^+_G(v) + d^-_G(v)$. For simplicity, we omit G in the above notations if the context is self-evident.
 % We use $n$, $m_p$, $m_n$ to denote the number of nodes and edges in $G$ respectively, i.e, $n = |V(G)|$, $m_p = |E^+(G)$|, $m_n = |E^-(G)|$.

\begin{definition}
\label{the:balancegraph}
\textbf{(Balanced Network \cite{harary1953notion})} Given a  signed network $G=(V,E^{+},E^-)$, it's balanced iff it can be split into two subgraphs $G_L$ and $G_R$, s.t. $\forall (u,v) \in E^+ \rightarrow u,v\in G_L$ or $u,v \in G_R$, and $\forall (u,v) \in E^- \rightarrow u\in G_L, v \in G_R$ or $u\in G_R, v \in G_L$.
\end{definition}

\begin{definition}
\label{def:balancecommunity}
\textbf{(Maximal Balanced Clique)} Given a signed network $G=(V,E^{+},E^-)$, a maximal balanced clique $C$ is a maximal subgraph of $G$ that satisfies the following constraints:
\begin{itemize}
	\item Complete: $C$ is  complete, i.e, $\forall u,v \in C \rightarrow (u,v)\in E^+ \cup E^-$.
	\item Balanced: $C$ is  balanced, i.e, it can be split into two  sub-cliques $C_L$ and $C_R$, s.t. $\forall u,v\in C_L$ or $u,v \in C_R   \rightarrow (u,v) \in E^+$, and $\forall u\in C_L, v \in C_R$ or $u\in C_R, v \in C_L  \rightarrow (u,v) \in E^-$.%\item Maximal: There doesn't exist any sub-network $C'$ such that $C'$ is a balanced clique and $C \subseteq C'$.
\end{itemize}
\end{definition}
\vspace{-0.4cm}

\begin{definition}
\label{def:maximumclique}
\textbf{(Maximum Balanced Clique)} Given a signed network $G=(V,E^{+},E^-)$, a maximum balanced clique $C^*$ in $G$ is a balanced clique with the maximum vertex size. 
\end{definition}

Since many real applications require that the number of vertices in $C_L$ and $C_R$ is not less than a fixed threshold, we add a size constraint on $|C_L|$ and $|C_R|$ s.t. $|C_L| \geq k$ and $|C_R| \geq k$. With the size constraint, users can control the size of the returned maximal balanced cliques based on their specific requirements. We formalize the studied problems in the paper as follows:

\stitle{Problem Statement.} Given a signed network $G$ and an integer $k$, 

\begin{itemize}
  \item[$\bullet$] the maximal balanced clique enumeration (\kw{MBCE}) problem aims to compute all the maximal balanced cliques $C$ in $G$ s.t. $|C_L| \geq k$ and $|C_R| \geq k$ for $C$.
  \item[$\bullet$] the maximum balanced clique search (\kw{MBCS}) problem aims to compute the balanced clique $C^*$ in $G$ s.t. $|C^*_L| \geq k$, $|C^*_R| \geq k$ and $|C^*_L|+|C^*_R|$ is maximum.
\end{itemize}

%In this paper, we use a integer parameter $k$ as a size threshold. Given a signed graph $G(V,E^{+},E^-)$ and integer $k$, the problem of \underline{f}inding \underline{m}aximal \underline{b}alance \underline{c}ommunity ($FMBC$) is to find all $MBCs$ such that $|C_L| \geq k$ and $|C_R| \geq k$ for each of $MBCs$. With big $k$, users can discover communities with large size of sub-communities on both sides.

\begin{figure}
\begin{center}
\begin{tabular}[t]{c}
  \includegraphics[width=0.65\columnwidth]{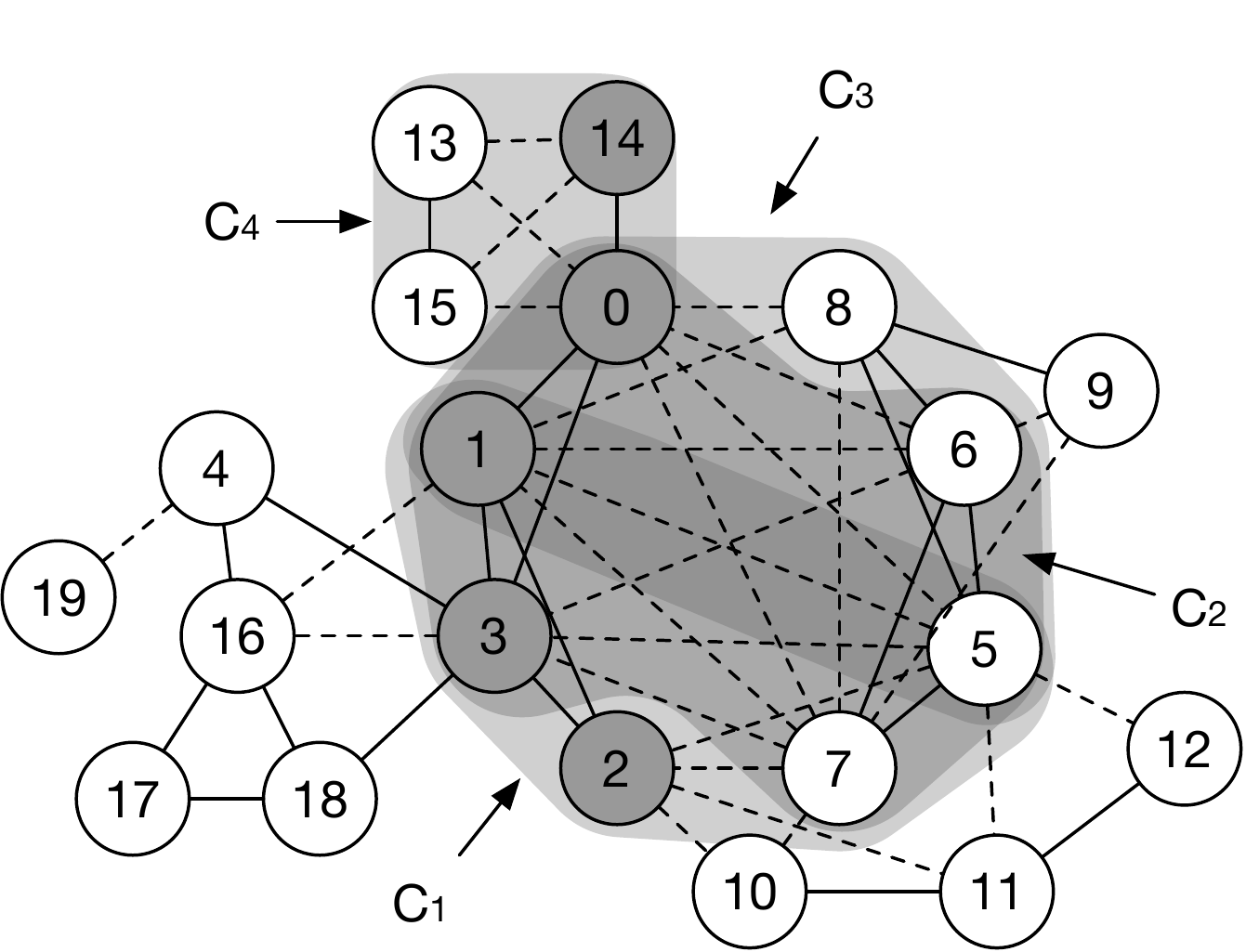}
\end{tabular}
\end{center}
\vspace{-0.2cm}
\topcaption{Maximal Balanced Clique in $G$ ($k=2$)}
\label{fig:community}
\vspace{-0.2cm}
\end{figure}

\begin{example}
\label{exp:preliminary}
Consider the signed network $G$ in \reffig{community} in which positive/negative edges are denoted by solid/dashed lines. Assume  $k$ = 2, there are 4 maximal balanced cliques in $G$, namely, $C_1=\{\{v_1,v_2,v_3\},\{v_5,v_7\}\}$, $C_2=\{\{v_0,v_1,v_3\},\{v_5,v_6,v_7\}\}$, $C_3$ $=\{\{v_0,v_1\},\{v_5,v_6,v_8\}\}$, $C_4=\{\{v_0,v_{14}\},\{v_{13},v_{15}\}\}$, where vertices in $C_L$ and $C_R$ are marked with different colors. Among them, $C_2$ is the maximum balanced clique.
% Take $C_4$ as an example,  it is complete as any two vertices in $C_4$ have an edge. It is balanced as it can be split into $\{v_0, v_{14}\}$ and $\{v_{13}, v_{15}\}$, and two positive edges $(v_0, v_{14})$ and $(v_{13}, v_{15})$ exist in $E^+$. $v_0$ has negative edges to $v_{13}$, $v_{15}$ and similar negative edges exist for $v_{13}$, $v_{14}$ and $v_{15}$. $C_4$ is maximal because no more vertices can be added into it to make it complete and balanced. The maximum balanced clique $C^*$ in $G$ is $C_2$ which has the most vertices.
\end{example}

\stitle{Problem Hardness.}The \kw{MBCE} problem is NP-Hard, which can be proved following the NP-Hardness of maximal clique enumeration problem \cite{cheng2012fast,Schmidt:2009:SPA:1514449.1514768}. Given an unsigned network $G=(V,E)$, we can transfer $G$ to a signed network $G'$ as follows: we first keep  all the vertices of $G$ in $G'$ and all the edges of $G$ as positive edges in $G'$; then, we add a new vertex $v$ to $G'$ and connect  $v$ to all vertices in $G'$ with negative edges. It's clear that each maximal clique $C$ in $G$ corresponds a maximal balanced clique $\{\{v\},C\}$ in $G'$ (assume $k=1$), which means the maximal clique enumeration problem in $G$ can be reduced to the \kw{MBCE} problem in $G'$. As the maximal clique enumeration problem is NP-Hard \cite{cheng2012fast,Schmidt:2009:SPA:1514449.1514768}, our \kw{MBCE} problem is also NP-Hard. 

Similarly, reconsidering $G$ and $G'$, the maximum clique $C^*$ in $G$ corresponds the  maximum balanced clique $\{\{v\},C^*\}$ in $G'$, and vice versa. As the maximum clique search problem is NP-Hard\cite{DBLP:journals/pvldb/LuYWZ17,DBLP:journals/vldb/Chang20}, our \kw{MBCS} problem is also NP-Hard.

\vspace{-0.2cm}
\section{A Baseline Algorithm for MBCE Problem}
\label{sec:baseline}

% \begin{figure}
% \begin{center}
% \begin{tabular}[t]{c}
% \includegraphics[width=1\columnwidth]{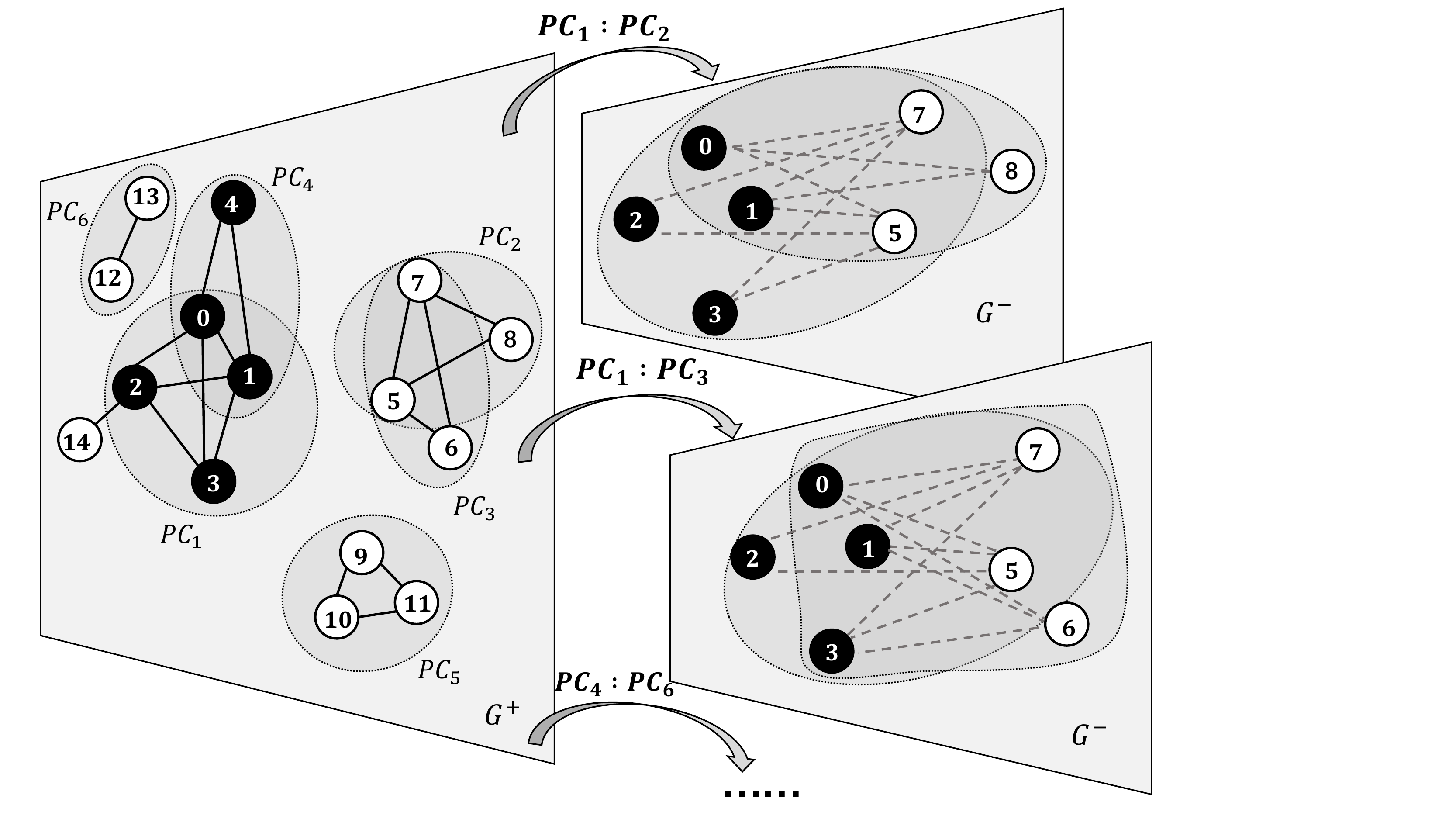}
% \end{tabular}
% \end{center}
% \topcaption{An Example for Baseline Algorithms, $k$=2}
% \label{fig:baseline}
% \end{figure}

We first propose a baseline algorithm to address \kw{MBCE} problem based on existing methods for maximal clique enumeration \cite{eppstein2011listing} and maximal biclique enumeration \cite{zhang2014finding} in unsigned networks. For a signed network $G = (V, E^+, E^-)$, we can treat it as the combination of two unsigned networks $G^+=(V, E^+)$ and $G^- = (V, E^-)$. For any maximal balanced clique $C = \{C_L,C_R\}$ in $G$, it is clear that $C_L$ (resp. $C_R$) is a clique in $G^+$ and the subgraph induced by vertices in $C_L$ and $C_R$ in $G^-$ is a biclique. Therefore, we can enumerate the maximal balanced cliques in $G$ in two steps: 1) compute all the maximal cliques in $G^+$ with \cite{eppstein2011listing}; 2) for each pair of the computed maximal cliques $C_i$ and $C_j$ in $G^+$, compute the maximal bicliques in the bipartite subgraph induced by the vertices in $C_i$ and $C_j$ in $G^-$ with \cite{zhang2014finding}. The returned maximal bicliques in $G^-$ are the maximal balanced cliques in $G$. 

\stitle{Drawbacks of \kw{baseline}.} Since \baseline does not consider the uniqueness of the signed networks and processes \kw{MBCE} with the techniques for the unsigned networks, it  has two drawbacks:

\begin{itemize}

\item Memory consumption. \baseline has to store all the maximal cliques in $G^+$ in memory. The number of maximal cliques could be  exponential to the number of vertices \cite{eppstein2010listing}, which makes \baseline unable to handle large networks.
\item Efficiency.  In \kw{baseline}, all the maximal cliques in $G^+$ are enumerated and every pair of maximal cliques  are explored. The time complexity of \baseline is $O(T_{\kw{Cli}}+ \eta^2\cdot T_{\kw{BiCli}})$, where $T_{\kw{BiCli}}$/$T_{\kw{Cli}}$ represent the time complexity of maximal (bi)clique enumeration, and  $\eta$ is the number of enumerated maximal cliques in $G^+$. Considering the maximal (bi)clique enumeration is time-consuming and the number of maximal cliques could be very large, it is inefficient for \kw{MBCE} problem.

% since the information of negative edges is missed during maximal clique enumeration, we cannot judge whether current positive clique exactly be the component clique in the related maximal balanced clique, thus, it's unavoidable for the redundant computation of maximal clique enumeration.

%In the enumeration,  two types of unfruitful computations regarding \kw{MBCE} are involved: 1) the enumerated maximal cliques that are not a part of a maximal balanced clique,  2) the explored pair of maximal cliques that are not a part of a maximal balanced clique. Considering the maximal (bi)clique enumeration is time-consuming and the number of maximal cliques could be very large, these two types of unfruitful computations together make \baseline inefficient for \kw{MBCE} problem.
\end{itemize}

\vspace{-0.2cm}

\section{A New Enumeration Framework}
\label{sec:improveapproch}

%In \baseline, the signed network is treated as a specific combination of two unsigned networks and the enumeration utilizes the existing techniques designed for unsigned networks, which leads the drawbacks discussed in \refsec{baseline}.

%The reason that \baseline owns above drawbacks regarding \kw{MBCE} problem is that it treats the signed network as a specific combination of two unsigned networks and  adopts the two-phases enumeration strategy, namely maximal clique enumeration followed by maximal biclique enumeration, by utilizing the existing techniques designed for unsigned networks.  To overcome these drawbacks, we propose a new framework for the \kw{MBCE} problem by considering the uniqueness of signed network. Since our new framework searches the maximal balanced clique on the signed network directly and the two-phases enumeration is discarded,  the drawbacks caused by the two-phases enumeration in \baseline can be totally avoided.  Moreover, since our new framework is tailored for signed networks, more optimization strategies based on the properties of signed networks can be applied.  In this section, we focus on the enumeration framework. In the next section, we present our optimization strategies.

Revisiting \kw{baseline}, the root leading to its drawbacks discussed above is that it treats the signed network as a specific combination of two unsigned networks and utilizes the existing techniques designed for the unsigned networks. Therefore, we have to explore new techniques by considering the uniqueness of signed networks to overcome the drawbacks of \baseline and improve the efficiency of the enumeration. In this section, we  present a new enumeration framework which aims to address the memory consumption problem. In next section, we further optimize the enumeration framework to improve the efficiency.

%To overcome these drawbacks, we propose a new framework for the \kw{MBCE} problem by considering the uniqueness of signed network. Since our new framework searches the maximal balanced clique on the signed network directly and the two-phases enumeration is discarded,  the drawbacks caused by the two-phases enumeration in \baseline can be totally avoided.  Moreover, since our new framework is tailored for signed networks, more optimization strategies based on the properties of signed networks can be applied.  In this section, we focus on the enumeration framework. In the next section, we present our optimization strategies.

%Based on \refdef{balancecommunity}, we have the following lemma:

\begin{lemma}
\label{lem:partial}
Given a signed network $G$, for a balanced clique $C = \{C_L, C_R\}$ in $G$, if there is a vertex $v$ in $G$ such that $\forall u \in C_L \rightarrow (v, u) \in E^+$  and $\forall w \in C_R \rightarrow (v, w) \in E^-$, then $C' = \{C_L \cup \{v\}, C_R\}$ is also a balanced clique in $G$.
\end{lemma}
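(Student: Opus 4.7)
The plan is to verify directly that $C' = \{C_L \cup \{v\}, C_R\}$ satisfies the two conditions of \cdef\ref{def:balancecommunity}, namely completeness and balance, by exhibiting the natural bipartition $C_L' := C_L \cup \{v\}$ and $C_R' := C_R$ and checking each required edge condition case by case. Since $C$ is already assumed to be a balanced clique, most of the work is already done for pairs not involving $v$; the hypothesis on $v$ handles precisely the remaining pairs.

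First I would establish completeness. For any pair of vertices $x, y \in C'$, there are three cases: (i) $x, y \in C_L \cup C_R = C$, in which case $(x,y) \in E^+ \cup E^-$ because $C$ is complete; (ii) $x = v$ and $y \in C_L$, where $(v,y) \in E^+$ by hypothesis; (iii) $x = v$ and $y \in C_R$, where $(v,y) \in E^-$ by hypothesis. In every case the edge exists, so $C'$ is complete.

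Next I would verify balance with the bipartition $(C_L', C_R')$. For any two vertices in $C_L'$: if both lie in $C_L$, the edge is positive by balance of $C$; if one is $v$ and the other lies in $C_L$, the edge is positive by the hypothesis on $v$. For any two vertices in $C_R' = C_R$, the edge is positive by balance of $C$. For any vertex in $C_L'$ and any vertex in $C_R'$: if the first lies in $C_L$, the edge is negative by balance of $C$; if the first is $v$, the edge is negative by the hypothesis on $v$. All three required conditions hold, so $C'$ is balanced.

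There is no real obstacle here; the lemma is a straightforward structural observation and its proof is essentially a case analysis driven by the definition of balance. The only thing to be careful about is to explicitly use the correct clause of the hypothesis in each of the cases involving $v$, so that the resulting argument makes clear why $v$ must be added to the $C_L$ side (positive edges to $C_L$, negative edges to $C_R$) rather than to the $C_R$ side.
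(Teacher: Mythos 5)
Your case analysis is correct and complete: you verify both clauses of Definition~\ref{def:balancecommunity} for the bipartition $(C_L \cup \{v\}, C_R)$, using the balance of $C$ for pairs inside $C$ and the hypothesis on $v$ for the remaining pairs. The paper itself states this lemma without proof, treating it as an immediate structural observation, so your argument supplies exactly the routine verification the authors left implicit.
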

%
%\begin{proof}
%It can be proved following \refdef{balancecommunity} directly.
%\end{proof}

According to \reflem{partial}, if we maintain a  balanced clique $C=\{C_L, C_R\}$, let $P_L$ be the set of vertices that are positive neighbors of all the vertices in $C_L$ and negative neighbors of all the vertices in $C_R$, let $P_R$ be the set of vertices that are positive neighbors of all the vertices in $C_R$ and negative neighbors of all the vertices in $C_L$, we can enlarge  $C$ by  adding vertices from $P_L$ and $P_R$ into $C_L$ and $C_R$, respectively. Furthermore, if we update the $P_L$ and $P_R$ based on the new   $C_L$ and $C_R$ accordingly and repeat the above enlargement procedure, we can obtain a maximal balanced clique when no more vertices can be added into $C_L$ or $C_R$.

%Before the search procedure, to bound the size of candidates, we first sort vertices by the value of core (regardless of the sign of edges) in increasing order. For vertices with same core value, they're sorted by removed order during core decomposition procedure. For each vertex, only their neighbors with higher order will be considered as candidates.

\begin{algorithm}[t]
\caption{\small \mbcenum($G=(V, E^+, E^-), k$)}
{
\small
\begin{algorithmic}[1]

%\STATE $C_L,C_R$: current clique on left and right side;
%\STATE $P_L,P_R$: candidant set;
%\STATE $Q_L,Q_R$: excluded set;
\STATE $\kw{Flag} \leftarrow$ \textbf{true};%: control the side to grow;
\FOR {\textbf{each} $v_i \in \{v_0,v_1,\cdots,v_{n-1}\} \in V$}
\STATE $C_L \leftarrow \{v_i\}$, $C_R \leftarrow \emptyset$
\STATE $P_L \leftarrow N^+_G(v_i) \cap \{v_{i+1},\cdots,v_{n-1}\}$;
\STATE $P_R \leftarrow N^-_G(v_i) \cap \{v_{i+1},\cdots,v_{n-1}\}$;
\STATE $Q_L \leftarrow N^+_G(v_i) \cap \{v_0,\cdots,v_{i-1}\}$;
\STATE $Q_R \leftarrow N^-_G(v_i) \cap \{v_0,\cdots,v_{i-1}\}$;
\STATE \mbcenumu$(C_L, C_R, P_L, P_R, Q_L, Q_R)$;
\ENDFOR

\vspace{0.1cm}
\STATE \textbf{Procedure} \mbcenumu($C_L, C_R, P_L, P_R, Q_L, Q_R$)
\IF{$P_L=\emptyset$ and $P_R=\emptyset$ and $Q_L=\emptyset$ and $Q_R=\emptyset$}
\IF{$|C_L| \geq k$ and  $|C_R| \geq k$}
\STATE \textbf{output} $C = \{C_L, C_R\}$;
\ENDIF
\STATE \textbf{return}
\ENDIF
\STATE $\kw{Flag} \leftarrow !\kw{Flag}$;
% \STATE $p \leftarrow \kw{argmax}_{v \in P_L \cup P_R \cup Q_L \cup Q_R}\{d_l(v)\}$;
% \STATE \kw{newP_L} $\leftarrow P_L \setminus N^+_G(p)$; \hfill //$~$assume $p$ from $P_L \cup Q_L$
% \STATE \kw{newP_R}$ \leftarrow P_R \setminus N^-_G(p)$;
\IF{$\kw{Flag}$}
\FOR{\textbf{each} $v \in$ \kw{P_L}}
\STATE $\mbcenumu(C_L \cup \{v\}, C_R, N^+_G(v) \cap P_L, N^-_G(v) \cap P_R, N^+_G(v) \cap Q_L, N^-_G(v) \cap Q_R)$;
\STATE $P_L \leftarrow P_L \setminus \{v\}$; $Q_L \leftarrow Q_L \cup \{v\}$;
\ENDFOR
\FOR{\textbf{each} $v \in$ \kw{P_R}}
\STATE $\mbcenumu(C_L, C_R \cup \{v\}, N^-_G(v) \cap P_L, N^+_G(v) \cap P_R, N^-_G(v) \cap Q_L, N^+_G(v) \cap Q_R)$;
\STATE $P_R \leftarrow P_R \setminus \{v\}$; $Q_R \leftarrow Q_R \cup \{v\}$;
\ENDFOR
\ELSE
\STATE line 19-21; line 16-18;
\ENDIF
\end{algorithmic}
}
\label{alg:fmbc}
\end{algorithm}

\stitle{Algorithm of \mbcenum.} Following the above idea, our algorithm for \kw{MBCE} is shown in \refalg{fmbc}. For each vertex $v_i$ in $G$ (line 2), we enumerate all the maximal balanced cliques containing $v_i$ (line 3-8). Note that $v_0, v_1, \dots, v_n$ are in the degeneracy order \cite{wendong} of $G$. We use $C_L$ and $C_R$ to maintain the  balanced clique, which are initialized with $v_i$ and $\emptyset$, respectively (line 3). Similarly, we also initialize $P_L$ and $P_R$ as discussed above (line 4-5).  Moreover, we use $Q_L$ and $Q_R$ to record the vertices that have been processed to avoid outputting duplicate maximal balanced cliques (line 6-7). After initializing these six sets, we invoke procedure \mbcenumu $~$to enumerate all the maximal balanced cliques containing $v_i$ (line 8).

Procedure \kw{MBCEnumUtil} performs the maximal balanced clique enumeration based on the given six sets. If $P_L$, $P_R$, $Q_L$ and $Q_R$ are empty, which means current  balanced clique $C = \{C_L, C_R\}$ cannot be enlarged and it is a maximal balanced clique, \kw{MBCEnumUtil} checks whether $C_L$ and $C_R$ satisfy the size constraint. If the size constraint is satisfied, it outputs the maximal balanced clique $C$ (line 11-12). Otherwise, \kw{MBCEnumUtil} adds a vertex from $P_L$ to $C_L$, updates the corresponding $P_L$, $P_R$, $Q_L$ and $Q_R$, and recursively invokes itself to further enlarge the  balanced clique (line 17). When $v \in P_L$ is processed, $v$ is removed from $P_L$ and added in $Q_L$ (line 18). Similar processing steps are applied on vertices in $P_R$ (line 19-21). Variable \kw{Flag} (line 1) is used to control the order of adding new vertex into $C_L$ or $C_R$. With the switch operation in line 14, we can guarantee that we add vertex into $C_L$, then into $C_R$, recursively.

\begin{figure}[t]
\begin{center}
\includegraphics[width=0.95\columnwidth]{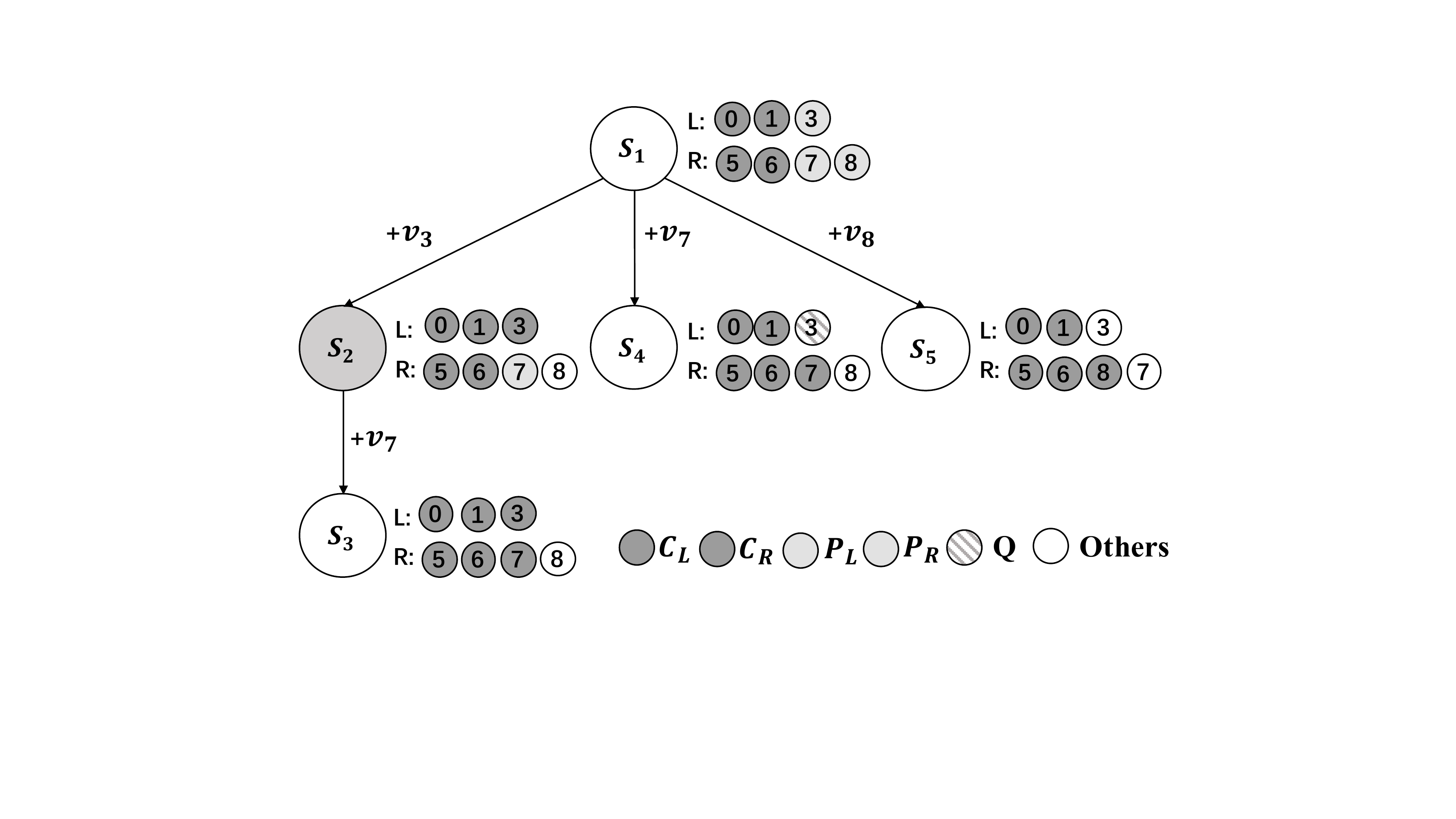}
\end{center}
\vspace{-0.2cm}
\topcaption{Search Tree for \mbcenum}
\label{fig:tree}
\vspace{-0.6cm}
\end{figure}

%\begin{example}
%The enumeration procedure of \kw{MBCEnum} can be illustrated as a search tree.  \reffig{tree} shows part of the search tree when we conduct the \kw{MBCE} on $G$ in \reffig{community} through \mbcenum. $S_1, S_2, \dots$ represent different search states during the enumeration.  At $S_1$, we assume that we have a balanced clique $C = \{C_L=\{v_0, v_1\},$ $ C_R=\{v_5, v_6\}\}$, $P_L$=$\{v_2, v_3\}$, $P_R$=$\{v_7, v_8\}$ at this state. We first grow search branch by adding  $v_2$ from $P_L$ into $C_L$. Since $v_7$ and $v_8$ are not $v_2$'s negative neighbors, they are removed from $P_R$ at $S_2$. Because $P_R$ is empty at $S_2$, we keep expending $C_L$ by adding  $v_3$ from $P_L$. At $S_3$, $P_L$, $R_L$, $Q_L$ and $Q_R$ are empty, we obtain a maximal balanced clique $C_2 = \{\{v_0,v_1,v_2,v_3\},\{v_5,v_6\}\}$ and this search branch starting from $v_2$ finishes. We return back to $S_1$ and $v_2$ is moved to $Q_L$. Then, we add  $v_3$ into $C_L$ at $S_4$ and add $v_7$ into $C_R$ at $S_5$ and obtain $C_1 = \{\{0,1,3\},\{5,6,7\}\}$. The search continues in a similar way until all the vertices in $P_L$ and $P_R$ at $S_1$ are explored.
%\end{example}

\begin{example}
The enumeration procedure of \kw{MBCEnum} can be illustrated as a search tree.  \reffig{tree} shows part of the search tree when we conduct the \kw{MBCE} on $G$ in \reffig{community} through \mbcenum. $S_1, S_2, \dots$ represent different search states during the enumeration.  At $S_1$, we assume that we have a balanced clique with $C_L$=$\{v_0, v_1\}$, $C_R$=$\{v_5, v_6\}\}$, $P_L$=$\{v_3\}$ and $P_R$=$\{v_7, v_8\}$ at this state. We first grow search branch by adding $v_3$ into $C_L$. At $S_2$, $P_L$ is empty, $P_R=\{v_7\}$, hence, we add $v_7$ into $C_R$ at $S_3$ and obtain $C_2 = \{\{v_0,v_1,v_3\},\{v_5,v_6,v_7\}\}$. Now, the search branch from $v_3$ is finished, we return to $S_1$ state. Since $v_3$ has been explored, it is removed to $Q_L$, $P_L$ is empty now.  Then, we add $v_7$ from $P_R$ to $C_R$ at $S_4$. Due to $C_2$ has been found at  $S_3$, current result at $S_4$ is not maximal and can not be output. Next, we return to $S_1$ and find $C_3 = \{\{v_0,v_1\},\{v_5,v_6,v_8\}\}$ by adding $v_8$ into $C_R$. Here, the search procedure at this search tree is finished. Other maximal balanced cliques can be found in a similar way.
\end{example}

%Now we analyse the time and space complexity of \refalg{fmbc}.

%\begin{theorem}
%\label{thm:memory}
%Given a signed network $G$, the memory consumption of \refalg{fmbc} to enumerate the maximal balanced cliques is $O(n+m)$.
%\end{theorem}
%\vspace{-0.2cm}
%\begin{proof}
%In  \refalg{fmbc}, only the input signed network is stored in the main memory and the maximal balanced cliques are enumerated in a depth-first search manner. Thus, the total space consumption  of \refalg{fmbc} is $O(n+m)$.	
%\end{proof}

Based on \refalg{fmbc}, it is clear that the memory consumption of our enumeration framework is linear to the size of the input signed network. Therefore, the drawback of large memory consumption in \baseline is avoided.

\section{Enumeration Optimization Strategies}
\label{sec:optimization}
Although \refalg{fmbc} addresses the memory consumption problem in \kw{MBCE}, the efficiency of \refalg{fmbc} is disappointing.  In this section, we present two optimization strategies, namely in-enumeration optimization and pre-enumeration optimization, to further improve the efficiency of the enumeration. %The in-enumeration optimization is applied in the enumeration procedure of \refalg{fmbc} while the pre-enumeration optimization works before  \refalg{fmbc}.  %firstly propose two efficient global pruning algorithms, 1-hop reduction and 2-hop reduction, to significantly reduce graph size. Then, we introduce several local pruning techniques during grow-peel algorithm.

\vspace{-0.2cm}

\subsection{In-Enumeration Optimization}
\label{sec:localpruning}

%With above pre-pruning techniques, we can efficiently reduce the graph size at first. Then we run our MBC enumeration algorithm to explore communities on the reduced graph. To further reduce the search space during exploration process, we propose several candidate pruning techniques and branches pruning techniques, denoted by CP and BP.
%We consider two in-enumeration optimization strategies, namely branch pruning and early termination, to further improve the performance. %Some optimization techniques are inspired by the works for  clique enumeration on unsigned networks introduced in \refsec{related}.

\stitle{\underline{Branch Pruning.}} Branch pruning aims to prune the unfruitful branches in the search tree of \refalg{fmbc} to improve the performance.

\stitle{Pivot Choosing.} Consider the maximal balanced clique search procedure of \refalg{fmbc}, assume that we currently have $C_L$, $C_R$, $P_L$ and $P_R$, and we add a vertex $v$ from $P_L$ to $C_L$ in line 17. After finishing the search starting from $v$, we do not need to further explore the positive neighbors of $v$ in the for loop of line 16 and the negative neighbors of $v$ in the for loop of line 19. The reasons are as follows: w.o.l.g, let $v'$ be a positive neighbor of $v$, although we skip the maximal balanced clique search starting from $v'$, these maximal balanced cliques containing $v'$ must be explored by the searching branches starting $v$ or neighbors of $v'$. Therefore skipping the search starting from $v$'s neighbors does not affect the correctness of \refalg{fmbc}.

In this paper, to maximum the benefits of pivot technology, we define the local degree for a vertex $v \in P_L\cup Q_L (P_R \cup Q_R)$ as $d_l(v) = |N^{+(-)}(v) \cap P_L| + |N^{-(+)}(v) \cap P_R|$, and we choose the vertex $v$ that satisfies $\max_{v \in V'}\{d_l(v)\}$ as the pivot, where $V'=P_L \cup P_R \cup Q_L \cup Q_R$.

\stitle{Candidate Selection.} In the search procedure of \refalg{fmbc}, heuristically, search starting from a vertex with small local degree will have a short and narrow search branch, which means the search starting from the vertex will be finished very fast. Moreover, due to the search finish of the vertex, the vertex will be added into the excluded set and it can be used to further prune other search branches. Therefore,  instead of adding vertices from $P_L$ and $P_R$ into $C_L$ and $C_R$ randomly in line 16 and 19 of \refalg{fmbc}, we add vertices in the increasing order of their local degrees.

\stitle{\underline{Early Termination.}} We consider different conditions that we can terminate the search early in \refalg{fmbc}. For a balanced clique $C = \{C_L, C_R\}$, the maximal possible size of $C_L$ ($C_R$) for the final maximal balanced clique is $|C_L| + |P_L|$ ($|C_R| + |P_R|$). Based on the size constraint of $k$, we have the following rule:  %Inspired by early termination technique on clique enumeration problem, in this paper, we early terminate search branch if current result set and candidate set is an exact maximal balance community. We propose following theorem:

\begin{itemize}
\item \textbf{ET Rule 1:} If $|C_L|+|P_L|<k$ or $|C_R|+|P_R|<k$, we can terminate current search directly.
\end{itemize}

In \refalg{fmbc}, we use $Q_L$ and $Q_R$ to store such vertices  that the maximal balanced cliques containing them have been enumerated. Therefore, during the enumeration, if there exists a vertex $v \in Q_L (Q_R)$ such that $P_L (P_R) \subseteq N^+_G(v)$ and $P_R (P_L) \subseteq N^-_G(v)$, then we can conclude that the maximal balanced cliques have been enumerated.  Following this, we have our second rule:

\begin{itemize}
\item \textbf{ET Rule 2:} If $\exists v \in Q_L$, s.t., $P_L \subseteq N^+_G(v)$ and $P_R \subseteq N^-_G(v)$
or $\exists v \in Q_R$, s.t., $P_R \subseteq N^+_G(v)$ and $P_L \subseteq N^-_G(v)$, then we can terminate current search directly.
\end{itemize}

In a certain search of \refalg{fmbc}, if all the vertices in $P_L$ ($P_R$) consist a clique formed by positive edges and every vertex in $P_L$ ($P_R$) has negative edges to all the vertices in $P_R$ ($P_L$), then $P_L$ and $P_R$ consist a balanced clique. Then, based on \refdef{balancecommunity},  $C_L \cup P_L$ and $C_R \cup P_R$ consist a maximal balanced clique. Therefore, we have our third early termination rule:

\begin{itemize}
\item \textbf{ET Rule 3:} If
   $\forall p_l \in P_L$, s.t., $P_L \subseteq \{\{p_l\} \cup N^+_G(p_l)\}$ and $P_R \subseteq N^-_G(p_l)$ and
     $\forall p_r \in P_R$, s.t., $P_R \subseteq \{\{p_r\} \cup N^+_G(p_r)\}$ and $P_L \subseteq N^-_G(p_r)$, we can output $C = (C_L \cup P_L, C_R \cup P_R)$ and terminate current search directly.
\end{itemize}

Note that, in order to avoid outputting duplicate maximal balanced cliques, ET Rule 3 must be applied after  ET Rule 2.

\stitle{Algorithm of \mbcenump.} Utilizing the in-enumeration optimization strategies, we propose the optimized algorithm \mbcenump. The pseudocode is omitted here due to space constraints. 

%The maximal balanced clique enumeration algorithm with in-enumeration optimization strategies is shown in \refalg{fmbcp}. %It follows a similar framework to \refalg{fmbc} and the added optimization strategies are highlighted with bold font.
%Since the pseudocode is self-explained, we omit the detailed description here.

\begin{theorem}
Given a signed network $G$, the time complexity of \mbcenump  is $O(\sigma n \cdot 3^{\sigma/3})$, where $\sigma$ is the degeneracy number of $G$.
\end{theorem}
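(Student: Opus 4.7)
The plan is to combine a degeneracy-based bound on the outer loop with a Moon--Moser/Tomita-style analysis of the pivoted inner recursion, mirroring the argument of Eppstein--L\"offler--Strash for classical maximal clique listing. Since $v_0,\dots,v_{n-1}$ are arranged in a degeneracy order, when the outer loop of \refalg{fmbc} begins enumeration at $v_i$ the sets $P_L$ and $P_R$ are populated from neighbors of $v_i$ restricted to $\{v_{i+1},\dots,v_{n-1}\}$, so by the defining property of degeneracy $|P_L|+|P_R|\le \sigma$. Therefore every top-level invocation of \mbcenumu in line~8 starts with a candidate pool of total size at most $\sigma$, and the entire remaining recursion happens inside a subgraph of size $O(\sigma)$.

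Next I would bound the number of recursive calls generated by a single top-level invocation. Using the pivoting rule from \refsec{localpruning} (choose $u\in P_L\cup P_R\cup Q_L\cup Q_R$ maximizing $d_l(u)$), I would argue that any maximal balanced clique extending $(C_L,C_R)$ must include either the pivot $u$ or one of the at most $|P_L\cup P_R|-d_l(u)$ vertices in the candidate set that are \emph{not} compatible with $u$; the other candidates can be skipped because every maximal balanced clique containing them is already reached through $u$ or through a non-neighbor branch (this is exactly the role that ET~Rule~2, with $Q_L$ and $Q_R$ playing the part of the classical exclusion set $X$, serves to enforce). Combining this branching bound with the standard Moon--Moser recurrence $T(s)\le T(s-1)+T(s-i)$ for the possible branching widths yields $T(s)=O(3^{s/3})$; applied to $s\le\sigma$ this gives $O(3^{\sigma/3})$ recursive calls per starting vertex $v_i$.

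Finally, I would bound the work per recursive call. The dominant cost is the neighbor intersections performed in lines~17 and~20 when updating $P_L,P_R,Q_L,Q_R$ for the child call. Because every one of these sets is contained in the initial forward neighborhood of $v_i$, each intersection touches $O(\sigma)$ elements and can be computed in $O(\sigma)$ time with the appropriate representation (e.g.\ sorted arrays or bit vectors over the starting neighborhood). Multiplying gives $O(\sigma\cdot 3^{\sigma/3})$ work per starting vertex, and summing over the $n$ iterations of the outer loop produces the claimed bound $O(\sigma n\cdot 3^{\sigma/3})$.

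The main obstacle is verifying rigorously that the Moon--Moser pivot argument carries over to the two-sided setting where candidates are partitioned into $P_L$ and $P_R$ and new vertices can be appended to either side. Concretely, one has to check that regardless of which side receives the pivot's ``compatible'' neighbors, those vertices need not be branched on separately because any maximal balanced clique containing them is already produced either (i)~through the pivot's own recursive call, or (ii)~through a non-compatible vertex that will be branched on, and that the bookkeeping in $Q_L,Q_R$ (together with ET~Rule~2) prevents duplicate outputs. Once this combinatorial claim is established, the rest of the analysis is a routine combination of the degeneracy bound and the standard exponential recurrence.
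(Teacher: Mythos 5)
The paper states this theorem without giving any proof, so there is nothing to compare against line by line; your ELS-style argument (degeneracy ordering to bound the top-level candidate pool by $\sigma$, Moon--Moser recurrence for the pivoted recursion, per-call work times number of calls) is clearly the intended route, and its skeleton is sound. The combinatorial claim you flag as the main obstacle does go through in the two-sided setting: if every vertex of $P_L\cup P_R$ appearing in some maximal balanced clique that extends $(C_L,C_R)$ is compatible with the pivot $p$ (positive to $p$'s side, negative to the other), then $p$ itself could be appended to the appropriate side when $p\in P_L\cup P_R$, contradicting maximality, while if $p\in Q_L\cup Q_R$ the clique has already been reported; hence branching only on the $|P_L|+|P_R|-d_l(p)$ incompatible candidates loses nothing, and the recurrence $T(s)\le\sum$ over branching widths gives $T(s)=O(3^{s/3})$ with $s\le\sigma$ exactly as in Tomita et al.

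The one concrete error is your claim that ``every one of these sets is contained in the initial forward neighborhood of $v_i$.'' That is false for $Q_L$ and $Q_R$: lines 6--7 of \refalg{fmbc} initialize them from $N^{+}_G(v_i)\cap\{v_0,\dots,v_{i-1}\}$ and $N^{-}_G(v_i)\cap\{v_0,\dots,v_{i-1}\}$, i.e.\ the \emph{backward} neighbors, whose number is bounded only by $d_G(v_i)$, not by $\sigma$. Consequently the pivot selection (which scans $P_L\cup P_R\cup Q_L\cup Q_R$) and the intersections $N^{\pm}_G(v)\cap Q_{L/R}$ are not $O(\sigma)$ per call, and your ``multiply $O(\sigma)$ work by $O(3^{\sigma/3})$ calls'' step does not directly yield the claimed bound. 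The standard repair, which you need to make explicit, is the Eppstein--L\"offler--Strash amortization: the cost attributable to the exclusion sets in the subtree rooted at $v_i$ is $O\bigl((\sigma+d_G(v_i))\,3^{\sigma/3}\bigr)$, and summing over all $i$ uses $\sum_i d_G(v_i)=2m\le 2\sigma n$ to recover $O(\sigma n\cdot 3^{\sigma/3})$ overall. With that correction, and a sentence verifying that ET Rules 1--3 only prune branches (so they cannot increase the count of recursive calls), the argument is complete.
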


\vspace{-0.2cm}

\subsection{Pre-Enumeration Optimization}
\label{sec:preenumeration}

In pre-enumeration optimization, we aim to  remove the unpromising vertices and edges that not contained in any maximal balanced clique. We explore two optimization strategies based on the neighbors of a vertex and the common neighbors of an edge.

\label{sec:globalpruning}
\stitle{\underline{Vertex Reduction.}} To reduce the size of a signed network, we first consider the neighbors of each vertex $v$, i.e., $N^+_G(v)$ and $N^-_G(v)$ to remove the unpromising vertices. We first define:

\begin{definition}
\label{def:balancecore}
\textbf{(($l, r$)-signed core)} Given a signed network $G = (V, E^+, E^-)$, two integers $l$ and $r$, a ($l, r$)-signed core is a maximal subgraph $\mathcal{C}$ of $G$, s.t., $\min_{v\in \mathcal{C}}\{d^+_{\mathcal C}(v)\} = l$, $ \min_{v\in \mathcal{C}}\{d^-_{\mathcal C}(v)\} = r$.
\end{definition}

%And we have the following lemma:

\begin{lemma}
\label{lem:balancecocre}
Given a signed network $G$ and threshold $k$, a maximal balanced clique satisfying the size constraint with $k$ is contained in a $(k-1,k)$-signed core.
\end{lemma}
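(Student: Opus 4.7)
\myproof The plan is to verify the two degree conditions required for membership in the $(k-1,k)$-signed core directly from the definition of a balanced clique, and then invoke the standard maximality/union argument for cores.

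First, I would fix an arbitrary maximal balanced clique $C = \{C_L, C_R\}$ satisfying $|C_L| \geq k$ and $|C_R| \geq k$, and examine the positive and negative degrees of each $v \in C$ measured inside the subgraph induced by $C$. By \refdef{balancecommunity}, if $v \in C_L$, then $v$ is joined by a positive edge to every other vertex of $C_L$ and by a negative edge to every vertex of $C_R$. Hence $d^+_{C}(v) = |C_L|-1 \geq k-1$ and $d^-_{C}(v) = |C_R| \geq k$. A symmetric count handles $v \in C_R$, yielding $d^+_{C}(v) \geq k-1$ and $d^-_{C}(v) \geq k$ for every vertex in $C$.

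Next, I would invoke the standard fact about cores: the family of subgraphs of $G$ whose induced subgraph has minimum positive degree at least $k-1$ and minimum negative degree at least $k$ is closed under union, because adding more vertices and edges to an induced subgraph can only increase a vertex's positive or negative degree. Consequently, there is a unique maximal such subgraph, which is exactly the $(k-1,k)$-signed core of \refdef{balancecore}. Since $C$ itself belongs to this family by the previous paragraph, $C$ is contained in the $(k-1,k)$-signed core.

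I do not anticipate a major obstacle; the argument is essentially definition-chasing. The one place to be careful is the reading of \refdef{balancecore}: if the equalities there are interpreted strictly (minimum degrees equal to $l$ and $r$ rather than at least $l$ and $r$), I would phrase the conclusion as ``contained in some $(l,r)$-signed core with $l \geq k-1$ and $r \geq k$,'' which still supports the intended pruning use of the lemma. Otherwise the standard ``at least'' reading makes the proof go through verbatim.
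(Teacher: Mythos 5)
Your proof is correct: the degree count $d^+_C(v)=|C_L|-1\geq k-1$, $d^-_C(v)=|C_R|\geq k$ (and symmetrically for $C_R$) plus containment in the maximal subgraph meeting those thresholds is exactly the intended argument, and it matches the degree-counting the paper itself spells out for the analogous \reflem{candidateprune}. The paper omits an explicit proof of \reflem{balancecocre}, and your remark about reading the equalities in \refdef{balancecore} as ``at least'' is the right way to resolve the definition's imprecision.
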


%\begin{proof}
%We can prove it by contradiction. Assume there is a vertex $v$ in a maximal balanced $C$ satisfying the size constraint with $k$ but not in a $(k-1,k)$-signed core. Based on \refdef{balancecommunity}, the positive degree of $v$ in $C$ is not less than $k-1$ and the negative degree of $v$ in $C$ is not less than $k$. This contradicts with our assumption. Thus, the lemma holds.
%\end{proof}

Therefore, in order to compute the maximal balanced cliques in a given signed network $G$ with integer $k$, we only need to compute the maximal balanced cliques in the corresponding $(k-1, k)$-signed core of $G$. The remaining problem is how to efficiently compute the $(k-1, k)$-signed core. We propose a linear algorithm \kw{VertexReduction} to address this problem.
%, which is shown in \refalg{onehop}.
%\begin{algorithm}[t]
%\caption{\small $\onehop$($G=(V, E^+, E^-), k$)}
%
%{
%\small
%\begin{algorithmic}[1]
%\WHILE{$\exists v \in V$, s.t. $d^+_{G}(v) < k-1$ or $d^-_{G}(v) < k$}
%\FOR{\textbf{each} $u \in N^+_{G}(v)$}
%\STATE $d^+_{G}(u) \leftarrow d^+_{G}(u) -1$;
%\ENDFOR
%\FOR{\textbf{each} $u \in N^-_{G}(v)$}
%\STATE $d^-_{G}(u) \leftarrow d^-_{G}(u) -1$;
%\ENDFOR
%\STATE $G \leftarrow G \setminus v$;
%\ENDWHILE
%\end{algorithmic}
%}
%\label{alg:onehop}
%\end{algorithm}

\stitle{Algorithm of \kw{VertexReduction}.} Based on \refdef{balancecore}, to compute the $(k-1, k)$-signed core in the signed network $G$, we only need to identify the vertices with $d^+_G(v) < k-1$ or $d^-_G(v) < k$ and remove them from $G$. Due to the removal of such vertices, more vertices will violate the degree constraints, we can further remove these vertices until no such kind of vertices exist in $G$. 
%Following this idea, in \refalg{onehop}, we first identify a vertex $v$ with $d^+_G(v) < k-1$ or $d^-_G(v) < k$ (line 1). Since $v$ will be removed from $G$, we decrease the positive degree by 1 for each positive neighbor of $v$ (line 2-3) and decrease the negative degree by 1  for each negative neighbor of $v$ (line 4-5). Then, we remove $v$ from $G$ (line 6). The algorithm terminates when no vertex with $d^+_{G}(v) < k-1$ or $d^-_{G}(v) < k$ exists in $G$ (line 1). It is clear that \refalg{onehop} correctly computes the $(k-1, k)$-signed core of $G$. And we have the  following theorem regarding its efficiency.

\begin{theorem}
\label{thm:balancecocre}
Given a signed network $G$ and an integer $k$, the time complexity of \kw{VertexReduction} is $O(n+m)$.
\end{theorem}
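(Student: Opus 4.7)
The plan is to show that \kw{VertexReduction} can be implemented as a standard peeling procedure, analogous to the classical linear-time algorithm for computing the $k$-core of an unsigned graph, except that each vertex now maintains two separate degree counters, one for positive neighbors and one for negative neighbors. Correctness follows directly from \refdef{balancecore}: a vertex belongs to the $(k-1,k)$-signed core iff after iteratively discarding every vertex violating $d^+_G(v) \geq k-1$ or $d^-_G(v) \geq k$, the vertex still remains. So the task reduces purely to bounding the running time.

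First I would describe the data structures maintained by the algorithm: for each vertex $v$ an integer counter $d^+(v)$ and $d^-(v)$, a Boolean flag $\kw{removed}(v)$ initialized to \textbf{false}, and a FIFO queue $Q$ of vertices scheduled for deletion. Initialization proceeds in two passes: one scan over $V$ and $E^+\cup E^-$ to compute all positive and negative degrees in time $O(n+m)$, then one scan over $V$ that enqueues every vertex with $d^+(v) < k-1$ or $d^-(v) < k$ and marks it removed, costing $O(n)$.

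Next I would analyze the peeling loop. Each iteration dequeues some $v$ and traverses $N^+_G(v) \cup N^-_G(v)$; for every neighbor $u$ not yet removed, it decrements the appropriate counter and, if $u$ newly violates its degree constraint, enqueues $u$ and sets $\kw{removed}(u) \leftarrow \textbf{true}$. Because the flag guarantees each vertex enters $Q$ at most once, the neighbor lists of each vertex are scanned at most once in total, and each edge $(u,v)$ contributes $O(1)$ work at most twice — once when $u$ is removed and once when $v$ is removed. Summing, the peeling loop runs in $O(n+m)$ time, and combined with the $O(n+m)$ initialization this gives the claimed bound.

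I do not expect any deep technical obstacle here; the one point that must be handled carefully in the write-up is the invariant that the $\kw{removed}$ flag is checked before any counter update and before any re-enqueue, so that a vertex is never processed twice and so that the work per edge is genuinely $O(1)$. Once this amortized argument is made precise, the $O(n+m)$ bound follows immediately.
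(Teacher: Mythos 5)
Your proof is correct and follows exactly the argument implicit in the paper's description of \kw{VertexReduction}: the paper states the theorem without an explicit proof, relying on the standard peeling analysis (degree counters, a deletion queue, each vertex removed at most once, each edge charged $O(1)$ work a constant number of times) that you spell out. No gaps; your careful handling of the $\kw{removed}$ flag is precisely what makes the amortized $O(n+m)$ bound go through.
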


%\vspace{-0.3cm}
%\begin{proof}
%In \refalg{onehop}, we use a queue to store vertices that should be removed  in line 6. Since every vertex is pushed in and popped from the queue at most once, the total processing time for this part is $O(n)$. Moreover, when a vertex is removed, we have to update the degrees for their neighbors once, the total time cost is $O(m)$. Therefore,  the time complexity of \refalg{onehop} is $O(n+m)$.
%\end{proof}
%\vspace{-0.3cm}

%\begin{figure}[t]
%\begin{center}
%\begin{tabular}[t]{c}
%%\includegraphics[width=0.75\columnwidth]{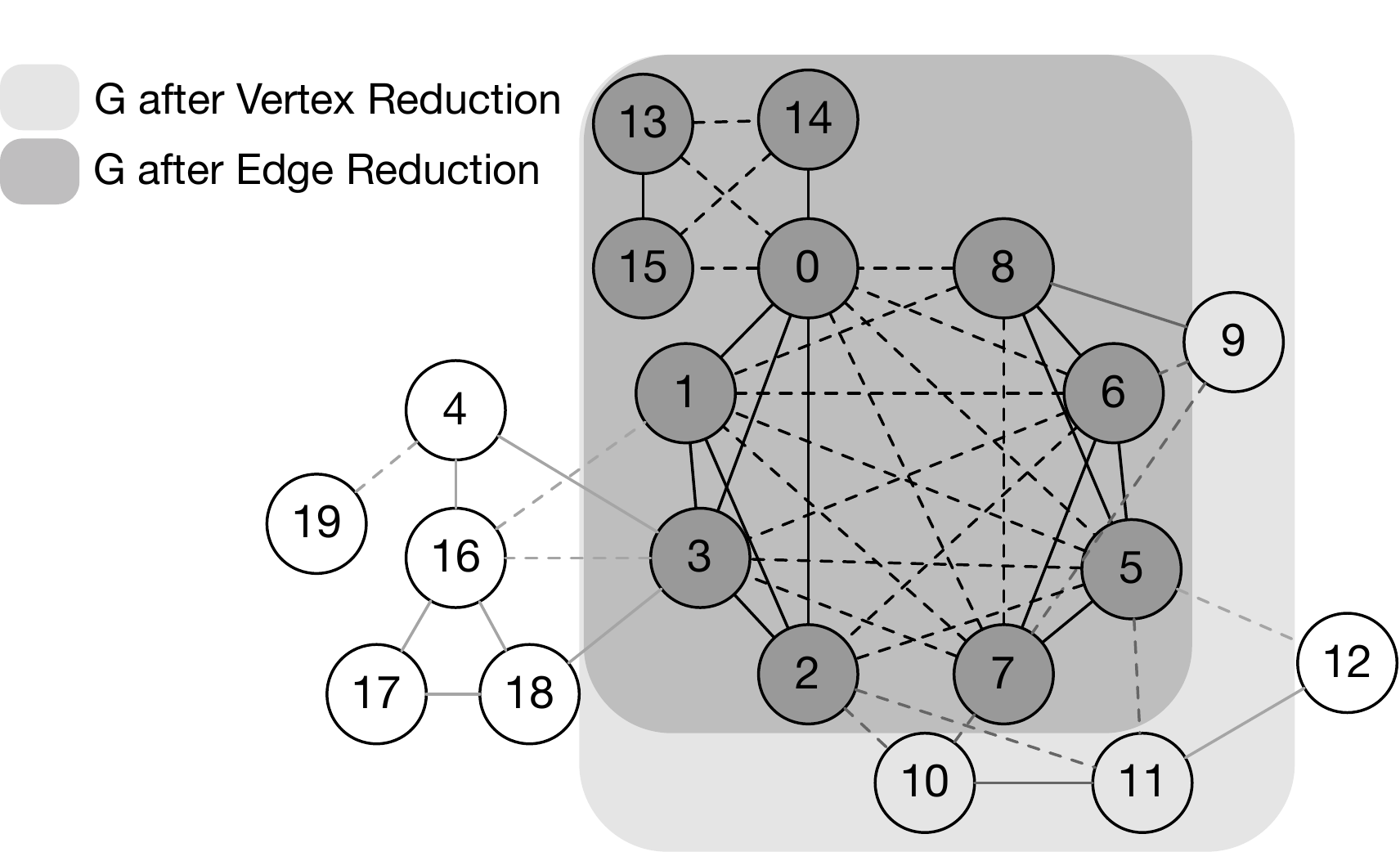}
%\includegraphics[width=0.75\columnwidth]{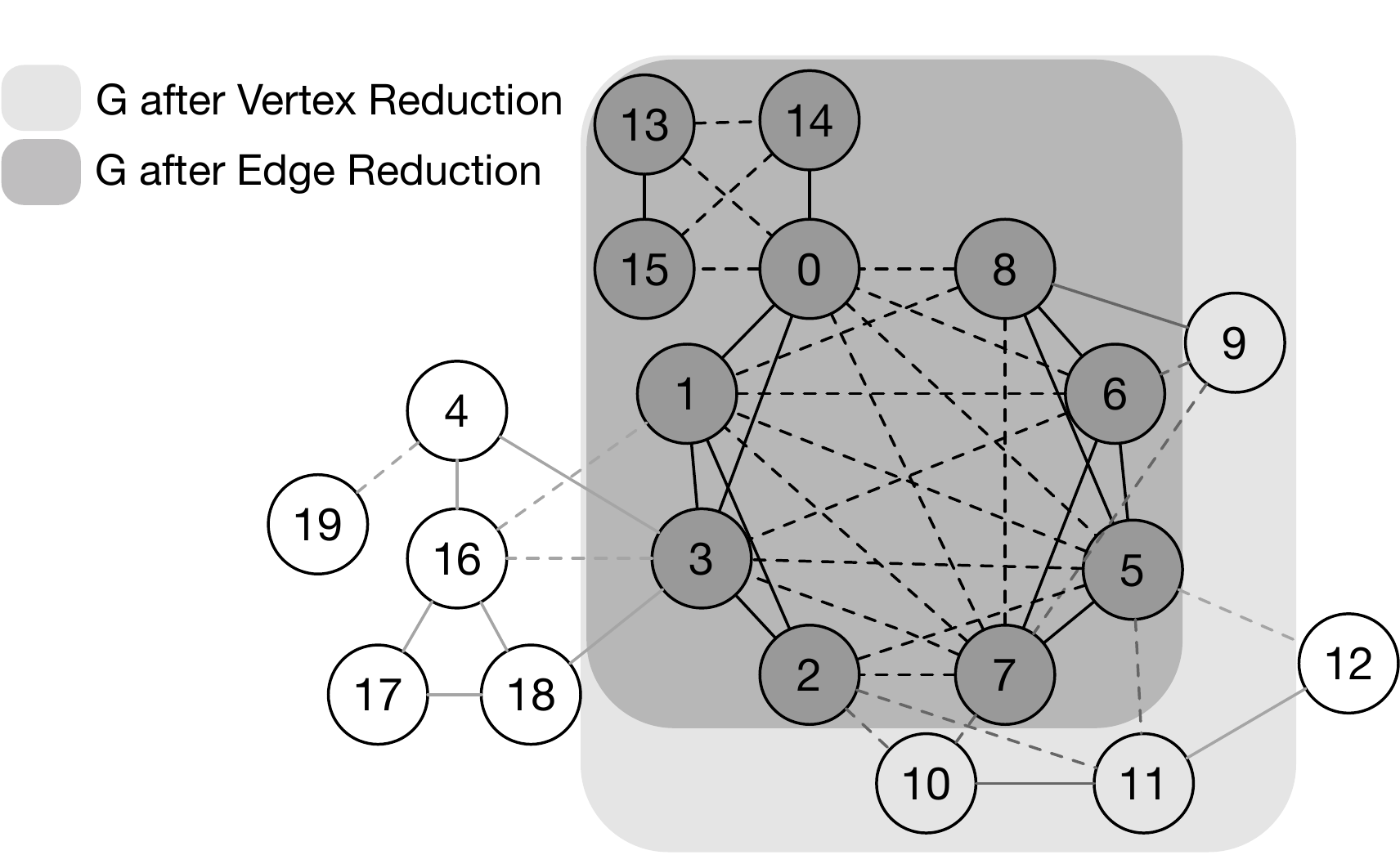}
%\end{tabular}
%\end{center}
%\topcaption{Vertex Reduction and Edge Reduction}
%\label{fig:core}
%\vspace{-0.4cm}
%\end{figure}

%
%\begin{example}
%\label{exp:1hopdegree}
%Let $k=2$, \reffig{core} shows an example of vertex reduction by \refalg{onehop} on the signed network $G$ in \reffig{community}. $v_4, v_{12}, v_{16}, v_{17}, v_{18}, v_{19}$ are pruned, because they are not contained in the $(1,2)$-signed core.
%\end{example}

\begin{figure}[t]
\begin{center}
\begin{tabular}[H]{c}
\subfigure[]{
\label{fig:triangle1}
\centering
\includegraphics[width=0.2\columnwidth]{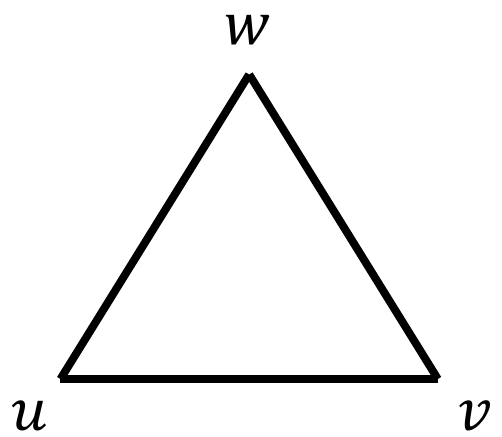}
}
% \vspace{-0.2cm}
\subfigure[]{
\label{fig:triangle2}
\centering
\includegraphics[width=0.2\columnwidth]{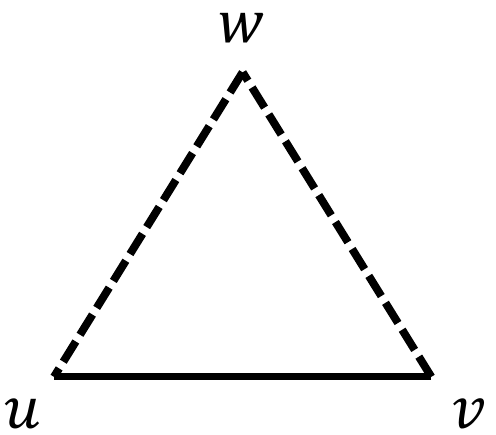}
}
\subfigure[]{
\label{fig:triangle3}
\centering
\includegraphics[width=0.2\columnwidth]{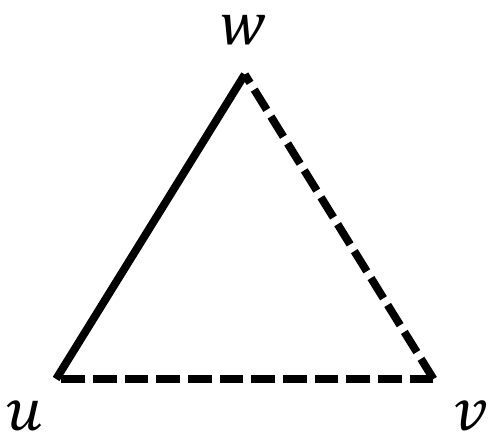}
}
\subfigure[]{
\label{fig:triangle4}
\centering
\includegraphics[width=0.2\columnwidth]{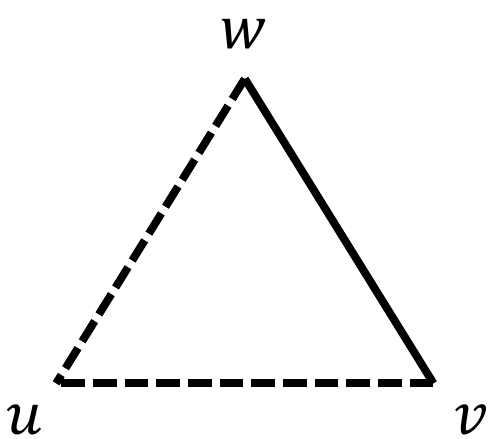}
}
\end{tabular}
\end{center}
%\vspace{-0.4cm}

\topcaption{Different types of common neighbors for  $(u, v)$}
\label{fig:triangle}
\vspace{-0.4cm}
\end{figure}

\stitle{\underline{Edge Reduction.}} In this part, we explore the opportunities  to remove unpromising edges with respect to \kw{MBCE} by considering the common neighbors of an edge formed by different types of edges. Specifically, for a positive/negative edge $(u, v)$, we define the edge common neighbor number:

\begin{definition}
\label{def:2hopdegree}
\textbf{(Edge Common Neighbor Number)} Given a signed network $G =(V, E^+, E^-)$, for a positive edge $(u, v)$, we define:
\begin{itemize}
	\item $\delta^{++}_G(u, v) = |\{w|(u, w) \in E^+ \wedge (v, w) \in E^+ \}|$
	\item $\delta^{--}_G(u, v) = |\{w|(u, w) \in E^- \wedge (v, w) \in E^- \}|$
\end{itemize}
for a negative edge $(u, v)$, we define:
	\begin{itemize}
	\item $\delta^{+-}_G(u, v) = |\{w|(u, w) \in E^+ \wedge (v, w) \in E^- \}|$
	\item $\delta^{-+}_G(u, v) = |\{w|(u, w) \in E^- \wedge (v, w) \in E^+ \}|$
\end{itemize}
\end{definition}
\vspace{-0.3cm}

\reffig{triangle} shows the different types of common neighbors used in \refdef {2hopdegree}. For a positive edge $(u, v)$, \reffig{triangle} (a) and (b) show the common neighbor $w$ used in $\delta^{++}_G(u, v)$ and $\delta^{--}_G(u, v)$, respectively. For a negative edge $(u, v)$, \reffig{triangle} (c) and (d) show the common neighbor $w$ used in $\delta^{+-}_G(u, v)$ and $\delta^{-+}_G(u, v)$, respectively. Note that $G$ is undirected and every edge is stored once in $G$. Based on \refdef{2hopdegree}, we have the following lemma:

\begin{lemma}
\label{lem:2hopdegree}
Given a signed network $G$ and an integer $k$, let $G'$ be the maximal sub-network of $G$ s.t.,
\begin{enumerate}
		\item $\forall (u, v) \in E^+_{G'} \rightarrow \delta^{++}_{G'}(u,v) \ge k-2 \wedge \delta^{--}_{G'}(u,v) \ge k$;
		\item $\forall (u, v)\in E^-_{G'} \rightarrow \delta^{+-}_{G'}(u, v) \ge k-1 \wedge \delta^{-+}_{G'}(u, v) \ge k-1$;
\end{enumerate}
then, every maximal balanced clique $C=\{C_L,C_R\}$ in $G$ satisfying the size constraint with $k$ is contained in $G'$.
\end{lemma}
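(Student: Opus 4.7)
The plan is to show that every qualifying maximal balanced clique $C=\{C_L,C_R\}$ is preserved during the iterative edge-peeling process that produces $G'$. I view $G'$ operationally: start from $G$, and repeatedly delete any edge violating the corresponding inequality in the current subgraph until no violation remains. The resulting network is the maximal subgraph satisfying conditions (1) and (2). I will argue by induction on this deletion sequence that no edge of $C$ is ever removed.

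The inductive step reduces to a structural claim: within the subgraph induced by $V(C)$ alone, every edge of $C$ already meets the thresholds. I would prove this by a case split on the sign. Suppose $(u,v)\in E^+$ is an edge of $C$. By the definition of a balanced clique, $u$ and $v$ lie on the same side, say $u,v\in C_L$. Then every other vertex $w\in C_L\setminus\{u,v\}$ satisfies $(u,w),(v,w)\in E^+$, yielding $\delta^{++}(u,v)\ge |C_L|-2\ge k-2$, and every vertex $w\in C_R$ satisfies $(u,w),(v,w)\in E^-$, yielding $\delta^{--}(u,v)\ge |C_R|\ge k$. If $(u,v)\in E^-$ is an edge of $C$, then the two endpoints lie on opposite sides, say $u\in C_L$ and $v\in C_R$. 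Each $w\in C_L\setminus\{u\}$ gives $(u,w)\in E^+$ and $(v,w)\in E^-$, so $\delta^{+-}(u,v)\ge |C_L|-1\ge k-1$, and symmetrically each $w\in C_R\setminus\{v\}$ contributes to $\delta^{-+}(u,v)\ge |C_R|-1\ge k-1$. Hence every edge of $C$ satisfies the thresholds using only witnesses drawn from $V(C)$.

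With this in hand, the peeling argument is immediate. Assume, as the induction hypothesis, that at some stage of the peeling all edges of $C$ are still present in the current subgraph $H$. Then for any candidate edge $e=(u,v)\in C$, the witnesses from $V(C)$ identified above are still witnesses in $H$, because the edges used are themselves edges of $C\subseteq H$. Consequently the counts $\delta^{**}_H(u,v)$ are at least as large as the corresponding counts computed inside $C$, which already clear the required thresholds. Thus $e$ cannot be selected for removal, and the invariant persists until no further deletions occur, so $C\subseteq G'$.

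The only delicate point is confirming that the iterative peeling genuinely characterizes ``the maximal sub-network'' asserted by the lemma, independently of the order in which violating edges are chosen. I expect this to be a short monotonicity remark: common-neighbor counts are monotone under edge deletion, so any edge removed at some stage would still violate the relevant inequality in any subgraph of the current $H$; standard arguments then show the fixed point is unique and equal to $G'$. Once that is noted, the above invariant finishes the proof.
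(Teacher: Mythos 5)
Your proof is correct and is the intended argument: the witness-counting inside $V(C)$ gives exactly the stated thresholds ($|C_L|-2\ge k-2$ common positive neighbors and $|C_R|\ge k$ common negative neighbors for a positive edge; $|C_L|-1\ge k-1$ and $|C_R|-1\ge k-1$ for a negative edge), so $C$ itself satisfies conditions (1) and (2), and monotonicity of the common-neighbor counts under edge insertion (equivalently, your peeling invariant) then places $C$ inside the maximal sub-network $G'$. The paper states this lemma without an explicit proof, and your argument fills that gap in the standard way, mirroring the degree-based reasoning behind the $(k-1,k)$-signed-core reduction in \reflem{balancecocre}.
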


\stitle{Algorithm of \kw{EdgeReduction}.} With \reflem{2hopdegree}, in order to enumerate the maximal balanced cliques in a given signed network $G$ with respect $k$, we only need to keep the edges in $G'$ shown in \reflem{2hopdegree} and the positive/negative edges not in $G'$  can be safely pruned. We first compute $\delta^{++}_G(u, v)$ and $\delta^{--}_G(u, v)$ for each positive edge of $G$ and $\delta^{+-}_G(u, v)$ and $\delta^{-+}_G(u, v)$ for each negative edge of $G$. Following \reflem{2hopdegree}, for each positive edge $(u, v)$ such that $\delta^{++}_G(u,v) < k-2$ or $\delta^{--}_G(u,v) < k$, we remove $(u, v)$. After that, we decrease the corresponding edge common neighbor numbers that have been changed due to the removal of $(u, v)$ for the edge incident to $(u, v)$ based on \refdef{2hopdegree}. It's similar to negative edges. The algorithm terminates when all the edges satisfy conditions in \reflem{2hopdegree}.
%Now we focus on efficiently computing $G'$  and our algorithm is shown in \refalg{twohop}. We first compute $\delta^{++}_G(u, v)$ and $\delta^{--}_G(u, v)$ for each positive edge of $G$ (line 1-2) and $\delta^{+-}_G(u, v)$ and $\delta^{-+}_G(u, v)$ for each negative edge of $G$ (line 3-4). Following \reflem{2hopdegree}, for each positive edge $(u, v)$ such that $\delta^{++}_G(u,v) < k-2$ or $\delta^{--}_G(u,v) < k$, we remove $(u, v)$ (line 9). After that, we decrease the corresponding edge common neighbor numbers that have been changed due to the removal of $(u, v)$ for the edge incident to $(u, v)$ (line 10-15) based on \refdef{2hopdegree}. Similarly, for each negative edge not satisfying the conditions in \reflem{2hopdegree}, we remove it and decrease the corresponding edge common neighbor numbers (line 17-24). The algorithm terminates when all the edges satisfy conditions in \reflem{2hopdegree}. %Based on the steps of \refalg{twohop}, it is clear that we obtain $G'$ shown in \reflem{2hopdegree}. And we have the following theorem regarding the time complexity of \refalg{twohop}.

\begin{theorem}
\label{thm:2hopdegree}
Given a signed network $G$, an integer $k$, the time complexity of \kw{EdgeReduction} is $O(m^{1.5})$.
	
\end{theorem}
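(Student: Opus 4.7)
The plan is to split the analysis into an initialization phase and an iterative removal phase, and argue that each is bounded by $O(m^{1.5})$. The crucial input to both bounds is the classical fact that any graph with $m$ edges contains at most $O(m^{1.5})$ triangles, together with the standard triangle-listing procedure (orient edges toward the higher-degree endpoint and for each vertex enumerate pairs of out-neighbors) which enumerates them in $O(m^{1.5})$ time.

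For the initialization phase, I would show that the four counters $\delta^{++}_G$, $\delta^{--}_G$, $\delta^{+-}_G$, $\delta^{-+}_G$ on all edges can be computed by a single sign-aware triangle enumeration: for every listed triangle $\{u,v,w\}$, examine the three edge signs and, in $O(1)$ time, increment the appropriate counter on each of the three edges according to Definition~4. Because every triangle is processed once and contributes only constant work, this step runs in $O(m^{1.5})$. I would also build, for each edge $(u,v)$, an explicit adjacency list of the common neighbors contributing to its counters; this list has total size $O(m^{1.5})$ and is produced as a by-product of the enumeration.

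For the iterative removal phase, the key invariant is that each edge is deleted at most once and each surviving triangle can lose at most three edges over the entire run. When an edge $(u,v)$ is removed, the algorithm walks its precomputed common-neighbor list and, for each $w$ in it, decrements the two corresponding counters on $(u,w)$ and $(v,w)$ in $O(1)$ time; if a decrement pushes a counter below the threshold in \reflem{2hopdegree}, the affected edge is pushed onto a queue of pending deletions. Charging the update work to the triangle that caused it, the total cost of all removals is at most $3$ times the number of triangles initially present, i.e.\ $O(m^{1.5})$. Combining the two phases, and noting that the queue operations, counter lookups, and list traversals can all be made $O(1)$ per step with simple arrays and hash tables indexed by edge, yields the claimed bound.

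The main obstacle I anticipate is establishing the $O(m^{1.5})$ update bound cleanly, because a naive implementation that recomputes common neighbors from scratch on each removal would cost $\Theta(d(u)+d(v))$ per deletion and could blow up to $O(m \Delta)$. Getting around this forces the explicit triangle-list representation above, and the accompanying amortized argument that each triangle is charged only $O(1)$ times overall. Once this structural bookkeeping is in place, both phases fall under the same $O(m^{1.5})$ envelope, and the theorem follows.
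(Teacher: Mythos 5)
The paper states this theorem without giving any proof, so there is nothing to compare against line by line; your argument supplies the missing justification and it is sound. Your two-phase decomposition is exactly the standard one for truss-style peeling: the initialization is a sign-aware variant of Chiba--Nishizeki triangle listing in $O(m^{1.5})$ time, and the deletion phase is amortized by charging each counter update to one of the at most $O(m^{1.5})$ triangles, each touched $O(1)$ times. You are also right to flag that the naive re-intersection of adjacency lists on every deletion would break the bound --- the precomputed per-edge common-neighbor lists (with an $O(1)$ liveness check so that already-destroyed triangles are not decremented twice) are what make the amortization go through, and this is precisely the bookkeeping the paper's prose description of \kw{EdgeReduction} leaves implicit.
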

\section{Maximum Balanced Clique Search}
\label{sec:mbcs}

Maximum clique search problem is a fundamental and hot research topic in graph analysis. In this section, we study the maximum balanced clique search problem.
\subsection{A Baseline Approach}
\label{sec:baseline2}
We first propose a baseline approach, namely \mbcs, to compute the maximum balanced clique in the input graph. We continuously enumerate the maximal balanced cliques in the input graph and maintain the maximum balanced clique $C^*$ found so far. For each search branch ($C_L,C_R,P_L,P_R$),  if $|C_L| + |C_R| + |P_L| + |P_R| \le \epsilon$, where $\epsilon=|C^*|$, we can terminate the branch. When the enumeration finishes, it is easy to verify that $C^*$ is the maximum balanced clique.

\stitle{Drawbacks of \mbcs.}  Although the straightforward approach can find the maximum balanced clique, the complexity of \mbcs is the same as that of \kw{MBCEnum^*} in the worst case. The search space of \mbcs is huge. In details, the drawbacks of \mbcs are twofold.

\begin{itemize}
%tight bounds for $|C_L|$ and $|C_R|$.
\item \textbf{Lack of rigorous size constraints for $|C_L|$ and $|C_R|$.} Given a signed graph $G$, during the search process, \mbcs only holds the size constraint $|C_L|+|C_R|+|P_L|+|P_R|> \epsilon$ for each search branch. However, when $\epsilon$ is small, most of search branches have $|C_L|+|C_R|+|P_L|+|P_R|$ larger than $\epsilon$ which causes the fail of size constraint for most  search branches. Unfortunately, as our algorithm constantly searches larger result than at present, the value of $\epsilon$ is gradually increasing from a small value, which makes \mbcs has to search the result with large search space.

\item \textbf{Massive invalid search branches.} Although the search branches meet the size constraint with $\epsilon$, the structure between $P_L$ and $P_R$ maybe sparse which will generates invalid search branches. Hence,   during the search process, more pruning techniques is needed urgently. Moreover, the optimization strategies based on $k$ in \kw{MBCEnum^*}, like vertex reduction and edge reduction, are limited here, as $C^*$ usually has size much larger than $k$. Therefore, the remaining graph after reduction is still huge on large-scale signed network. 
\end{itemize}

\stitle{Main idea.} In the further work, we aim to improve the efficiency of our algorithm. 
\begin{itemize}
\item To address the first drawback of lacking of rigorous size constraints for $|C_L|$ and $|C_R|$, we can propose $\underline{\kappa}$ and $\overline{\kappa}$ as the lower bounds for $min\{|C_L|,|C_R|\}$ and $max\{|C_L|,|C_R|\}$, respectively. Then, we can get the balanced cliques with $|C_L|\ge \underline{\kappa}$ and $|C_R|\ge \overline{\kappa}$ within narrow search space (assume $|C_L| \le |C_R|$). Under different value of $\underline{\kappa}$ and $\overline{\kappa}$, the search space is split into multiple partitions. Moreover, with initializing $\underline{\kappa}$ or $\overline{\kappa}$ as large value, we can  search balanced cliques with large size as priority. Under large $\epsilon$ and rigorous bounds $\underline{\kappa}$ and $\overline{\kappa}$, the search space can be significantly reduced.
\item To address the second drawback of massive invalid search branches, regarding to bounds $\epsilon$, $\underline{\kappa}$ and $\overline{\kappa}$, we can propose optimizations to forecast the size of balanced clique found in the current search branch to avoid invalid search branches and remove redundant vertices from candidates. Moreover, we can extend the vertex reduction and edge reduction of \kw{MBCEnum^*} with new bounds to prune more useless vertices and edges.
\end{itemize}

%\iffalse
%\bibliography{./reference.bib}
%\fi
\subsection{Search Space Partition-based Framework}
\label{sec:spacesplit}
%We have analyzed that the huge search space of \mbcs algorithm leads to its low efficiency.
 To improve the efficiency of our approach, regarding to the first drawback of \mbcs, in this subsection, we propose a new maximum balance clique search framework \mbcss with two lower bounds $\underline{\kappa}$ and $\overline{\kappa}$ for $|C_L|$ and $|C_R|$. Given certain value $\underline{\kappa}_i$ and $\overline{\kappa}_i$, a search region is  denoted as $(\underline{\kappa}_i, \overline{\kappa}_i)$, the maximum balanced clique found in it should satisfies $|C_L| \geq \underline{\kappa}_i$ and $|C_R| \geq \overline{\kappa}_i$ if $|C_L|\le |C_R|$ (otherwise, swap $L$ and $R$). Under different value of $\underline{\kappa}$ and $\overline{\kappa}$, the whole search space can be divided into several search regions. In each search region, we keep searching larger result than at present. When all search regions are explored, the final result $C^*$ can be found.
 
  As our main idea, to search the result with large size as priority, for the first search region $(\underline{\kappa}_0 , \overline{\kappa}_0)$,  $\overline{\kappa}_0$ is initialized as a large integer value. Obviously, as $\overline{\kappa}_0$ value is large,  benefited from the strict size constraint, most of search branches of \mbcs are ineligible now. Hence the result can be found  quickly in this search region. Besides, to obey the size threshold $k$, we make $\underline{\kappa}_0=k$. Then, to cover the whole search space,  for the later search regions,  we keep increasing $\underline{\kappa}$ and decreasing  $\overline{\kappa}$  until $\underline{\kappa}=\overline{\kappa}$. In another word, for $i<j$,  $\underline{\kappa}_i\le \underline{\kappa}_j$ and  $\overline{\kappa}_i \ge \overline{\kappa}_j$. 
% Besides, because $\underline{\kappa}_0=k$ and $\underline{\kappa}$ keeps increasing, all balanced cliques found in every search region will obey the size threshold $k$. 
 
 Here,  we first assign the possible maximum value to $\overline{\kappa}_0$,  we have the following lemma:

%\footnote{The degeneracy of a graph $G$ is the least $k$ such that every induced subgraph of $G$ contains a vertex with $k$ or fewer neighbors.}
\begin{lemma}
\label{lem:degeneracy}
Given a signed network $G=(V, E^+, E^-)$, for every balanced clique, we have $max\{|C_L|,|C_R|\} \le \sigma+1$, where $\sigma$ is the degeneracy number of $G^+$.
\end{lemma}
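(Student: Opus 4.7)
The plan is to reduce this bound to the well-known fact that any clique in a graph with degeneracy $\sigma$ has at most $\sigma+1$ vertices, applied to the positive subgraph $G^+$.

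First, I would unpack the definition of a balanced clique. By \refdef{balancecommunity}, for any balanced clique $C = \{C_L, C_R\}$ in $G$, every pair of vertices within $C_L$ is joined by a positive edge, and similarly for $C_R$. Hence both $C_L$ and $C_R$ are cliques in the unsigned graph $G^+ = (V, E^+)$. So it suffices to prove that any clique in $G^+$ has size at most $\sigma+1$, which is a standard property of degeneracy but I would still include a short self-contained argument since the paper relies on it.

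Second, I would argue the degeneracy bound directly. Suppose $K$ is a clique in $G^+$ with $|K| = s$. Then the subgraph of $G^+$ induced by $K$ has minimum degree exactly $s-1$, since every vertex in $K$ is adjacent to all other $s-1$ vertices of $K$ (and possibly more outside $K$, which only increases the induced minimum degree). By the definition of degeneracy, $\sigma$ is the maximum over all subgraphs $H \subseteq G^+$ of the minimum degree of $H$; taking $H$ to be the subgraph induced by $K$ gives $\sigma \geq s-1$, i.e., $s \leq \sigma+1$. Applying this to both $C_L$ and $C_R$ yields $\max\{|C_L|,|C_R|\} \leq \sigma+1$.

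There is no serious obstacle here; the whole argument is two lines once the observation that $C_L, C_R$ are cliques in $G^+$ is made. The only thing worth being careful about is that the lemma statement references the degeneracy of $G^+$ (not of $G$ or $G^-$), which is exactly what the argument needs, and that we use the standard convention in which the degeneracy is the largest minimum degree over induced subgraphs (equivalently, the smallest $d$ such that every subgraph contains a vertex of degree at most $d$).
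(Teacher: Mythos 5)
Your proposal is correct and follows the same route as the paper: both observe that $C_L$ and $C_R$ are cliques in $G^+$ by Definition~\ref{def:balancecommunity} and then invoke the fact that cliques in a graph of degeneracy $\sigma$ have at most $\sigma+1$ vertices. The only difference is that you prove that standard degeneracy bound from first principles, whereas the paper simply cites it.
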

\begin{proof}
In unsigned networks, the degeneracy number plus 1 is an upper bound for the maximum size of  cliques\cite{DBLP:journals/pvldb/LuYWZ17}. Based on \refdef{balancecommunity},  in a signed network $G=(V, E^+, E^-)$,  for every balanced clique $C=\{C_L, C_R\}$,  $C_L$ and $C_R$ are traditional cliques in  $G^+$. Therefore, $|C_L|$ and $|C_R|$ are must not greater than  $\sigma+1$, respectively.
\end{proof}

Based on \reflem{degeneracy},  we assign  $(k, \sigma+1)$ to the first search region $(\underline{\kappa}_0, \overline{\kappa}_0)$. Then,  in the later search region,  we continue to seek larger balanced clique than the current one. However,  not every search region can find a valid result. To skip invalid search regions, we have the following lemma:

\begin{lemma}
\label{lem:ssmbcs}
Given a signed network $G$, the maximum balanced clique found in the $i$-th search region $(\underline{\kappa}_i , \overline{\kappa}_i)$ is denoted by $C^*_i=\{C_L^i, C_R^i\}$. Then,  for the next search region $(\underline{\kappa}_{i+1} , \overline{\kappa}_{i+1})$ , we have $\underline{\kappa}_{i+1}=|C^*_i|-\overline{\kappa}_i$.
\end{lemma}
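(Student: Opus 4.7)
The plan is to show that setting $\underline{\kappa}_{i+1} = |C^*_i| - \overline{\kappa}_i$ is the tightest lower bound on $\min\{|C_L|,|C_R|\}$ under which no balanced clique residing in the next region that could improve upon $C^*_i$ is overlooked. I would approach this by characterizing the candidate cliques that must still be considered after region $i$ completes, using the region-partition invariants already stated (namely, $\underline{\kappa}$ non-decreasing and $\overline{\kappa}$ non-increasing across regions, together with the first region being seeded by \reflem{degeneracy}).

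First, I would fix an arbitrary balanced clique $C' = \{C'_L, C'_R\}$ in region $i+1$ that could become the new best, so $|C'_L| + |C'_R| \ge |C^*_i|$. Without loss of generality take $|C'_L| \le |C'_R|$, which makes $\min\{|C'_L|,|C'_R|\} = |C'_L|$ and $\max\{|C'_L|,|C'_R|\} = |C'_R|$; this matches the $L$/$R$ convention used to interpret $\underline{\kappa}$ and $\overline{\kappa}$ in the framework. Next, I would invoke the exhaustiveness of region $i$: it has already inspected every balanced clique satisfying $|C_L| \ge \underline{\kappa}_i$ and $|C_R| \ge \overline{\kappa}_i$, returning the largest one as $C^*_i$. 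Because $\underline{\kappa}_{i+1} \ge \underline{\kappa}_i$, any candidate $C'$ in region $i+1$ automatically satisfies $|C'_L| \ge \underline{\kappa}_i$; if we additionally had $|C'_R| > \overline{\kappa}_i$, then $C'$ would have been a feasible clique in region $i$ and so $|C'| \le |C^*_i|$, contradicting $|C'| \ge |C^*_i|$ in any interesting (strictly improving) case. Hence any improving candidate must satisfy $|C'_R| \le \overline{\kappa}_i$.

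Combining this with $|C'_L| + |C'_R| \ge |C^*_i|$ yields $|C'_L| \ge |C^*_i| - |C'_R| \ge |C^*_i| - \overline{\kappa}_i$, so imposing $\underline{\kappa}_{i+1} = |C^*_i| - \overline{\kappa}_i$ prunes the search without discarding any improving clique. Conversely, picking any strictly larger $\underline{\kappa}_{i+1}$ would exclude the boundary configuration $|C'_R| = \overline{\kappa}_i$, $|C'_L| = |C^*_i| - \overline{\kappa}_i$, so $|C^*_i| - \overline{\kappa}_i$ is the largest (hence tightest) valid choice, which is exactly the claim.

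The main obstacle I expect is handling the boundary inequality cleanly: the argument hinges on the precise convention for whether region $i$ is treated as covering $|C'_R| \ge \overline{\kappa}_i$ inclusively, and whether region $i+1$ is searching for strictly larger cliques or for cliques at least as large as $C^*_i$. Once the inclusive/non-strict reading (which matches the stated formula) is fixed, the derivation above is straightforward. A secondary minor point is to verify that the WLOG swap between $L$ and $R$ is consistent with how the partition scheme assigns the two sides in region $i+1$, so that $\underline{\kappa}_{i+1}$ really constrains the smaller side and $\overline{\kappa}_{i+1}$ the larger one.
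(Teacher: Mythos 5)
Your proof is correct and follows essentially the same route as the paper's: both arguments rest on the observation that any clique improving on $C^*_i$ must have its larger side below $\overline{\kappa}_i$ (else region $i$ would already have found it), after which the size constraint $|C'_L|+|C'_R|\ge|C^*_i|$ forces the smaller side to be at least $|C^*_i|-\overline{\kappa}_i$. The paper phrases this as a contradiction on the minimum side rather than a direct derivation, and omits your added tightness discussion, but the substance is identical.
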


\begin{proof}
We prove it by contradiction.  Following the $i$-th search region,   in the next search region $(\underline{\kappa}_{i+1} , \overline{\kappa}_{i+1})$, if we get a larger balanced clique $C^*_{i+1}=\{C_L^{i+1}, C_R^{i+1}\}$ than $C^*_{i}$. Based on our search framework, we have $max\{|C_L^{i+1}|, |C_R^{i+1}|\}<\overline{\kappa}_i$, otherwise, $C^*_{i+1}$ will be found in the $i$-th search region rather than the $(i+1)$-th search region. Now,  we assume $min\{|C_L^{i+1}|, |C_R^{i+1}|\}<|C^*_i|-\overline{\kappa}_i$. Combining with $max\{|C_L^{i+1}|, |C_R^{i+1}|\}<\overline{\kappa}_i$, we have $|C^*_{i+1}|=|C_L^{i+1}|+|C_R^{i+1}|<|C^*_i|$. Obviously,  it is against with our premise that $C^*_{i+1}$ is larger than $C^*_{i}$. Therefore, the assumption for $min\{|C_L^{i+1}|, |C_R^{i+1}|\}<|C^*_i|-\overline{\kappa}_i$ does not hold. We get $min\{|C_L^{i+1}|, C_R^{i+1}\}\geq|C^*_i|-\overline{\kappa}_i$, i.e.,  $\underline{\kappa}_{i+1}=|C^*_i|-\overline{\kappa}_i$.
\end{proof}

Based on \reflem{ssmbcs}, after the $i$-th search region, the search regions with $\underline{\kappa} < |C^*_i|-\overline{\kappa}_i$ can be skipped directly.

\begin{figure}[t]
\begin{center}
\includegraphics[width=0.95\columnwidth]{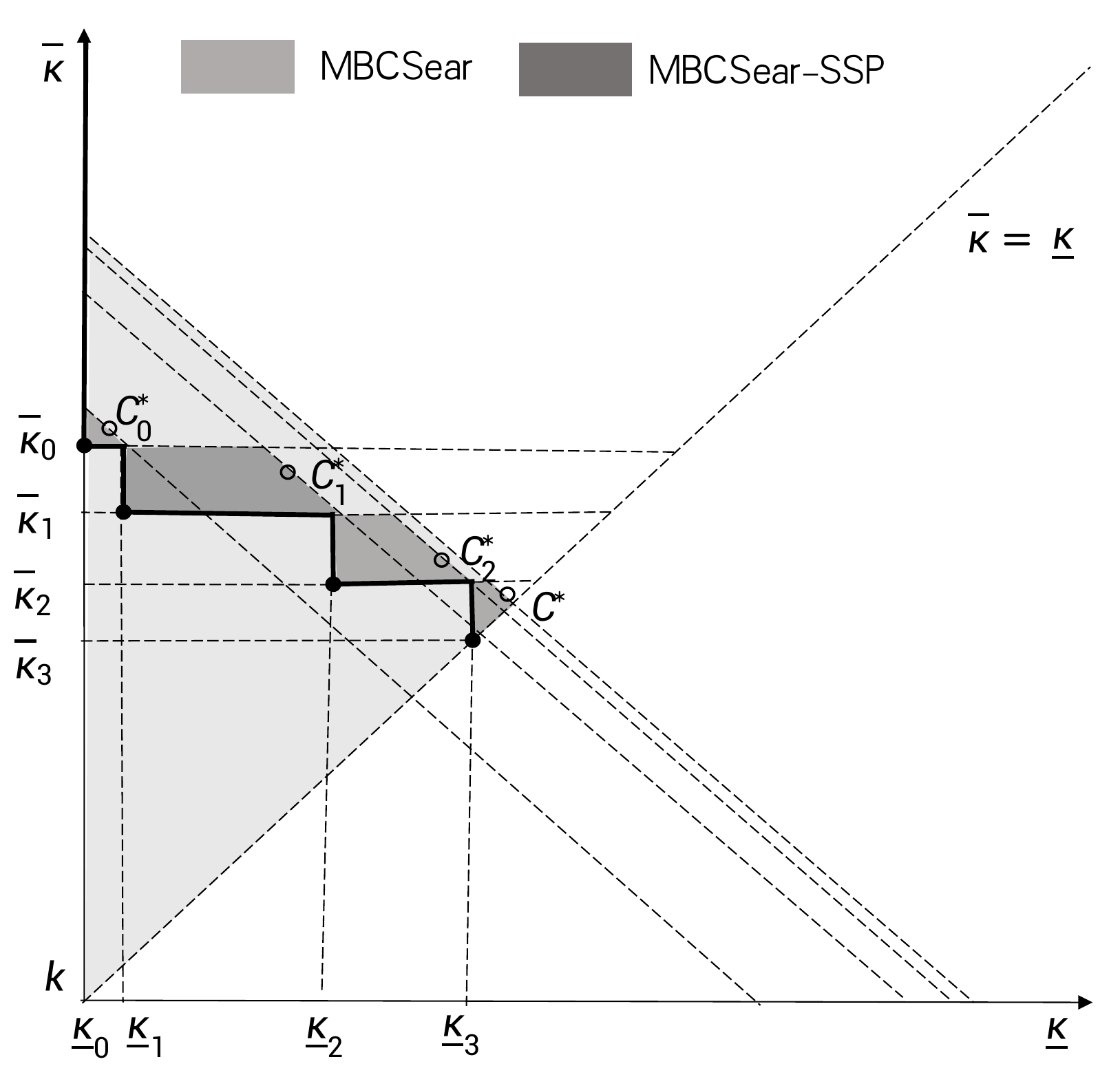}
\end{center}
\vspace{-0.2cm}
\topcaption{ The search space of \mbcs and \mbcss}
\label{fig:ss}
%\vspace{-0.4cm}
\end{figure}

\begin{algorithm}[t]
\caption{\small \mbcss($G=(V,  E^+,  E^-), k$)}
{
\small
\begin{algorithmic}[1]
\STATE compute degeneracy $\sigma$ of $G^+=(V,  E^+)$;
\STATE $\epsilon \leftarrow 2k$; $\underline{\kappa} \leftarrow k$;  $\overline{\kappa} \leftarrow \sigma + 1$; $\overline{\kappa}' \leftarrow -1$;
\WHILE{$\overline{\kappa}\geq \underline{\kappa}$ 
and $\overline{\kappa}<\overline{\kappa}'$ }

\STATE \mbcs$(G=(V,  E^+,  E^-), \epsilon)$ with adding size constraints: $min\{\overline{L} ,\overline{R}\} < \underline{\kappa}$ or   $max\{\overline{L} ,\overline{R}\} < \overline{\kappa}$;
%\STATE \textbf{if} $\underline{\kappa}=\overline{\kappa}$ \textbf{then}  \textbf{break};
\STATE $\overline{\kappa}' \leftarrow \overline{\kappa}$;
 $\underline{\kappa} \leftarrow max\{\epsilon - \overline{\kappa},  k\}$;
  $\overline{\kappa} \leftarrow max\{Dec(\overline{\kappa}),  \underline{\kappa}\}$;

%\STATE $\overline{\kappa} \leftarrow max\{\lceil\frac{\overline{\kappa}}{2}\rceil,  \underline{\kappa}\}$;
%\STATE $\overline{\kappa} \leftarrow max\{\overline{\kappa}-2,  \underline{\kappa}\}$;
%\STATE
\ENDWHILE
\end{algorithmic}
}
\label{alg:mbcss}
\end{algorithm}

\stitle{Algorithm of \mbcss.} Following the above idea, the new maximum balanced clique search algorithm \mbcss is shown at \refalg{mbcss}. Given a signed network $G=(V,  E^+,  E^-)$ and size threshold $k$, we first compute the degeneracy number $\sigma$ of $G^+=(V,  E^+)$ (line 1). We initialize $\epsilon=2k$, $\underline{\kappa} = k$,  $\overline{\kappa} = \sigma+ 1$ (line 2). Then, in each search region, \refalg{mbcss} invokes \mbcs to find the maximum balanced clique in current search region with adding size constraints: $min\{\overline{L} ,\overline{R}\} < \underline{\kappa}$ or   $max\{\overline{L} ,\overline{R}\} < \overline{\kappa}$ on each search branch.  For the next search region, we let $\underline{\kappa}=max\{\epsilon - \overline{\kappa},  k\}$ (line 5), since the size threshold $k$ is held for $C_L$ and $C_R$ as well. For the next value of $\overline{\kappa}$, $\overline{\kappa}=max\{Dec(\overline{\kappa}), \underline{\kappa}\}$ , where $Dec(\overline{\kappa})$ is used to decrease $\overline{\kappa}$. 
When $\overline{\kappa}<\underline{\kappa}$, let $\overline{\kappa}$=$\underline{\kappa}$. The search process finishes when $\overline{\kappa}=\underline{\kappa}$ and $\overline{\kappa}$ can not be reduced anymore, since if $\overline{\kappa}$ is unchanged, this search region is covered by the previous region already(line 3). When the search process of the final search region is finished, \refalg{mbcss} terminates and returns the maximum balanced clique $C^*$ in $G$.

%Regarding to $Dec(\overline{\kappa})$, there is a trade-off between $\omega$ and $SS_i$. If we decrease $\overline{\kappa}_i$ by a small number, $SS_i$ will be small as well. But, due to \refalg{mbcss} will keep running until  $\overline{\kappa}$=$\underline{\kappa}$,  $\omega$ will be large and vice versa.

%Here, we compare the search space between \mbcs and \mbcss. As shown at \reffig{ss}, since $C^*$ is the maximum result without any larger one, all balanced cliques searched by \mbcs below the line where $C^*$ is at \reffig{ss}. However, benefited from search space partition, \mbcss only search the local maximum balanced clique in each search region. In a result, the search space of \mbcss is much smaller than \mbcs .

% \begin{proof}
% \end{proof}

Then, we discuss how $Dec(\overline{\kappa})$ decreases the value of $\overline{\kappa}$. The total running time of \mbcss can be formulated as $T=\Sigma^{\omega}_{i=0}{t_i}$, where $\omega$ is the number of search regions and $t_i$ is the partial running time of the $i$-th search region.  In this paper, to keep the efficiency of \mbcss, we give a heuristic way to decrease $\overline{\kappa}$. In detail, if the last search region is time-consuming (set time threshold like 20 seconds), to make the total running time $T$ as small as possible, we reduce the amount of search regions $\omega$ by decreasing $\overline{\kappa}$ by a large value, i.e., $Dec(\overline{\kappa})=\lceil\frac{\overline{\kappa}}{2}\rceil$, otherwise, $Dec(\overline{\kappa})=\overline{\kappa}-2$. 

Now, we compare the search space between \mbcs and \mbcss. As shown at \reffig{ss}, since $C^*$ is the maximum balanced clique, \mbcs searches all balanced cliques with size less than $C^*$ until  $C^*$ is found. The search space of \mbcs is shown at \reffig{ss}. For the search space of \mbcss, benefited from the tight bounds for $|C_L|$ and $|C_R|$ at each search region, \mbcss searches local maximum balanced clique within small search space. For instance, as shown at \reffig{ss}, it finds the current maximum balanced clique $C^*_0$ in the first search region. Then, in the second search region, it searches balanced cliques with $\overline{\kappa}_1\le|C_R|<\overline{\kappa}_0$ and $|C_L|\ge \overline{\kappa}_1$ (assume $|C_L|\le |C_R|$) until $C^*_1$ is found. As shown at \reffig{ss}, the search space of \mbcss is much smaller than \mbcs.

\begin{example}
%unfortunately, there does not exist a valid balanced clique in the first search region.
Reconsidering the signed network $G$ in \reffig{community}, $k=2$, \refalg{mbcss} first computes degeneracy number $\sigma$ of $G^+$, get $\sigma=2$. So, the first search region is $(2, 3)$. \refalg{mbcss} find result $C^*_0=\{\{v_0, v_1, v_3\}, \{v_5, v_6,v_7\}\}$ at first, $\epsilon=6$. Then, based on \reflem{ssmbcs}, $\underline\kappa_1$ is 3. As we should keep $\overline\kappa \ge \underline\kappa$, $\overline\kappa_1$ is 3 as well. However, as the value of $\overline{\kappa}$ is unchanged, this search region is covered by the first search region. Hence, \refalg{mbcss} is terminated and returns $C^*_0$ as $C^*$. Comparing with \mbcs, the search space of \mbcss is reduced from $(2,2)$ to $(2,3)$.
\end{example}

\subsection{Optimization Strategies}
\label{sec:optimization2}
Regarding to the second drawback of \mbcs on massive invalid search branches in each search region, in this subsection, we explore the chance to further improve the efficiency of our approach. Under our rigorous lower bounds $\underline\kappa$ and $\overline\kappa$, we first propose two optimization strategies, coloring-based branch pruning and vertex domination-based  candidate pruning, to prune invalid search branches and remove meaningless vertices from candidates. Then, we extend the vertex$\&$edge reduction technologies of \kw{MBCE} to prune more unnecessary vertices and edges in advance.

\subsubsection{Coloring-based Branch Pruning}
\label{sec:color}
Given a search region $(\underline{\kappa},\overline{\kappa})$ and a search branch, if the upper bound of the balanced clique size in current search branch is less than the lower bounds $\underline{\kappa}$, $\overline{\kappa}$ and $\epsilon$, current search branch can be pruned directly. Looking back to \mbcs, it uses the candidates size to form the upper bound. However, this upper bound is too loose, because although the number of candidates is large, the connectivity between candidates maybe sparse, which will lead to many invalid search branches. Hence, now, we aim to  propose a tighter upper bound based on vertex coloring. 
 
 \begin{definition}
 \label{def:color}
 \textbf{(Vertex Coloring\cite{LOVASZ1989319})} Given a graph $G$, vertex coloring in $G$ aims to assign colors to each vertex such that vertices are different in color from their neighbors. The amount of colors needed in $G$ is named chromatic number, denoted by $\gamma(G)$. 
\end{definition}

\begin{lemma}
\label{lem:color}
Given a search branch ($C_L,C_R,P_L,P_R$), the maximum balanced clique from this branch is denoted as $C'=\{C'_L,C'_R\}$, we have that $|C'_L| \le \gamma(\mathbb{G}_L) +|C_L|$, $|C'_R| \le \gamma(\mathbb{G}_R) +|C_R|$, where $\mathbb{G}_{L(R)}$ is the positive subgraph produced by $P_{L(R)}$. 
	\end{lemma}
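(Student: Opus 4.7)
The plan is to reduce the claimed bound to the classical graph-theoretic inequality that the clique number of a graph is at most its chromatic number, applied to the positive subgraph $\mathbb{G}_L$ induced by $P_L$ (and symmetrically to $\mathbb{G}_R$ induced by $P_R$). Since the statement is symmetric in $L$ and $R$, I would prove the bound for $|C'_L|$ and then remark that the argument for $|C'_R|$ is identical after swapping the roles.

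First, I would unpack the semantics of the search branch as defined in Algorithm~\ref{alg:fmbc}: any balanced clique $C' = \{C'_L, C'_R\}$ that extends the current branch satisfies $C_L \subseteq C'_L$ and $C'_L \setminus C_L \subseteq P_L$, because $P_L$ is exactly the set of vertices still eligible to be appended to the left side while preserving the balanced-clique property. Let $X := C'_L \setminus C_L \subseteq P_L$.

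Second, I would show that $X$ forms a clique in $\mathbb{G}_L$. By Definition~\ref{def:balancecommunity}, every pair of vertices inside $C'_L$ is connected by a positive edge; restricting to pairs in $X \subseteq P_L$, these positive edges lie in the positive subgraph $\mathbb{G}_L$ induced by $P_L$. Hence $X$ is a clique of $\mathbb{G}_L$. Now I invoke the standard fact that in any graph $H$ the clique number is bounded above by the chromatic number $\gamma(H)$: in any proper coloring with $\gamma(\mathbb{G}_L)$ colors, each color class is an independent set and therefore contains at most one vertex of the clique $X$. Consequently $|X| \le \gamma(\mathbb{G}_L)$, giving
\[
|C'_L| \;=\; |C_L| + |X| \;\le\; |C_L| + \gamma(\mathbb{G}_L),
\]
and the symmetric argument applied to $C'_R$, $P_R$ and $\mathbb{G}_R$ yields $|C'_R| \le |C_R| + \gamma(\mathbb{G}_R)$.

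The only obstacle is a bookkeeping one rather than a conceptual one: I must verify that the additional vertices of $C'_L$ truly belong to $P_L$ (and not to some other candidate pool such as $Q_L$ or $P_R$), and that every pair of them is joined by a positive edge that sits inside $\mathbb{G}_L$ rather than being a cross-edge to $C_L$. Both facts follow directly from the invariants maintained by Algorithm~\ref{alg:fmbc} together with Definition~\ref{def:balancecommunity}, so no further machinery is required; the chromatic-number inequality then completes the argument.
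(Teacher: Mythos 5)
Your proposal is correct and follows essentially the same route as the paper: the paper's own proof is a one-line appeal to the classical fact that the chromatic number upper-bounds the clique number, applied to the positive subgraphs induced by $P_L$ and $P_R$. You simply fill in the bookkeeping (that $C'_L \setminus C_L \subseteq P_L$ forms a positive clique in $\mathbb{G}_L$) that the paper leaves implicit.
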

\begin{proof}
Given a graph $G$, the chromatic number $\gamma(G)$ is an upper bound of the maximum size of cliques in $G$\cite{DBLP:journals/pvldb/LuYWZ17}. Based on it, the lemma can be proved.
\end{proof}

Based on \reflem{color}, if the upper bound does not meet the size requirements of current search region, i.e., $min\{\gamma(\mathbb{G}_{L})+|C_L|,\gamma(\mathbb{G}_{R})+|C_R|\}<\underline{\kappa}$ or $max\{\gamma(\mathbb{G}_{L})+|C_L|,\gamma(\mathbb{G}_{R})+|C_R|\}<\overline{\kappa}$ or $\gamma(\mathbb{G}_{L})+|C_L|+\gamma(\mathbb{G}_{R})+|C_R|\leq \epsilon$, the search branch can be early terminated directly. 

\stitle{Algorithm of \kw{ColoringPrune}}. We propose \kw{ColoringPrune} algorithm to prune search branches. The pseudocode is shown at \refalg{color}. It first computes $\gamma(\mathbb{G}_L)$ and $\gamma(\mathbb{G}_R)$ (line 1-8). Then it returns \kw{true} if the upper bound does not meet the size requirements, which means this search branch can be pruned directly, otherwise, returns \kw{false} (line 9-11).

\begin{algorithm}[t]
\caption{\small $\kw{ColoringPrune}(C_L,C_R,P_L, P_R , \epsilon, \underline{\kappa},\overline{\kappa}$)}
{
\small
\begin{algorithmic}[1]
% \STATE $\mathbb{G}_L$ is reduced by $C_L,P_L$ with positive edges;
% \STATE $\mathbb{G}_R$ is reduced by $C_R,P_R$ with positive edges;
%\STATE $\mathbb{G}_L$ is produced by $P_L$ with positive edges;
%\STATE $\mathbb{G}_R$ is produced by $P_R$ with positive edges;
\STATE $\mathbb{G}_L \leftarrow G^+(P_L)$; $\mathbb{G}_R \leftarrow G^+(P_R)$;
\STATE $\gamma(\mathbb{G}_{L(R)})$ $\leftarrow 0$; $col(v)\leftarrow 0$ \textbf{for} \textbf{each} $v\in P_{L(R)}$;
\FOR{\textbf{each} $v \in P_{L(R)}$} 
\STATE $col(v) \leftarrow 1$;
\WHILE{$\exists u\in N_{\mathbb{G}_{L(R)}}(v)$, s.t., $col(u)=col(v)$}
\STATE $col(v) \leftarrow col(v)+1$; 
\ENDWHILE
 
\IF{$col(v)>\gamma(\mathbb{G}_{L})$}
\STATE $\gamma(\mathbb{G}_{L(R)})\leftarrow \gamma(\mathbb{G}_{L(R)})+1$;
\ENDIF 
\ENDFOR 

\IF{$min\{\gamma(\mathbb{G}_{L})+|C_L|,\gamma(\mathbb{G}_{R})+|C_R|\}<\underline{\kappa}$ or $max\{\gamma(\mathbb{G}_{L})+|C_L|,\gamma(\mathbb{G}_{R})+|C_R|\}<\overline{\kappa}$ or $\gamma(\mathbb{G}_{L})+|C_L|+\gamma(\mathbb{G}_{R})+|C_R|\leq \epsilon$ }
\STATE \textbf{return} \textbf{true};
\ENDIF

\STATE \textbf{return} \textbf{false};

\end{algorithmic}
}
\label{alg:color}
\end{algorithm}

\begin{theorem}
The space complexity of \refalg{color} is $O(|P_L|+|P_R|+|E_{\mathbb{G}_L}|+|E_{\mathbb{G}_R}|)$, the time complexity is $O(|P_L|+|P_R|+|E_{\mathbb{G}_L}|+|E_{\mathbb{G}_R}|)$.

\end{theorem}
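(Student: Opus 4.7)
\myproof The plan is to account separately for the storage used by \refalg{color} and for the work done by each of its phases, then observe that both add up to a quantity linear in the size of the two positive subgraphs $\mathbb{G}_L$ and $\mathbb{G}_R$.

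For the space bound, I would first note that the only persistent data structures the algorithm maintains are (i) the induced subgraphs $\mathbb{G}_L = G^+(P_L)$ and $\mathbb{G}_R = G^+(P_R)$ constructed in line~1, which occupy $O(|P_L|+|E_{\mathbb{G}_L}|)$ and $O(|P_R|+|E_{\mathbb{G}_R}|)$ space respectively under any standard adjacency representation, (ii) the color labels $col(v)$ stored in line~2, which need one integer per vertex, hence $O(|P_L|+|P_R|)$, and (iii) the two counters $\gamma(\mathbb{G}_L)$, $\gamma(\mathbb{G}_R)$, which take $O(1)$. Summing these contributions yields the claimed $O(|P_L|+|P_R|+|E_{\mathbb{G}_L}|+|E_{\mathbb{G}_R}|)$.

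For the time bound, I would break the work into three parts. Constructing the induced positive subgraphs in line~1 takes $O(|P_L|+|E_{\mathbb{G}_L}|+|P_R|+|E_{\mathbb{G}_R}|)$ by a single pass over the adjacency lists of the vertices of $P_L$ and $P_R$. The initialization loop of line~2 is $O(|P_L|+|P_R|)$. The main coloring loop (lines~3--8) assigns to each vertex $v$ the smallest positive integer not appearing among the already-colored neighbors of $v$ in $\mathbb{G}_{L(R)}$. The final predicate in line~9 is $O(1)$.

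The main obstacle is that the loop (lines~3--8) is written naively: for each trial color it rescans all neighbors of $v$, which, as written, could cost $\Theta(d(v)^2)$ per vertex. I would argue that this is only a description of the intended semantics; a standard implementation computes, in one pass over $N_{\mathbb{G}_{L(R)}}(v)$, the multiset of colors already used by $v$'s neighbors (e.g., using a scratch boolean array keyed by color, reset lazily), and then picks the smallest missing color. This makes the per-vertex cost $O(d_{\mathbb{G}_{L(R)}}(v))$. Summing by the handshake lemma gives $\sum_{v\in P_L} d_{\mathbb{G}_L}(v)+\sum_{v\in P_R} d_{\mathbb{G}_R}(v) = 2|E_{\mathbb{G}_L}|+2|E_{\mathbb{G}_R}|$. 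Adding the construction and initialization costs yields the claimed time complexity $O(|P_L|+|P_R|+|E_{\mathbb{G}_L}|+|E_{\mathbb{G}_R}|)$, completing the proof.
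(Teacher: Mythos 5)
Your proposal is correct. The paper states this theorem without any accompanying proof, so there is no argument of the authors' to compare against; your accounting --- adjacency storage plus one color label per vertex for space, and a single $O(d_{\mathbb{G}_{L(R)}}(v))$ pass per vertex for time --- is the standard and essentially only reasonable route. Your observation that lines 3--8 as literally written cost $\Theta(d(v)^2)$ per vertex, and that the bound only holds for the usual lazily-reset scratch-array implementation of greedy coloring, is exactly the point that needs to be made for the stated time complexity to be true, so including it strengthens rather than pads the argument. The only nit is that building $G^+(P_L)$ and $G^+(P_R)$ from the ambient graph strictly costs time proportional to the candidates' positive degrees in $G$ rather than $|E_{\mathbb{G}_L}|+|E_{\mathbb{G}_R}|$; this is harmless here because in the enumeration the candidate sets are already maintained as restricted neighborhoods, but it is worth a clause.
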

%\begin{proof}
%For the space complexity, \refalg{color} constructs two subgraphs $\mathbb{G}_L$ and $\mathbb{G}_R$, which consume $O(|P_L|+|E^+_{\mathbb{G}_L}|)$ space and $O(|P_R|+|E^-_{\mathbb{G}_R}|)$ space respectively.
%For the time complexity, constructing $\mathbb{G}_L$ and $\mathbb{G}_R$ consumes $O(|P_L|+|P_R|+|E_{\mathbb{G}_L}|+|E_{\mathbb{G}_R}|)$ time. Finding color for  vertex $v$ needs $O(N_{\mathbb{G}_{L/R}}(v))$ time and each vertex is computed once. As a result, the total time complexity  of \refalg{color} is $O(|P_L|+|P_R|+|E_{\mathbb{G}_L}|+|E_{\mathbb{G}_R}|)$.
%\end{proof}

Note that \refalg{color} can be directly applied to $\kw{MBCE}$ problem with $\underline{\kappa}=k$, $\overline{\kappa}=k$, $\epsilon=2k$. However, since $k$ is usually small, the effectiveness of \refalg{color} is limited in $\kw{MBCE}$ problem. 
%As $\overline{\kappa}$ and $\epsilon$ are much larger than $k$, \refalg{color} is more suitable for $\kw{MBCS}$ problem. 

 \subsubsection{Vertex Domination-based  Candidate Pruning}

 To further improve the efficiency, we reduce the number of candidates in $P_L$ and $P_R$ at each search branch by pruning invalid vertices from candidates. Our key thought is that if we have the prior knowledge to  know that vertex $v$ forms balanced clique with size larger than that of vertex $u$, then, $u$ is dominated by $v$, denoted by $u \in \Phi_v$,  the search relevant to $u$ can be skipped. Simplely, for each search branch, we can use the local neighborhood between candidates to figure out the domination relationship, i.e., if $N_l(u)\subseteq N_l(v)$, then $u \in \Phi_v$. Then, we have:

\begin{lemma}
\label{lem:baseopt}
Given a signed network $G$ and a search branch with candidate sets $P_L$ and $P_R$, for each vertex $v  \in P_L \cup P_R$, if $v$ is dominated, the sub-search branch from  $v$ can be skipped.
\end{lemma}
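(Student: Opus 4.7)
\myproof (Proposal) The plan is to show that, for any balanced clique $C = \{C_L, C_R\}$ reachable by first adding a dominated vertex $v$ to the current branch, there is a balanced clique $C' = \{C'_L, C'_R\}$ with $|C'| \geq |C|$ that is reachable from a different branch which is \emph{not} pruned; since \kw{MBCS} only requires the \emph{maximum} balanced clique (not an enumeration of all of them), safely discarding the branch from $v$ loses nothing.

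First, I would pin down the notion of local neighborhood used by the dominance test. Suppose $v \in P_L$ (the case $v \in P_R$ is symmetric by swapping sides). The ``local neighborhood'' $N_l(v)$ relevant to extending the current search state $(C_L, C_R, P_L, P_R)$ is the pair $\bigl(N^+_G(v)\cap P_L,\, N^-_G(v)\cap P_R\bigr)$, because those are exactly the candidates that survive after adding $v$ to $C_L$ (cf.\ line~17 of \refalg{fmbc}). I would then give the matching formal version of $u \in \Phi_v$: $u$ is dominated by $v$ iff both $N^+_G(u)\cap P_L \subseteq N^+_G(v)\cap P_L$ and $N^-_G(u)\cap P_R \subseteq N^-_G(v)\cap P_R$, together with $u$ and $v$ being positively adjacent (so that $u$ stays a candidate after adding $v$). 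A symmetric version covers the case where $u \in P_L$ is dominated by some $w \in P_R$ via sign-flipped neighborhoods.

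Second, I would argue as follows. Let $w$ dominate $v$ (both in $P_L$, say), and let $C = \{C_L \cup \{v\} \cup A_L,\, C_R \cup A_R\}$ be any balanced clique extending the branch that started by adding $v$, where $A_L \subseteq N^+_G(v) \cap P_L$ and $A_R \subseteq N^-_G(v)\cap P_R$. By the dominance hypothesis, $A_L \subseteq N^+_G(w)\cap P_L$ and $A_R \subseteq N^-_G(w)\cap P_R$, and $w$ is positively adjacent to $v$ hence to every element of $C_L \cup \{v\}$ it joins. A short calculation using \cdef{balancecommunity} then shows that
\[
C' \;=\; \{\,C_L \cup \{v,w\} \cup A_L,\; C_R \cup A_R\,\}
\]
is still complete (the only new edge is $(v,w) \in E^+$) and still balanced (every edge inside $C'_L$ is positive, every cross edge to $C'_R$ is negative, by the containment relations above). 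Thus $|C'| = |C|+1 \geq |C|$, and $C'$ is reachable by a different branch of \refalg{fmbc} that begins by adding $w$ first (and later, or along a sibling branch, $v$). The case where $w \in P_R$ dominates $v \in P_L$ is handled identically, except $w$ is added to $C_R$ and the positive/negative roles are swapped in the set inclusions; the resulting $C'$ again has size $\geq |C|$.

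Third, I would close the loop on search-tree coverage: the branch order in \refalg{fmbc} guarantees that every candidate gets a chance to be the \emph{first} vertex added on some branch, so the branch that produces $C'$ (starting by adding $w$) is indeed explored. The main subtlety---and the step I expect to require the most care---is the last one: ensuring that when we prune $v$'s branch we do not simultaneously prune all branches of the form that would discover the replacement $C'$. This has to be stated in terms of the evaluation order of the for-loops in lines~16 and~19 of \refalg{fmbc} (process non-dominated vertices first, or equivalently, only prune $v$ \emph{if} its dominator $w$ has not yet been moved to the excluded set $Q_L \cup Q_R$). With that ordering constraint made explicit, the domination rule preserves the maximum, which is all \kw{MBCS} requires. \eop
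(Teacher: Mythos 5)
You have correctly framed what must be shown---that pruning the dominated vertex's branch preserves some unpruned branch yielding an equally large balanced clique, which is all \kw{MBCS} needs---but the core construction is the wrong one, and under your own definitions it is vacuous. You define dominance as the neighborhood containments \emph{plus} $u$ and $v$ being positively adjacent, and then augment: $C' = C \cup \{w\}$. But the containment $N^+_G(u)\cap P_L \subseteq N^+_G(v)\cap P_L$ together with positive adjacency forces $v \in N^+_G(u)\cap P_L \subseteq N^+_G(v)$, i.e.\ a self-loop at $v$; so no pair of vertices satisfies your dominance relation and the augmentation argument never applies to anything. More importantly, the paper's dominance is pure local-neighborhood containment $N_l(u)\subseteq N_l(v)$ with no adjacency requirement, and in all four special cases the algorithm actually exploits (Cases 1--4), the dominated vertices are precisely the candidates that \emph{survive} the pivot filter $newP_L = P_L \setminus N^{+}_G(p)$, i.e.\ the \emph{non}-neighbors of the dominator $p$. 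For such pairs the dominator cannot coexist with the dominated vertex in any balanced clique, so your $C' = C\cup\{w\}$ is not complete and the claimed size gain $|C'|=|C|+1$ does not exist.

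The paper's proof is instead an exchange argument: for any balanced clique $C$ extending the current branch and containing the dominated vertex $u$, the containment $N_l(u)\subseteq N_l(v)$ guarantees that the dominator $v$ is compatible with every vertex of $C\setminus\{u\}$ (and $v\notin C$, since otherwise $v$ would lie in $N_l(u)\subseteq N_l(v)$, again a self-loop), so $(C\setminus\{u\})\cup\{v\}$ is a balanced clique of the \emph{same} size that avoids $u$ and is therefore reachable on a branch that is not pruned. Replacement, not augmentation, is the step you need; it yields only $|C_u|\le|C_v|$ rather than a strict increase, but that suffices. Your closing concern about loop order---ensuring the dominator's branch is not itself eliminated before it can produce the replacement clique---is a legitimate subtlety that the paper's one-line proof glosses over, but it does not repair the main construction.
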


\begin{proof}
Given a search branch, for vertices $u,v$ in candidates such that $u \in \Phi_v$, we use $C_v$ and $C_u$ to indicate the balanced cliques maintaining $v$ and $u$, respectively. Due to $N_l(u)\subseteq N_l(v)$, then $C_u\setminus\{u\}\subseteq C_v\setminus\{v\}$, i.e., $|C_u|\leq |C_v|$. Hence $u$ won't belong to any balanced clique larger than $C_v$. The search from $u$ can be skipped.
\end{proof}

\begin{figure}[t]
%\setlength{\abovecaptionskip}{20pt}
%\captionsetup[subfigure]{aboveskip=-1pt,belowskip=-1pt}
\begin{center}
%\subfigure[\small Case 1]{
%\label{fig:case1}
%\centering
%\includegraphics[width=0.49\columnwidth]{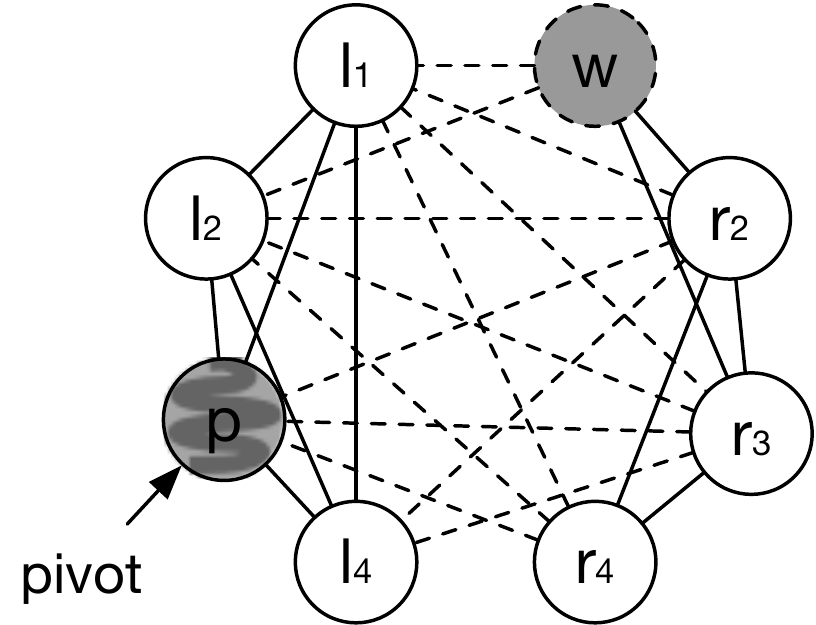}
%}\vspace{-0cm}\subfigure[\small Case 2]{
%\label{fig:case2}
%\centering
%\includegraphics[width=0.49\columnwidth]{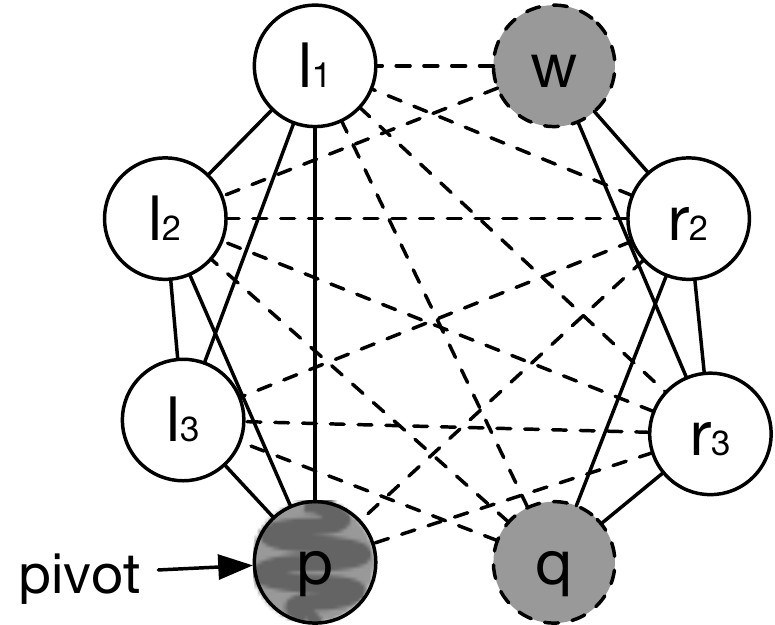}
%}\\

\subfigure[\small Case 1$\&$3]{
\label{fig:case3}
\centering
\includegraphics[width=0.49\columnwidth]{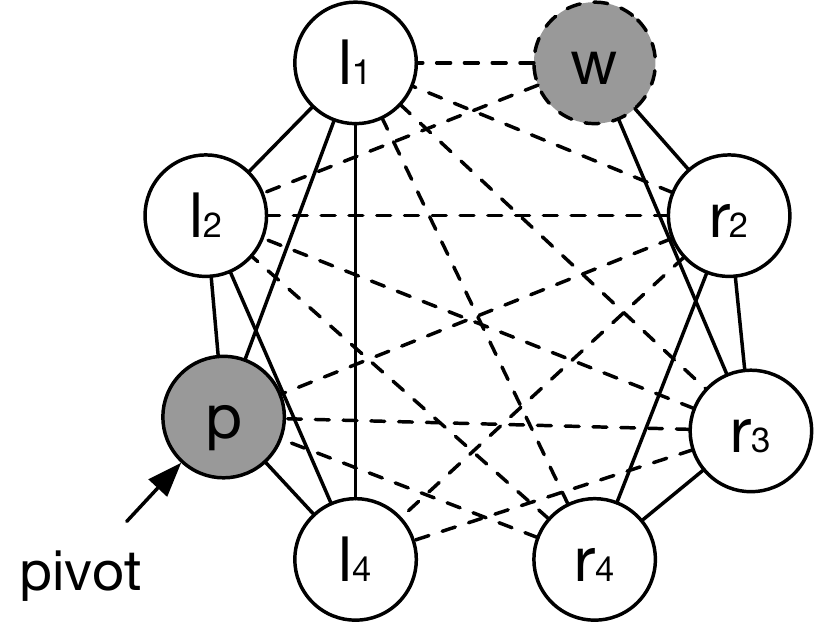}
}\vspace{-0cm}\subfigure[\small Case 2$\&$4]{
\label{fig:case4}
\centering
\includegraphics[width=0.49\columnwidth]{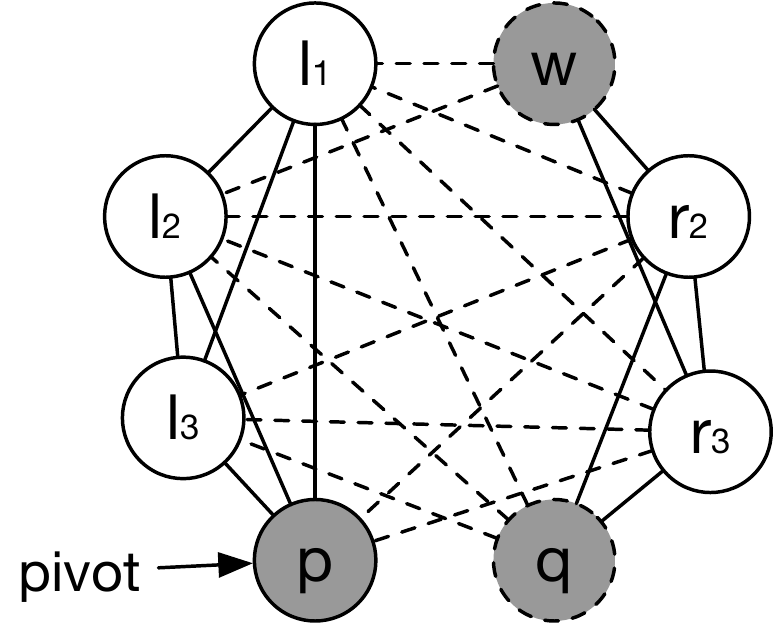}
}

\end{center}
\topcaption{Vertex Domination Cases}
\label{fig:special}
\vspace{-0.4cm}
\end{figure}

The vertex domination can be computed with adjacency list join operation. The time complexity of computing vertex domination at single search branch is $O(\mathrm{C_\mathbb{n}^2}\cdot(\mathbb{n}-1))=O(\mathbb{n}^3)$, where $\mathbb{n} =|P_L|+|P_R|$, which is time-consuming. Moreover, the amount of search branches is huge. To obtain the vertex domination effectively, in this paper, we only consider four special cases based on pivot technique, which can be computed within const time. The four special cases are introduced as follows, $p$ is the selected pivot.

\begin{itemize}
\item \textbf{Case 1:} If $p\in Q_L \cup Q_R$ and $newP_L\cup newP_R=\{w\}$, then $\Phi_p=\{w\}$.
 %\item  $p$ is $w$'s dominating vertex.

\item \textbf{Case 2:} If $p\in Q_L \cup Q_R$,  $newP_L\cup newP_R=\{w,q\}$ and $(w,q) \notin E$, then, $\Phi_p=\{w,q\}$.
%\item $p$ is $w$'s dominating vertex and $q$.

\item \textbf{Case 3:} If $p\in P_L \cup P_R$,  $newP_L\cup newP_R=\{p,w\}$, then, $\Phi_p=\{w\}$.

%\item \textbf{Case 4:} If $p\in P_L \cup P_R$ and $|newP_L|+|newP_R|=3$, then, for $w,q \in newP_L \cup newP_R \setminus \{p\}$, if $w,q$ from the same side and $(w,q) \notin E^+$ or $(w,q)$from different sides but $(w,q) \notin E^-$, $\Phi_p=\{w,q\}$.
\item \textbf{Case 4:} If $p\in P_L \cup P_R$,  $newP_L\cup newP_R=\{p,w,q\}$ and $(w,q) \notin E$, then, $\Phi_p=\{w,q\}$.

\end{itemize}

%\begin{example} 
\reffig{special} shows the four special cases of  vertex domination, respectively. For case 1$\&$3(\reffig{case3}), $p$ is selected as pivot, then, the surviving candidate is $w$ as other vertices are $p$'s neighbors. Since $N_l(w)\subseteq N_l(p)$, $w$ is dominated by $p$. For case 2$\&$4(\reffig{case4}), $w$ and $q$ are two surviving candidates with pivot $p$, as $(w,q)\notin E$, they will not appear at a common balanced clique, hence, $N_l(w)\subseteq N_l(p)$, $N_l(q)\subseteq N_l(p)$, $\Phi_p=\{w,q\}$. %\end{example}

Based on \reflem{baseopt}, when we meet the above four special cases of vertex domination, we skip the searches from vertices in $\Phi_p$ by deleting them from candidate sets, this process only consumes const time. In this way, the invalid candidates can be further pruned effectively.

\subsubsection{Vertex$\&$Edge Reduction Variants}
\label{sec:preenumeration2}

%Given a search region, \mbcs algorithm search balanced clique under two size thresholds $\underline\kappa$ and  $\overline\kappa$.
Although the optimization strategies proposed in \kw{MBCE} algorithm, like vertex reduction and edge reduction, are still applicable for \mbcs, they only ensure that $C_L$ and $C_R$ are not less than $k$ but lack of binding force of our new size bounds $\underline\kappa$, $\overline\kappa$ and $\epsilon$. Moreover, the value of $\overline\kappa$, $\underline\kappa$ and $\epsilon$ are much larger than $k$, especially for $\overline\kappa$ and $\epsilon$, which makes the effectiveness of the reduction optimizations is limited in \kw{MBCS} problem. Hence, we extend the vertex reduction and edge reduction such that they can support tighter bounds to prune more vertices and edges. 

 \stitle{Vertex reduction variant.}  
% Meanwhile, for each search branch, the size of candidates directly determines the search space produced by current branch. Therefore, 
% We first prune the invalid vertices in candidates by  considering the local degree of vertices in candidates. We have the following lemma:
We first propose the vertex reduction variant with considering the  degree of vertices. We have the following lemma:

\begin{lemma}
\label{lem:candidateprune}
Given a signed network $G=(V,E^+,E^-)$, a search region $(\underline{\kappa}_i,\overline{\kappa}_i)$ and $\epsilon$, $\mathbb{G}=(\mathbb{V},\mathbb{E^+},\mathbb{E^-})$ is a subgraph of $G$, s.t., $(a)\forall v \in \mathbb{V}, min\{d^+_{\mathbb{G}}(v)+1,d^-_{\mathbb{G}}(v)\} \geq \underline{\kappa}_i,  max\{d^+_{\mathbb{G}}(v)+1,d^-_{\mathbb{G}}(v)\} \geq \overline{\kappa}_i$; $(b)\forall v \in \mathbb{V},d^+_{\mathbb{G}}(v)+d^-_{\mathbb{G}}(v)+1 > \epsilon $.
%\end{itemize}
We have $C^*_i \subseteq  \mathbb{G}$.
\end{lemma}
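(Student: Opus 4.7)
\myproof (proposal)
My plan is to mirror the style of \reflem{balancecocre}: show directly that every vertex of $C^*_i$ already fulfils conditions (a) and (b) when we restrict attention to the subgraph induced by $C^*_i$ itself, and then use a standard peeling-invariance argument to conclude that none of these vertices can be discarded while forming $\mathbb{G}$. Intuitively, $\mathbb{G}$ is constructed (as in \kw{VertexReduction}) by iteratively deleting vertices that violate (a) or (b); if $C^*_i$ ``survives'' the restricted view, then it survives in $\mathbb{G}$ as well.

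First, without loss of generality I would assume $|C^i_L|\le |C^i_R|$, so that the definition of the $i$-th search region gives $|C^i_L|\ge \underline{\kappa}_i$ and $|C^i_R|\ge \overline{\kappa}_i$, and the fact that $C^*_i$ strictly improves on the current incumbent yields $|C^i_L|+|C^i_R|>\epsilon$. Next, I would do the degree accounting inside $C^*_i$. For any $v\in C^i_L$, the balanced-clique structure (\refdef{balancecommunity}) forces $v$ to be positively adjacent to the $|C^i_L|-1$ other vertices of $C^i_L$ and negatively adjacent to all $|C^i_R|$ vertices of $C^i_R$. Hence
\begin{equation*}
d^+_{C^*_i}(v)+1 \ge |C^i_L|,\qquad d^-_{C^*_i}(v) \ge |C^i_R|,
\end{equation*}
so $\min\{d^+_{C^*_i}(v)+1,d^-_{C^*_i}(v)\}\ge |C^i_L|\ge \underline{\kappa}_i$, $\max\{d^+_{C^*_i}(v)+1,d^-_{C^*_i}(v)\}\ge |C^i_R|\ge \overline{\kappa}_i$, and $d^+_{C^*_i}(v)+d^-_{C^*_i}(v)+1\ge |C^i_L|+|C^i_R|>\epsilon$. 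The symmetric case $v\in C^i_R$ gives the same bounds with the roles of $|C^i_L|$ and $|C^i_R|$ swapped inside the $\min/\max$, which still meets (a) and (b).

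Finally, I would close with a peeling argument. Suppose for contradiction that $C^*_i\not\subseteq\mathbb{G}$. Consider the sequence of vertex deletions that produces $\mathbb{G}$ from $G$ (as in the analogue of \kw{VertexReduction}), and let $v$ be the first vertex of $C^*_i$ removed. Immediately before this deletion, all other vertices of $C^*_i$ are still present, so the current-subgraph degrees of $v$ dominate its degrees inside $C^*_i$; by the computation above, $v$ still satisfies (a) and (b) at that moment, contradicting the reason for its removal. Hence every vertex of $C^*_i$ is retained in $\mathbb{G}$, and since $\mathbb{G}$ also keeps every edge whose endpoints are both retained, $C^*_i\subseteq \mathbb{G}$.

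The main obstacle I anticipate is the slight asymmetry between $d^+(v)+1$ and $d^-(v)$ in condition (a): the ``$+1$'' on the positive side comes from the fact that $v$ is itself in the same side of the bipartition, whereas the negative side counts the opposite side in full. The WLOG ordering $|C^i_L|\le |C^i_R|$ together with $\underline{\kappa}_i\le \overline{\kappa}_i$ has to be chased through both sub-cases ($v\in C^i_L$ and $v\in C^i_R$) so that the $\min$ and $\max$ are attributed to the correct side; everything else is routine.
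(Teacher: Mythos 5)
Your proposal is correct and follows essentially the same route as the paper's proof: both establish that every vertex $v$ of $C^*_i$ has $d^+_{C^*_i}(v)+1=|C^i_{L}|$ or $|C^i_{R}|$ and $d^-_{C^*_i}(v)$ equal to the size of the opposite side, then compare these against $\underline{\kappa}_i$, $\overline{\kappa}_i$ and $\epsilon$. The only difference is that you make explicit the final peeling-invariance step (first deleted vertex of $C^*_i$ would still satisfy (a) and (b), contradiction), which the paper leaves implicit; this is a welcome tightening rather than a divergence.
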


\begin{proof}
Given a search region $(\underline{\kappa}_i,\overline{\kappa}_i)$, let $C^*_i=\{C_L,C_R\}$ is the maximum balanced clique in current search region. Based on \refalg{mbcss}, $C^*_i=\{C_L,C_R\}$ should satisfy $min\{|C_L|,|C_R|\} \geq \underline{\kappa}_i$ and $max\{|C_L|,|C_R|\} \geq \overline{\kappa}_i$.  Meanwhile, based on \refdef{balancecommunity}, $\forall v\in C_L$, $d_{C^*_i}^+(v) = |C_L|-1$, $d_{C^*_i}^-(v) = |C_R|$. $\forall v\in C_R$, $ d_{C^*_i}^+(v) = |C_R|-1$, $d_{C^*_i}^-(v) = |C_L|$. Combining them, we get $min\{d^+_{C^*_i}(v)+1$, $d^-_{C^*_i}(v)\} \geq \underline{\kappa}$, $max\{d^+_{C^*_i}(v)+1$, $d^-_{C^*_i}(v)\} \geq \overline{\kappa}$.
% Here, \reflem{candidateprune}(a) can be proved. 
 Moreover, as $C^*_i$ is the current maximum balanced clique, the degree of vertices in  $C^*_i$ should not less than $\epsilon$.
%, \reflem{candidateprune}(b) can be proved as well.
\end{proof}

\stitle{Algorithm of \vertexreductionv.} Based on \reflem{candidateprune}, we can reduce the size of the candidate sets by continuously deleting the vertices that do not meet the degree constraints in \reflem{candidateprune}. We propose \vertexreductionv algorithm, the pseudocode is shown at \refalg{prunev}. It continuously removes vertices until all vertices left in candidate sets meet the degree constraints.  

\begin{algorithm}[t]
\caption{\small $\vertexreductionv(C_L,C_R,P_L, P_R, \epsilon, \underline{\kappa},\overline{\kappa}$)}
{
\small
\begin{algorithmic}[1]
% \STATE $\mathbb{G}$ is reduced by $C_L,C_R,P_L, P_R$;
\STATE $V' \leftarrow P_L\cup P_R$; $\mathbb{G} \leftarrow G(V')$ ;
\WHILE{$\exists v \in P_{L(R)}$, s.t. $min\{d^+_{\mathbb{G}}(v)+1+|C_{L(R)}|,d^-_{\mathbb{G}}(v)+|C_{R(L)}| \}<\underline{\kappa} $  or  $max\{d^+_{\mathbb{G}}(v)+1+|C_{L(R)}|,d^-_{\mathbb{G}}(v)+|C_{R(L)}|\} <\overline{\kappa} $ or $d^+_{\mathbb{G}}(v)+1+d^-_{\mathbb{G}}(v)+|C_L|+|C_R|\leq \epsilon$ }
\FOR{\textbf{each} $u \in N^+_{\mathbb{G}}(v)$}
\STATE $d^+_{\mathbb{G}}(u) \leftarrow d^+_{\mathbb{G}}(u) -1$;
\ENDFOR
\FOR{\textbf{each} $u \in N^-_{\mathbb{G}}(v)$}
\STATE $d^-_{\mathbb{G}}(u) \leftarrow d^-_{\mathbb{G}}(u) -1$;
\ENDFOR
\STATE $\mathbb{G} \leftarrow \mathbb{G} \setminus v$;
\STATE $P_{L(R)} \leftarrow P_{L(R)} \setminus v$;
\ENDWHILE
\end{algorithmic}
}
\label{alg:prunev}
\end{algorithm}

\begin{theorem}
The space complexity of \refalg{prunev} is $O(|P_L|+|P_R|+|E^+_{\mathbb{G}}|+|E^-_{\mathbb{G}}|)$.
The time complexity of \refalg{prunev} is $O(|P_L|+|P_R|+|E^+_{\mathbb{G}}|+|E^-_{\mathbb{G}}|)$, where $\mathbb{G}=G(P_L\cup P_R)$.
\end{theorem}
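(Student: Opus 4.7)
The plan is to analyze \refalg{prunev} as a peeling/reduction procedure, very much in the spirit of classical $k$-core decomposition, and to argue that a careful queue-based implementation achieves the stated linear bounds. Both the space and time claims are asserted with respect to the induced subgraph $\mathbb{G}=G(P_L\cup P_R)$, so I will keep the bookkeeping local to this subgraph throughout.

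For the space bound, I would first account for the explicit data structures. The induced subgraph $\mathbb{G}$ itself, stored as adjacency lists with the sign of each edge, takes $O(|P_L|+|P_R|+|E^+_{\mathbb{G}}|+|E^-_{\mathbb{G}}|)$ space. Alongside $\mathbb{G}$, we keep two integer counters $d^+_{\mathbb{G}}(v)$ and $d^-_{\mathbb{G}}(v)$ per vertex, plus a flag indicating whether $v\in P_L$ or $v\in P_R$ (since the threshold check differs on the two sides), contributing $O(|P_L|+|P_R|)$. The auxiliary queue used to drive the peeling adds $O(|P_L|+|P_R|)$. Summing these gives the claimed space.

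For the time bound I would rewrite the \textbf{while} loop of line~2 as a standard peeling loop: maintain a queue $Q$ of vertices currently violating at least one of the three conditions; initialize $Q$ by scanning every $v\in P_L\cup P_R$ once, which costs $O(|P_L|+|P_R|)$; then repeatedly pop $v$ from $Q$, walk its positive and negative adjacency lists to decrement $d^+_{\mathbb{G}}$ or $d^-_{\mathbb{G}}$ of each neighbour, push any neighbour that has just become violating into $Q$, and remove $v$ from $\mathbb{G}$. Since the three constraints in \reflem{candidateprune} are monotone under vertex deletion — no removal can repair a previously violated constraint — each vertex enters and leaves $Q$ at most once. Hence the amortized charge for processing a vertex $v$ is proportional to $d^+_{\mathbb{G}}(v)+d^-_{\mathbb{G}}(v)$ at the moment of its deletion, and summing over all deletions telescopes to $O(|E^+_{\mathbb{G}}|+|E^-_{\mathbb{G}}|)$. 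Adding the initial scan yields $O(|P_L|+|P_R|+|E^+_{\mathbb{G}}|+|E^-_{\mathbb{G}}|)$.

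The main subtlety I would want to be careful about is showing that the combined constraint in line~2 — a disjunction over $\min$-degree, $\max$-degree, and a sum-degree bound together with $|C_L|,|C_R|$ being fixed — remains monotone, so that a vertex that has been deemed violating and deleted cannot, after further deletions, regain eligibility; this is what justifies the ``each vertex is touched once'' argument. I would dispatch this by noting that $|C_L|$ and $|C_R|$ are invariant during the call, while $d^+_{\mathbb{G}}$ and $d^-_{\mathbb{G}}$ are non-increasing, so the left-hand sides of all three violating conditions only decrease. The rest is routine, and the constant-time per-edge accounting closes the argument.
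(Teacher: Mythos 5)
Your argument is correct, and it supplies exactly the justification the paper omits: the theorem is stated without any proof in the text, and the intended reasoning is precisely this standard queue-based peeling analysis (mirroring the earlier $O(n+m)$ claim for \kw{VertexReduction}). Your key observation---that the violating conditions in the \textbf{while} test are monotone under vertex deletion because $|C_L|$, $|C_R|$, $\underline{\kappa}$, $\overline{\kappa}$, and $\epsilon$ are fixed while $d^+_{\mathbb{G}}$ and $d^-_{\mathbb{G}}$ only decrease---is the one point that actually needs to be said, and you said it.
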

%\begin{proof}
%For the space complexity, \refalg{prunev} needs maintain a subgraph $\mathbb{G}$ which  consumes  $O(|P_L|+|P_R|+|E^+_{\mathbb{G}}|+|E^-_{\mathbb{G}}|)$ space.
% For the time complexity, in the worst case, each vertex in candidate sets is removed. When a vertex is removed, its neighbors' degree is updated.  Hence, it only needs linear time of $O(|P_L|+|P_R|+|E^+_{\mathbb{G}}|+|E^-_{\mathbb{G}}|)$
%
%
%\end{proof}

 \stitle{Edge reduction variant.} After the vertex reduction variant, we extend the edge reduction now. Inspired by the edge reduction technology utilized in \kw{MBCE}, we continue to explore the edge reduction under a certain search region. 
Reconsidering the edge common neighbor number introduced at \refdef{2hopdegree}, we have the following lemma:

\begin{lemma}
\label{lem:2hopdegree2}
Given a signed network $G$, a search region $(\underline{\kappa}_i,\overline{\kappa}_i)$ and $\epsilon$,  $\mathbb{G}=(\mathbb{V},\mathbb{E^+},\mathbb{E^-})$ is a subgraph of $G$, s.t., 
\begin{itemize}
      \item $\forall (u, v) \in \mathbb{E^+} \rightarrow min\{\delta^{++}_{\mathbb{G}}(u,v)+2,\delta^{--}_{\mathbb{G}}(u,v)\} \geq \underline{\kappa}_i \wedge max\{\delta^{++}_{\mathbb{G}}(u,v)+2,\delta^{--}_{\mathbb{G}}(u,v)\} \geq \overline{\kappa}_i$;
      \item $\forall (u, v)\in \mathbb{E^-} \rightarrow min\{\delta^{+-}_{\mathbb{G}}(u,v)+1,\delta^{-+}_{\mathbb{G}}(u,v)+1\}\geq \underline{\kappa}_i \wedge max\{\delta^{+-}_{\mathbb{G}}(u,v)+1,\delta^{-+}_{\mathbb{G}}(u,v)+1\}\geq \overline{\kappa}_i$.
\end{itemize}
Then, we have $C^*_i \subseteq  \mathbb{G}$
\end{lemma}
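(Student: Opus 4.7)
\myproof (proposal). The plan is to mirror the argument used for \reflem{2hopdegree}: I will verify that every edge of the maximum balanced clique $C^*_i=\{C_L,C_R\}$ already meets the four common-neighbor inequalities when counted inside $C^*_i$ alone, and then close the argument by a monotonicity/peeling observation showing those edges survive the analogue of \kw{EdgeReduction} that defines $\mathbb{G}$. Since $C^*_i$ is the balanced clique returned in search region $(\underline{\kappa}_i,\overline{\kappa}_i)$, the framework of \refalg{mbcss} guarantees $\min\{|C_L|,|C_R|\}\geq\underline{\kappa}_i$ and $\max\{|C_L|,|C_R|\}\geq\overline{\kappa}_i$; this is the only external input the proof needs.

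I would then split the edges of $C^*_i$ into three structural cases using \refdef{balancecommunity}. For a positive edge $(u,v)$ with $u,v\in C_L$, the balance condition forces the common positive neighbors of $(u,v)$ inside $C^*_i$ to be exactly $C_L\setminus\{u,v\}$ and the common negative neighbors to be exactly $C_R$, so $\delta^{++}_{C^*_i}(u,v)+2=|C_L|$ and $\delta^{--}_{C^*_i}(u,v)=|C_R|$. The symmetric case $u,v\in C_R$ swaps $L$ and $R$. For a negative edge $(u,v)$ with $u\in C_L, v\in C_R$, a common $(+,-)$ neighbor $w$ must be a positive neighbor of $u\in C_L$ and a negative neighbor of $v\in C_R$, which by balance forces $w\in C_L\setminus\{u\}$; symmetrically a common $(-,+)$ neighbor lies in $C_R\setminus\{v\}$. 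Hence $\delta^{+-}_{C^*_i}(u,v)+1=|C_L|$ and $\delta^{-+}_{C^*_i}(u,v)+1=|C_R|$.

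In all three cases the unordered pair of expressions appearing in the lemma is exactly the multiset $\{|C_L|,|C_R|\}$, so its minimum is $\geq\underline{\kappa}_i$ and its maximum is $\geq\overline{\kappa}_i$. Because $C^*_i\subseteq G$, the $G$-counters dominate the $C^*_i$-counters, so every edge of $C^*_i$ already satisfies the required inequalities in $G$. If one now iteratively removes edges that violate the bounds and decrements the counters (this is how the maximal such subgraph $\mathbb{G}$ is produced, in full analogy with \kw{EdgeReduction}), the edges of $C^*_i$ are never affected: at every stage the counter of an edge of $C^*_i$ is at least its value restricted to the still-intact $C^*_i$, which we have just shown meets both bounds. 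Therefore no edge of $C^*_i$ is ever peeled, giving $C^*_i\subseteq\mathbb{G}$.

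The main obstacle I expect is bookkeeping rather than depth: the statement intertwines four quantities whose roles as ``$|C_L|$'' or ``$|C_R|$'' flip depending on which side of the balanced clique the endpoints lie, and the $\min/\max$ wrappers must line up with the correct side. The key identity that keeps the case split clean is that in each of the three cases the resulting unordered pair coincides with $\{|C_L|,|C_R|\}$, after which the $\underline{\kappa}_i$/$\overline{\kappa}_i$ bounds transfer mechanically. This is the only place the new size-region parameters enter, so once the identity is recorded the rest of the proof is the same monotone peeling argument already used for \reflem{2hopdegree}.
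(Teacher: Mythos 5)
Your proof is correct. The paper actually states \reflem{2hopdegree2} without any proof, so there is nothing to diverge from; your argument is the natural one and mirrors exactly the style the paper uses for the neighbouring vertex-reduction variant (\reflem{candidateprune}): compute the counters exactly inside $C^*_i$ (obtaining the multiset $\{|C_L|,|C_R|\}$ in every case), invoke the search-region guarantees $\min\{|C_L|,|C_R|\}\geq\underline{\kappa}_i$ and $\max\{|C_L|,|C_R|\}\geq\overline{\kappa}_i$, and finish with monotonicity of the counters under the peeling that defines the maximal subgraph $\mathbb{G}$ (your reading of $\mathbb{G}$ as the \emph{maximal} such subgraph is the intended one, matching \reflem{2hopdegree}).
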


% \begin{proof}
%The lemma can be proved like \reflem{2hopdegree}.
% \end{proof}

\stitle{Algorithm of \edgereductionv.} Based on \reflem{2hopdegree2}, we propose \edgereductionv algorithm. Given a signed network $G=(V, E^+, E^-)$, a search region $(\underline{\kappa}_i,\overline{\kappa}_i)$ and $\epsilon$, before the search starts, it removes the invalid edges that do not meet the requirements for the edge common neighbor number in \reflem{2hopdegree2} until no more edges can be pruned. The pseudocode is omitted here. The time complexity of \edgereductionv is $O(m^{1.5})$.

\subsubsection{The Optimized Algorithm }
Utilizing the above optimization strategies, i.e., coloring-based branch pruning, vertex domination-based candidate pruning and vertex\&edge reduction variants, we propose our optimized algorithm \mbcsp  to search maximum balanced clique in a given search region. The pseudocode is shown at \refalg{mbcsp}. Given a search region $(\underline{\kappa}$,$\overline{\kappa})$, for each search branch in this region, it first prunes candidates in $P_L$ and $P_R$ by invoking \vertexreductionv algorithm (line 4). Then, the surviving candidate sets are judged to see whether it meets the size requirements (line 5-7). If the candidate sets and explored sets are both empty, we get a larger result and return it (line 8-11). Otherwise, \refalg{mbcsp}   tries to prune invalid branch by invoking \kw{ColoringPrune} (line 12-13). After that, it chooses the pivot and  distinguishes the four special cases for vertex domination to further prune candidates (line 18-27). Then, it continuously search larger balanced clique in the remaining candidates by recursively calling itself. When all search branches are finished, \refalg{mbcsp} can get the maximum balanced clique in the given search region.

\begin{algorithm}[t]
\caption{\small \mbcsp($G=(V, E^+, E^-), \epsilon$)}
{
\small
\begin{algorithmic}[1]

% \STATE $\epsilon \leftarrow 2k$;
 % $\overline{\epsilon} \leftarrow k$; $\underline{\epsilon} \leftarrow k$;

 \FOR {\textbf{each} $v \in V$}
% \STATE $C_L \leftarrow \{v_i\}$, $C_R \leftarrow \emptyset$
% \STATE $P_L \leftarrow N^+_G(v_i) \cap \{v_{i+1},\cdots,v_{n-1}\}$;
% \STATE $P_R \leftarrow N^-_G(v_i) \cap \{v_{i+1},\cdots,v_{n-1}\}$;
% \STATE $Q_L \leftarrow N^+_G(v_i) \cap \{v_0,\cdots,v_{i-1}\}$;
% \STATE $Q_R \leftarrow N^-_G(v_i) \cap \{v_0,\cdots,v_{i-1}\}$;
\STATE initialize $C_L, C_R, P_L, P_R, Q_L, Q_R$;

\STATE \mbcsup$(C_L, C_R, P_L, P_R, Q_L, Q_R,\epsilon)$;
 \ENDFOR

\vspace{0.1cm}
\hspace{-0.6cm} \textbf{Procedure} \mbcsup($C_L$, $C_R$, $P_L$, $P_R$, $Q_L$, $Q_R$, $\epsilon$)

\hspace{-0.2cm}//$~$\textbf{Vertex Reduction Varint$\quad$}
\STATE $\vertexreductionv(C_L, C_R, P_L, P_R, \epsilon, \underline{\kappa},\overline{\kappa})$; 

\STATE  $\overline{L} \leftarrow |C_L|+|P_L|$; $\overline{R}  \leftarrow |C_R|+|P_R|$;

\IF{$\overline{L} +\overline{R} \leq \epsilon$  or $min\{\overline{L} ,\overline{R}\} < \underline{\kappa}$ or  $max\{\overline{L} ,\overline{R}\} < \overline{\kappa}$}
% or $\overline{L} <k$ or $\overline{R} <k$ }

\STATE \textbf{return};
\ENDIF
\IF{$P_L=\emptyset$ and $P_R=\emptyset$ and $Q_L=\emptyset$ and $Q_R=\emptyset$}
% \IF{$|C_L|+|C_R|>\epsilon$}
\STATE $C^* \leftarrow \{C_L, C_R\}$;
\STATE $\epsilon \leftarrow |C_L|+|C_R|$; 
% $\underline{\epsilon} \leftarrow min\{|C_L|, |C_R|\}$; $\overline{\epsilon} \leftarrow max\{|C_L|, |C_R|\}$;
% \ENDIF
\STATE \textbf{return};
\ENDIF

%\STATE early termination of $\mbcenump$ (line 7-12 of \refalg{fmbcp});

\hspace{-0.2cm}//$~$\textbf{Coloring-based Branch Pruning$\quad$}
\IF{$\kw{ColoringPrune}(C_L,C_R,P_L, P_R , \epsilon, \underline{\kappa},\overline{\kappa})$}
\STATE \textbf{return};
\ENDIF

\STATE $p \leftarrow \kw{argmax}_{v \in P_L \cup P_R \cup Q_L \cup Q_R}\{d_l(v)\}$;
\STATE // assume $p$ from $P_L \cup Q_L$ ($P_R \cup Q_R$ ) 
\STATE $newP_L \leftarrow P_L \setminus N^{+(-)}_G(p)$;
\STATE $newP_R \leftarrow P_R \setminus N^{-(+)}_G(p)$;

\hspace{-0.2cm}//$~$\textbf{Vertex Domination-based Candidate Pruning$\quad$}

\hspace{-0.2cm}//$~$\textbf{Case 1$\quad$}
\IF{$p\in  Q_{L(R)}$ and $|newP_L|+|newP_R|=1$} 
\STATE \textbf{return};
\ENDIF

\hspace{-0.2cm}//$~$\textbf{Case 2$\quad$}
\IF{$p\in  Q_{L(R)}$ and $|newP_L|+|newP_R|=2$}
\IF{$w,q \in newP_L\cup newP_R$ and $(w,q) \notin E$}
%\IF{$w \in newP_L, q \in newP_R$ and $(w,q) \notin E^-$}
\STATE \textbf{return};
%\ElsIf{$w,q \in newP_{L(R)}$ and $(w,q) \notin E^+$}
%\STATE \textbf{return};
\ENDIF
\ENDIF

\hspace{-0.2cm}//$~$\textbf{Case 3$\quad$}
\IF{$p\in  P_{L(R)}$ and $|newP_L|+|newP_R|=2$}
\STATE $newP_{L(R)} \leftarrow \{p\} $; $newP_{R(L)}  \leftarrow \emptyset $;
\ENDIF

\hspace{-0.2cm}//$~$\textbf{Case 4$\quad$}
\IF{$p\in P_{L(R)}$ and $|newP_L|+|newP_R|=3$}
\IF{$w,q \in newP_{L}\cup newP_{R}$ and $(w,q) \notin E$}
\STATE $newP_{L(R)} \leftarrow  \{p\}$; $newP_{R(L)} \leftarrow \emptyset$;
%\IF{$w \in newP_L, q \in newP_R$ and $(w,q) \notin E^-$}
%\STATE $newP_L \leftarrow newP_L \setminus \{w\}$; $newP_R \leftarrow newP_R \setminus \{q\}$;
%\ElsIf{$w,q \in newP_{L(R)}$ and $(w,q) \notin E^+$}
%\STATE $newP_{L(R)} \leftarrow newP_{L(R)} \setminus \{w,q\}$;
\ENDIF
\ENDIF

\STATE search result within $newP_L$ and $newP_R$ as \mbcenum;
%line 18 of \mbcs with replacing \mbcsu with \mbcsup ;

% \FOR{\textbf{each} $v \in$ \kw{newP_L}}
% \STATE $\mbcsu(C_L \cup \{v\}, C_R, N^+_G(v) \cap P_L, N^-_G(v) \cap P_R, N^+_G(v) \cap Q_L, N^-_G(v) \cap Q_R)$;
% \STATE $P_L \leftarrow P_L \setminus \{v\}$; $Q_L \leftarrow Q_L \cup \{v\}$;
% \ENDFOR

% \FOR{\textbf{each} $v \in$ \kw{newP_R}}
% \STATE $\mbcsu(C_L, C_R \cup \{v\}, N^-_G(v) \cap P_L, N^+_G(v) \cap P_R, N^-_G(v) \cap Q_L, N^+_G(v) \cap Q_R)$;
% \STATE $P_R \leftarrow P_R \setminus \{v\}$; $Q_R \leftarrow Q_R \cup \{v\}$;
% \ENDFOR

\end{algorithmic}
}
\label{alg:mbcsp}
\end{algorithm}

% \begin{example}
% \end{example}

%\subsubsection{The optimized \mbcssp Algorithm}
Based on \mbcsp algorithm, we are ready to propose the formal optimized algorithm \mbcssp, the pseudocode is shown at \refalg{mbcssp}. Given a signed network $G=(V, E^+, E^-)$ and size threshold $k$, for each search region $(\overline{\kappa},\underline{\kappa})$, the algorithm first invokes \edgereductionv to reduce the graph size by removing invalid edges  before search start (line 4). Then,  it invokes \refalg{mbcsp} to search the maximum balanced clique in current search region (line 5). In the end, when finish the search in all search regions, it gets the maximum balanced clique $C^*$ in $G$ and terminates.

\begin{algorithm}[t]
\caption{\small \mbcssp($G=(V, E^+, E^-),k$)}
{
\small
\begin{algorithmic}[1]
\STATE compute degeneracy $\sigma$ of $G^+=(V, E^+)$;
\STATE $\epsilon \leftarrow 2k$; $\underline{\kappa} \leftarrow k$;  $\overline{\kappa} \leftarrow \sigma + 1$; $\overline{\kappa'} \leftarrow -1$;
%\While{$\overline{\kappa}\geq \underline{\kappa}$ and $\overline{\kappa}!=\overline{\kappa'}$ }
\WHILE{$\overline{\kappa}\geq \underline{\kappa}$ 
and $\overline{\kappa}<\overline{\kappa}'$ }

\STATE $G' \leftarrow \edgereductionv(G, \epsilon, \underline{\kappa},\overline{\kappa}$);

\STATE \mbcsp$(G',\epsilon)$;

%\STATE \textbf{if} $\underline{\kappa}=\overline{\kappa}$ \textbf{then}  \textbf{break};

\STATE  $\overline{\kappa'} \leftarrow \overline{\kappa}$;
 $\underline{\kappa} \leftarrow max\{\epsilon - \overline{\kappa}, k\}$;  $\overline{\kappa} \leftarrow max\{\lceil\frac{\overline{\kappa}}{2}\rceil, \underline{\kappa}\}$;

\ENDWHILE

\end{algorithmic}
}
\label{alg:mbcssp}
\end{algorithm}

\section{Performance Studies}
\label{sec:performance}
\newskip\subfigcapskip \subfigcapskip = -4pt
In this section, we present our experimental results.All the experiments are performed  on a machine with two Intel Xeon 2.2GHz CPUs and 64GB RAM running CentOS 7.

\stitle{Algorithms.}
 We evaluate \kw{MBCE} algorithms and \kw{MBCS} algorithms.

For \kw{MBCE} algorithms, they are 
 \baseline, \kw{MBCEnum} and \kw{MBCEnum^*}.  \baseline is the baseline solution shown in  \refsec{baseline}. \kw{MBCEnum} is our algorithm shown in  \refsec{improveapproch}. \kw{MBCEnum^*} is the algorithm with the in-enumeration optimization shown in  \refsec{localpruning}. Note that the pre-enumeration optimization strategies can be also used in  \baseline and \kw{MBCEnum}, thus, we apply them for all three algorithms for fairness. 

For \kw{MBCS} algorithms, they are \mbcs, \mbcss and \mbcssp. \mbcs is the baseline approach introduced at \refsec{baseline2}. \mbcss is proposed at \refsec{spacesplit}. \mbcssp  is the improved algorithm shown at \refsec{optimization2}. Note that, for fair,  we apply the \edgereductionv proposed at \refsec{preenumeration2} to both  \mbcss and \mbcssp.

All algorithms are implemented in C++, using  g++ complier with -O3. The time cost is measured as the amount of wall-clock time elapsed during the program's execution.  If an algorithm cannot finish in 12 hours, we denote the processing time as \kw{INF}. 
% For reproducibility, the code is anonymously released and can be downloaded \footnote{\url{https://drive.google.com/file/d/18L538NMN7xDO6Xu0CMa2oSYmx07K5vZ5/view?usp=sharing}}.

%\subsection{Experiments on Real Datasets}
\stitle{Real datasets.} We evaluate our algorithms on nine real  datasets. \kw{Slashdot} and \kw{Epinions} are signed networks in real world. AdjWordNet, \kw{DBLP} and \kw{Douban} are signed networks used in \cite{chu2016finding}, \cite{li2018efficient} and \cite{xu2012towards}, respectively. For other datasets, we transfer them from unsigned network to signed network. In order to simulate the balanced clique as much as possible, the vertices in the graph are divided into two groups with a ratio of 4:1, the edges connecting vertices from the same group are positive edges, otherwise, they are negative edges. In this way, all the cliques in the original graph correspond to  balanced cliques in the signed graph. For data source,  \kw{AdjWordNet} is downloaded from WordNet (\url{https://wordnet.princeton.edu/}). \kw{DBLP} and \kw{Dbpedia} are downloaded from KONECT (\url{http://konect.cc/}).  \kw{Douban} is from authors in \cite{xu2012towards}. Other datasets are downloaded from SNAP (\url{http://snap.stanford.edu}). The details of each dataset are shown in Table \ref{tab:dataset}.
%, and combined with the characteristics of large positive edges and less negative edges in the real sighed graph,
%Then, we evaluate our MBCS algorithms on both small datasets and large datasets. 

\begin{table}
\topcaption{Statistic for real datasets}
\label{tab:dataset}
\centering
\def\arraystretch{1.1}
\setlength{\tabcolsep}{0.30em}
{\small
\begin{tabular}{c|c|c|c|c}
\hline
 Dataset&  $n$ &  $m$&  $|E^+|$ &  $|E^-|$ \\%&$d^+_{max}$&$d^-_{max}$\\
\hline
 \kw{AdjWordNet} & 21,247&426,896&378,993&47,903\\
 \kw{Slashdot} &77,357&516,575&396,378&120,197\\
 \kw{Epinions} &131,828&841,372&717,667&123,705\\
%  \kw{Youtube}&1,134,890&2,987,624&2,041,101&946,523\\
 \kw{DBLP} &1,314,050&5,179,945&1,471,903&3,708,042\\
%  \kw{Skitter}&1,696,415&11,095,298&7,535,667&3,559,631\\
 \kw{Douban} &1,588,565&13,918,375&9,034,537&4,883,838\\
%\hline
%\end{tabular}
%}
%\vspace{-0.4cm}
%\end{table}
%
%\begin{table}
%\topcaption{Statistic for large datasets}
%\label{tab:dataset2}
%\centering
%\def\arraystretch{1.1}
%\setlength{\tabcolsep}{0.50em}
%{\small
%\begin{tabular}{c|c|c|c|c}
%\hline
% Dataset&  $n$ &  $m$&  $|E^+|$ &  $|E^-|$ \\%&$d^+_{max}$&$d^-_{max}$\\
%\hline
 \kw{Pokec}&1,632,803&30,622,564&15,179,203&7,122,761\\
  \kw{Livejournal}&4,847,571&42,851,237&29,105,031&13,746,206\\
 \kw{Orkut}&3,072,441&117,184,899&79,664,169&37,520,730\\
  \kw{Dbpedia}&18,268,992&126,890,209&86,002,736&40,887,473\\
 \hline
\end{tabular}
}
\vspace{-0.4cm}
\end{table}

\subsection{The Performance of MBCE Algorithms}

\begin{figure}[t]
%\setlength{\abovecaptionskip}{20pt}
%\captionsetup[subfigure]{aboveskip=-1pt,belowskip=-1pt}
\begin{center}
\subfigure[\kw{Slashdot} (vary $k$)]{
\label{fig:time1}
\centering
\includegraphics[width=0.49\columnwidth]{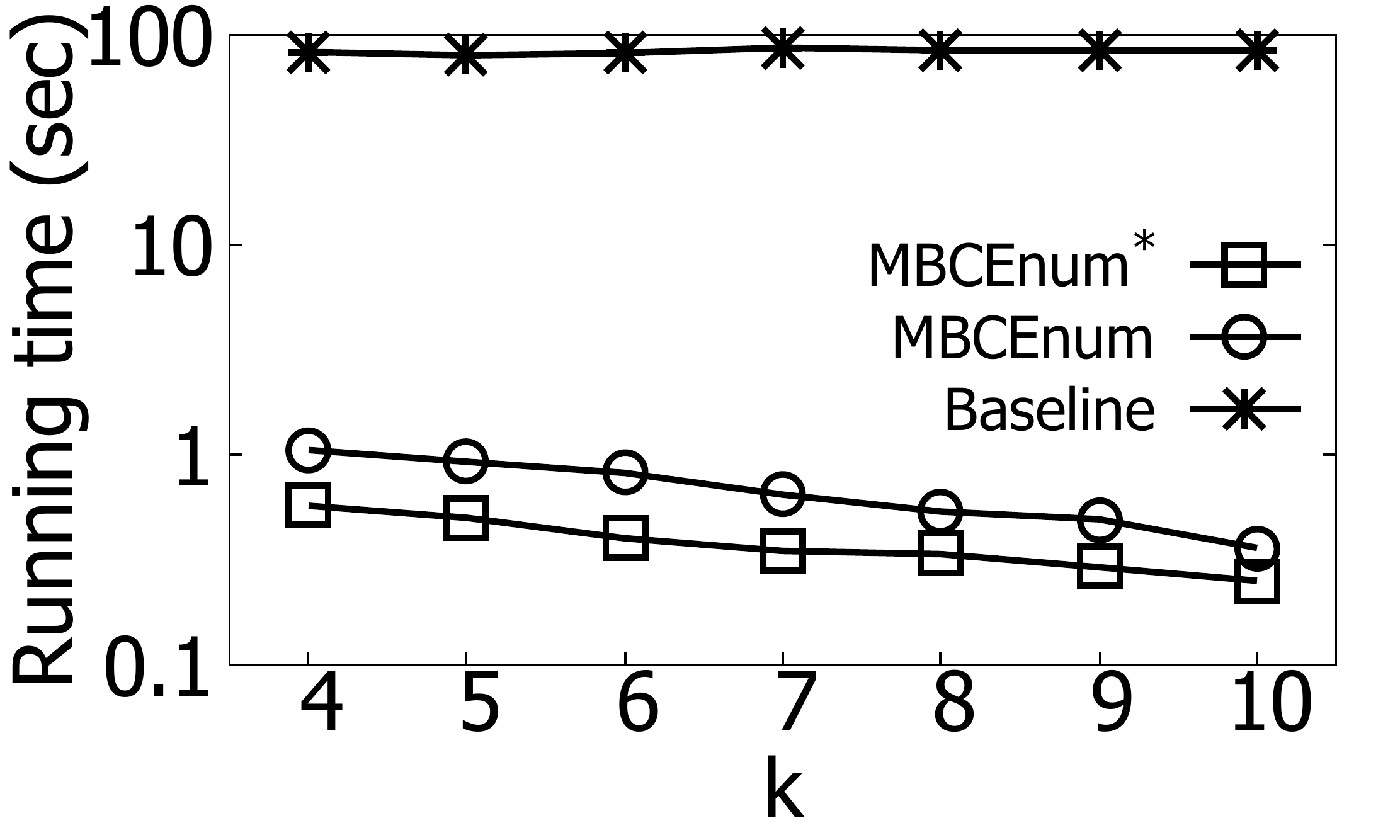}
}\vspace{-0.2cm}\subfigure[\kw{Epinions} (vary $k$)]{
\label{fig:time2}
\centering
\includegraphics[width=0.49\columnwidth]{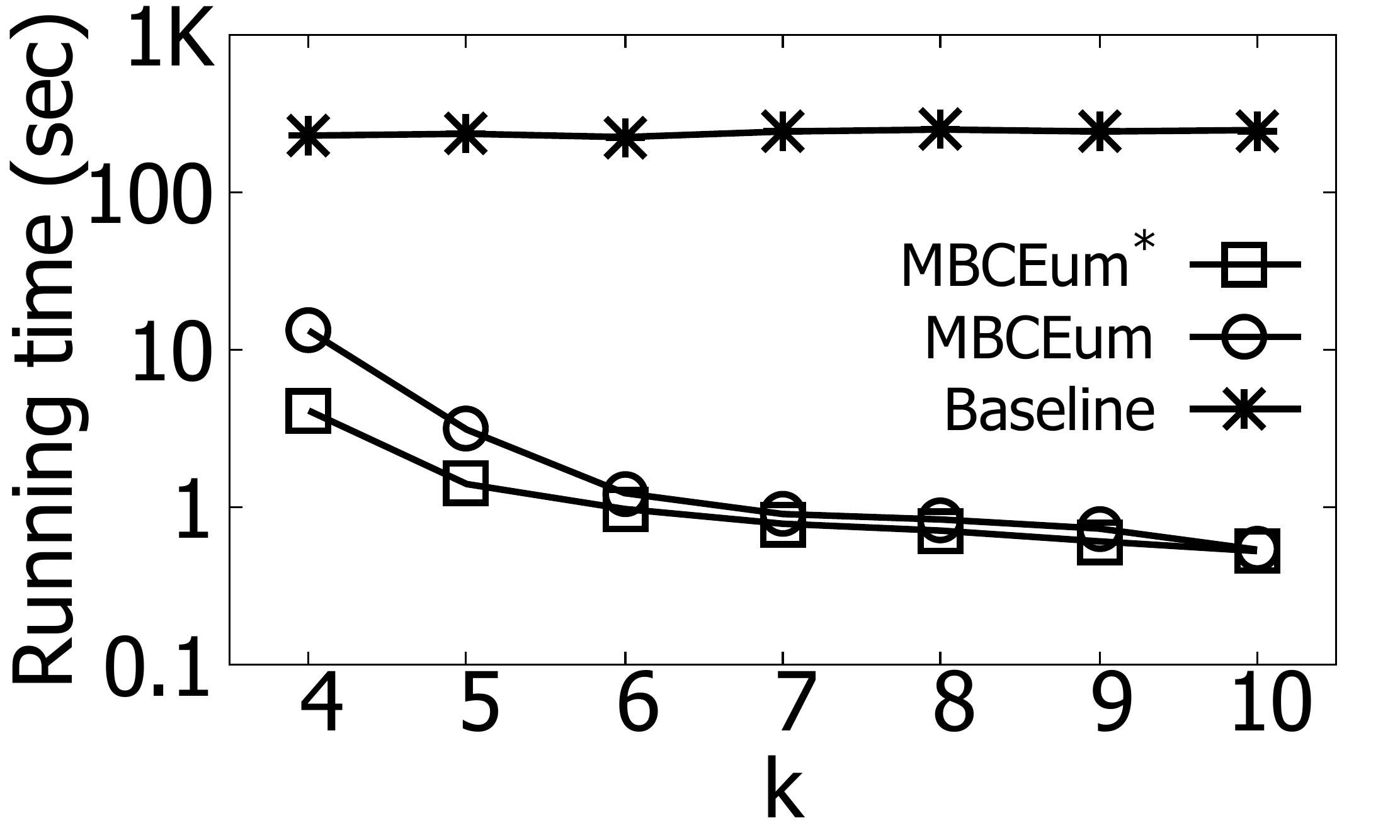}
}\\
% \subfigure[Wiki (Vary $k$)]{
% \label{fig:time3}
% \centering
% \includegraphics[width=0.49\columnwidth]{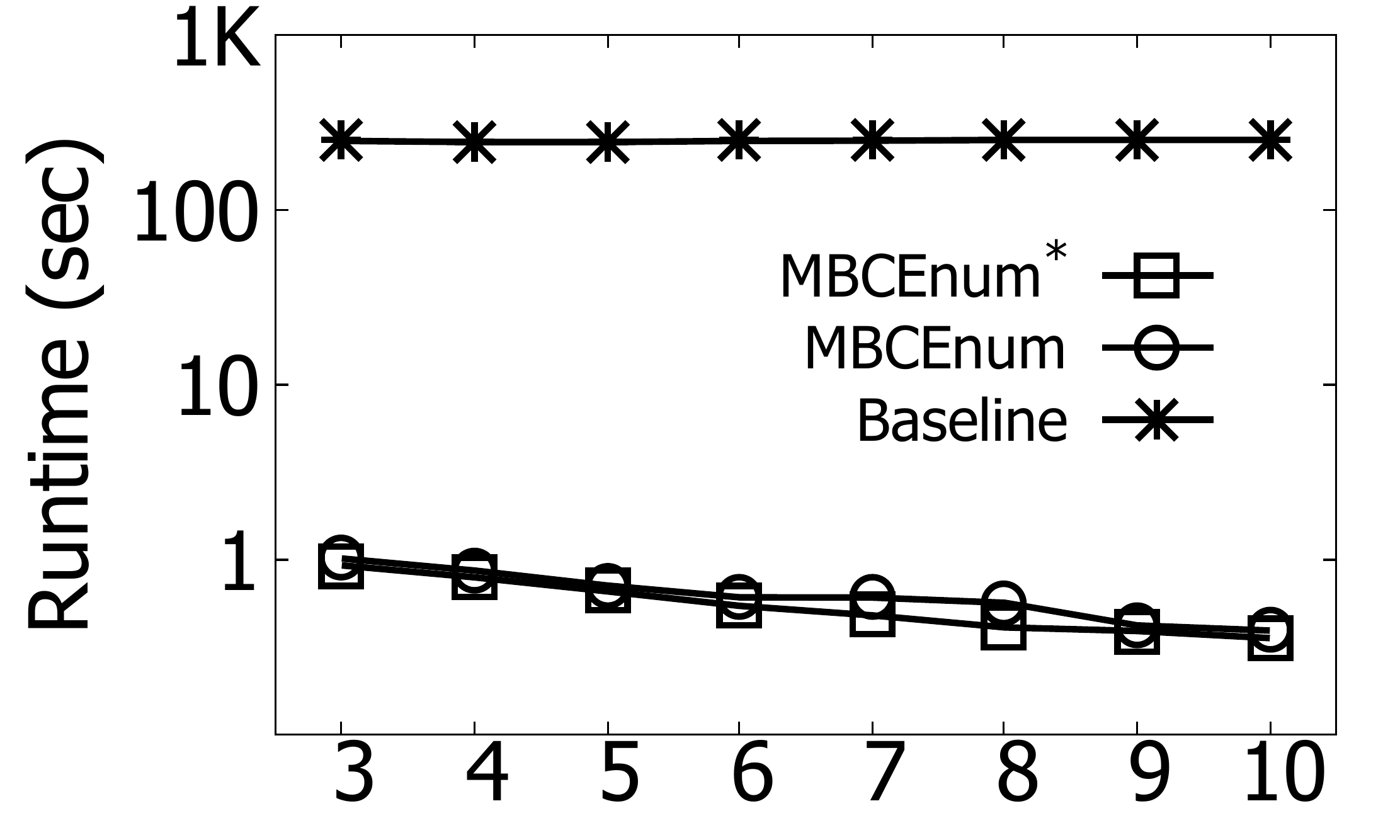}
% }\vspace{-0.2cm}
\subfigure[\kw{DBLP} (Vary $k$)]{
\label{fig:time5}
\centering
\includegraphics[width=0.49\columnwidth]{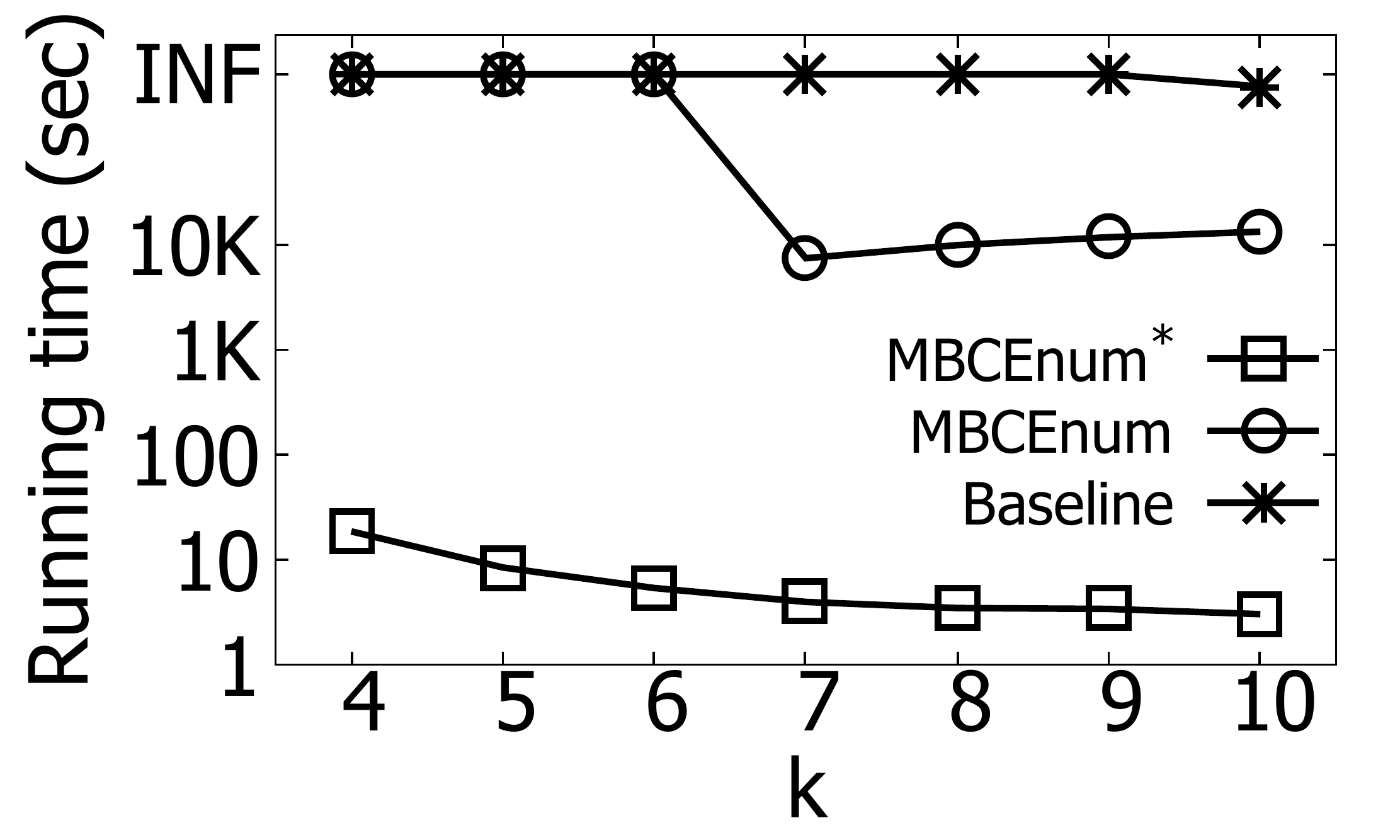}
}\vspace{-0.2cm}\subfigure[\kw{Douban} (Vary $k$)]{
\label{fig:time6}
\centering
\includegraphics[width=0.49\columnwidth]{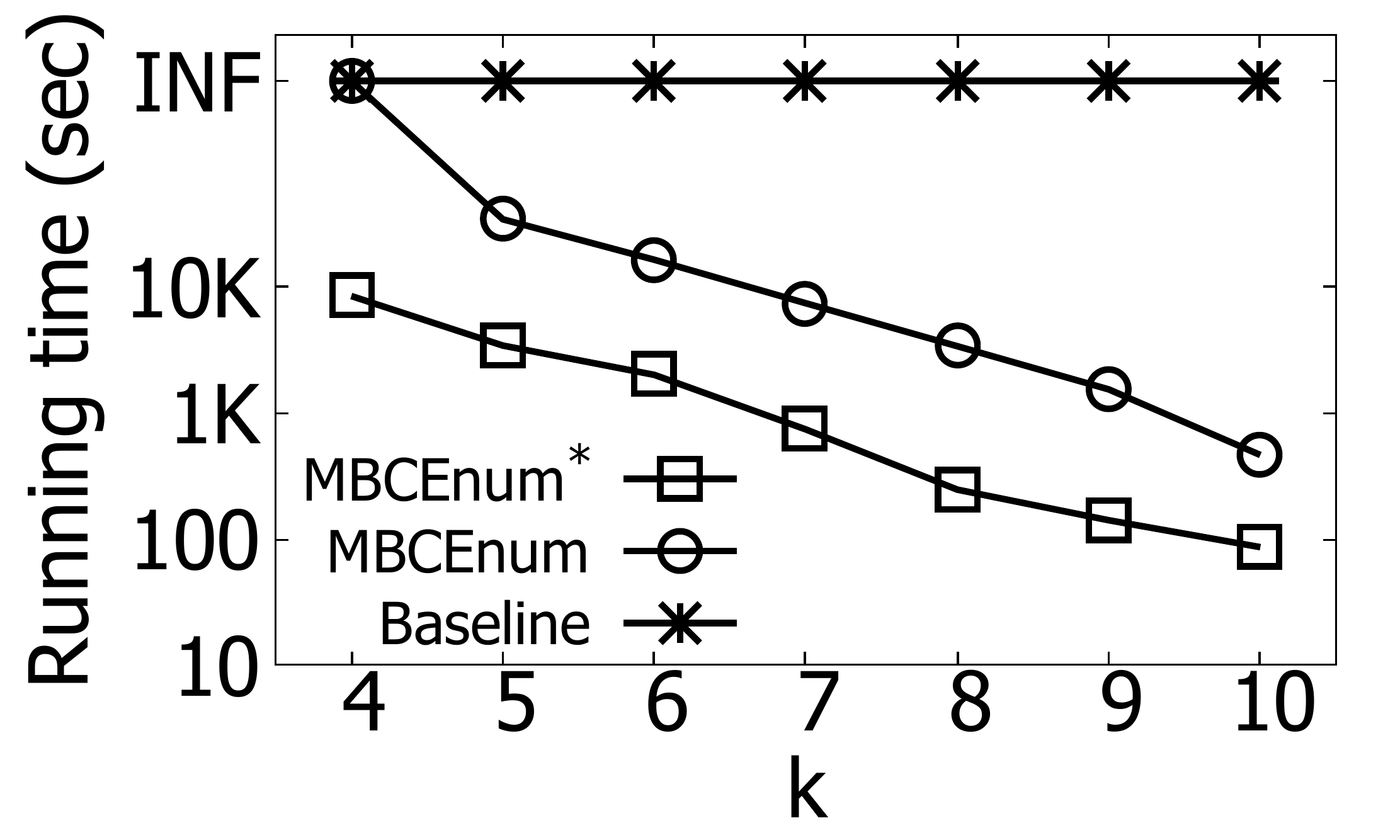}
}
% \vspace{-0.2cm}\subfigure[Youtube (Vary $k$)]{
% \label{fig:time4}
% \centering
% \includegraphics[width=0.49\columnwidth]{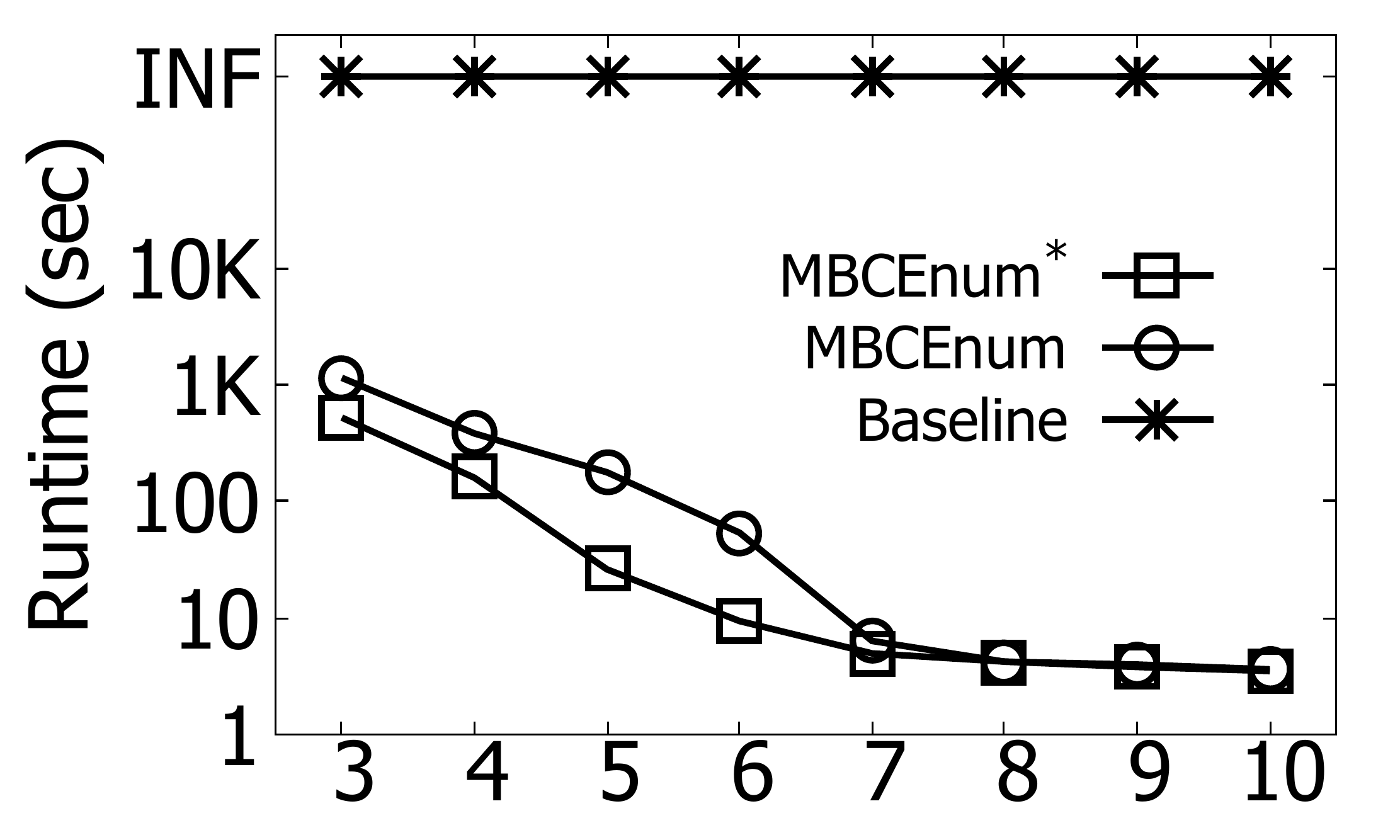}
% }\\
% \subfigure[Pokec (Vary $k$)]{
% \label{fig:time7}
% \centering
% \includegraphics[width=0.49\columnwidth]{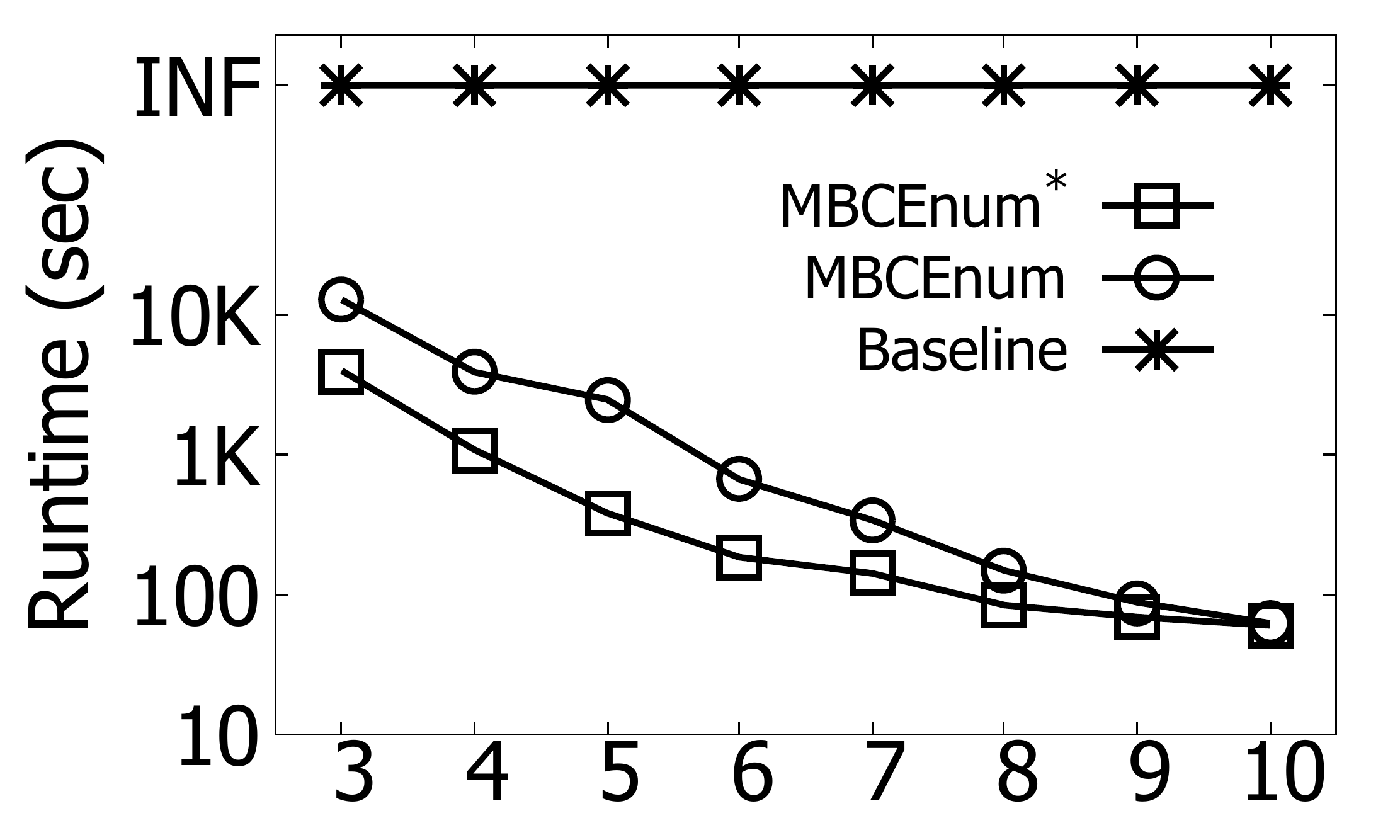}
% }
% \vspace{-0.2cm}
% \subfigure[Orkut (Vary $k$)]{
% \label{fig:time8}
% \centering
% \includegraphics[width=0.49\columnwidth]{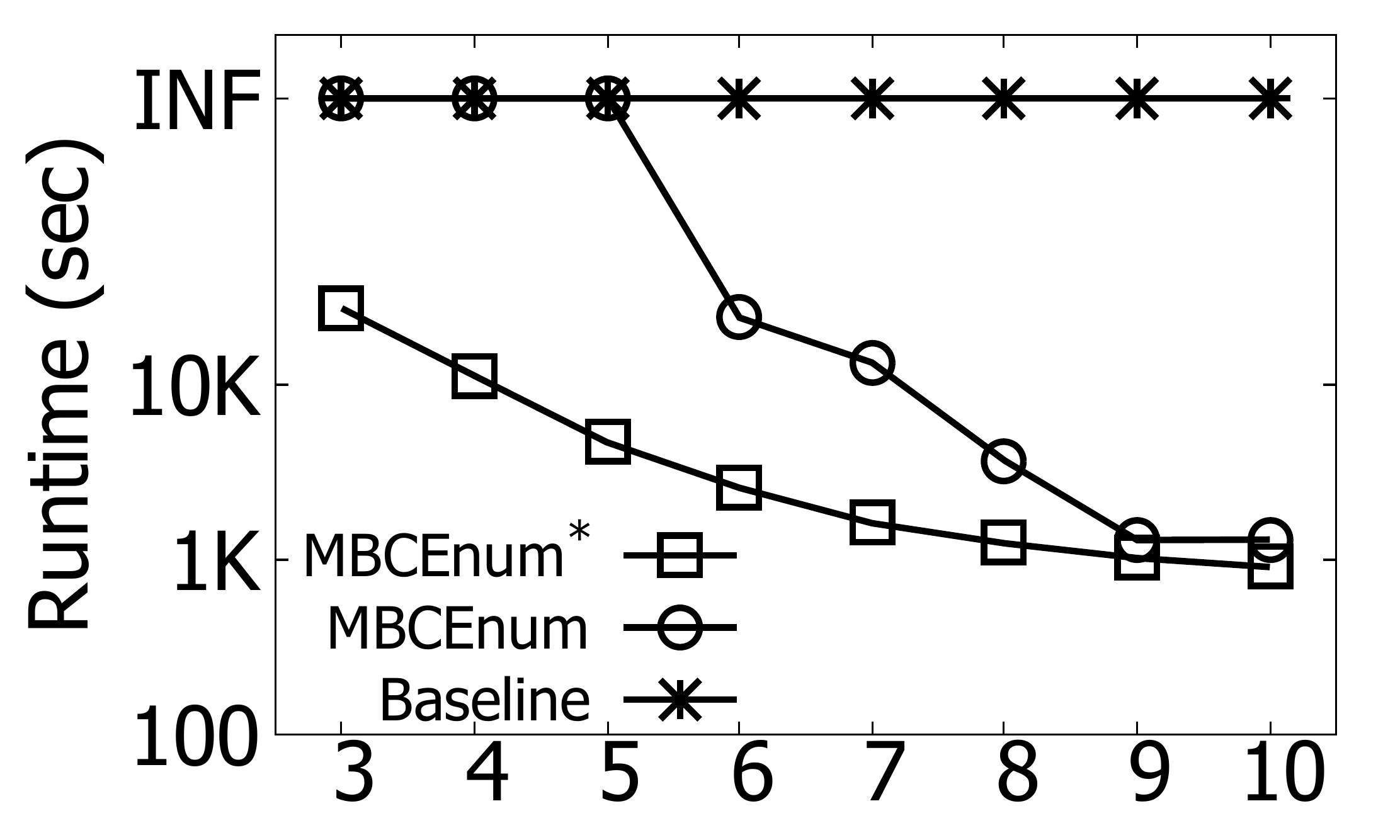}
% }
\end{center}
\topcaption{Running time of  \kw{MBCE} algorithms varying $k$}
\label{fig:totaltime}
\vspace{-0.2cm}
\end{figure}

\stitle{Exp-1: Efficiency  of  \kw{MBCE} algorithms when varying $k$.} In this experiment, we evaluate the efficiency of three algorithms when varying $k$ from 4 to 10 and the results are shown in \reffig{totaltime}.

As shown in \reffig{totaltime}, \baseline consumes the most  time among  three algorithms on all datasets when we vary $k$ and it can only handle the small datasets. 
%For example, on \kw{Slashdot} (\reffig{totaltime} (a)), \kw{MBCEnum} and \kw{MBCEnum^*} are at least two orders of magnitude faster than \baseline. On \kw{Douban} (\reffig{totaltime} (d)), \baseline cannot finish the enumeration in 12 hours. This is because \baseline does not consider the uniqueness of the signed networks and lots of unnecessary computations are involved in the enumeration of \baseline. 
\kw{MBCEnum} is faster than \baseline on most of the test cases as \kw{MBCEnum} takes the uniqueness of the signed networks  into consideration and enumerates the maximal balanced cliques based on the signed network directly. \kw{MBCEnum^*} is the most efficient algorithm on all datasets when varying $k$ due to the utilization of in-enumeration optimization strategies, which reveals the effectiveness of in-enumeration optimization strategies. Another phenomena shown in \reffig{totaltime} is that the running time of all algorithms decreases as $k$ increases. This is because as $k$ increases, the pruning power of the optimization strategies proposed in \refsec{optimization} strengthens. %Consequently, more unnecessary computations are avoided.

\begin{figure}[t]
\begin{center}
\subfigure[\kw{DBLP} (Vary $n$)]{
\includegraphics[width=0.49\columnwidth]{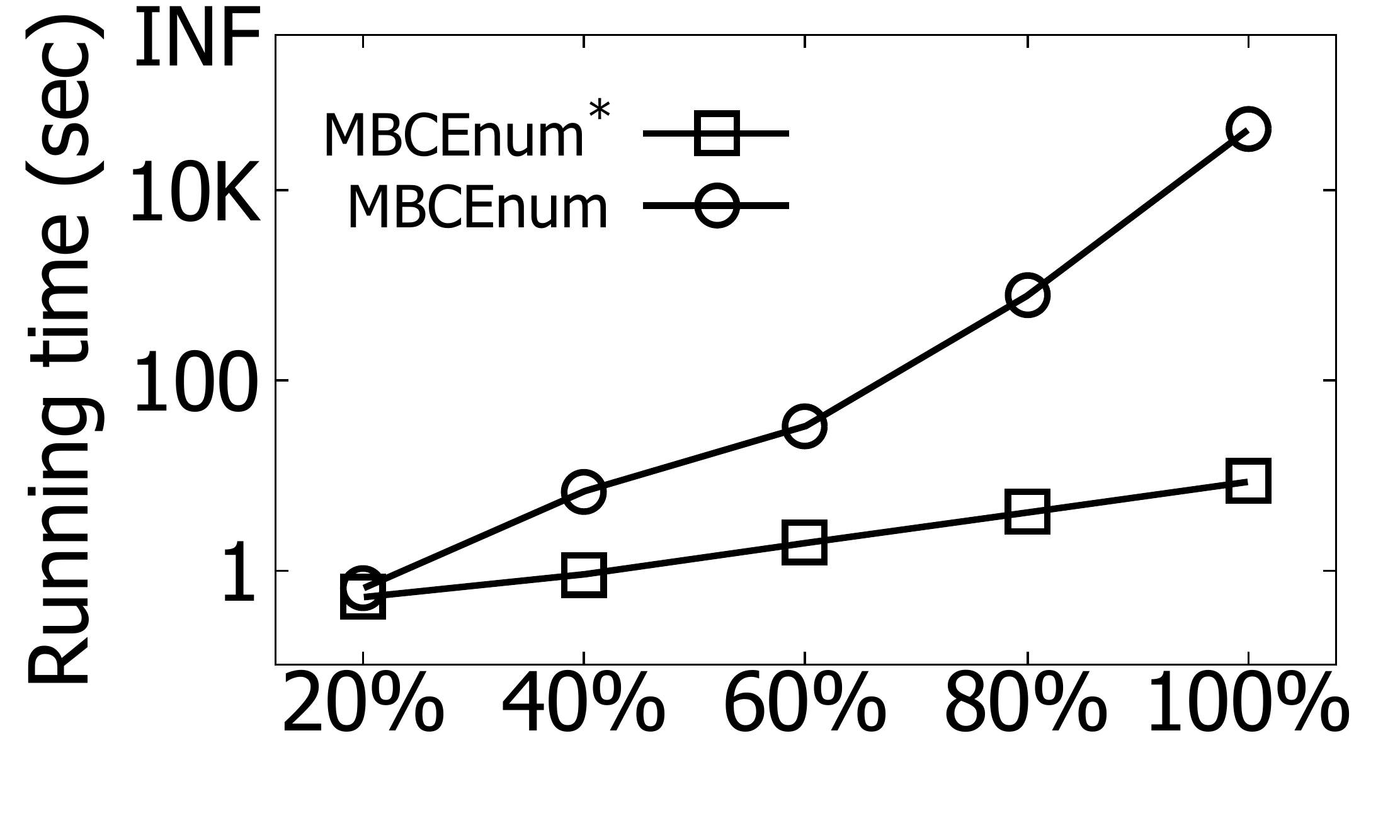}
}\subfigure[\kw{Douban} (Vary $n$)]{
\includegraphics[width=0.49\columnwidth]{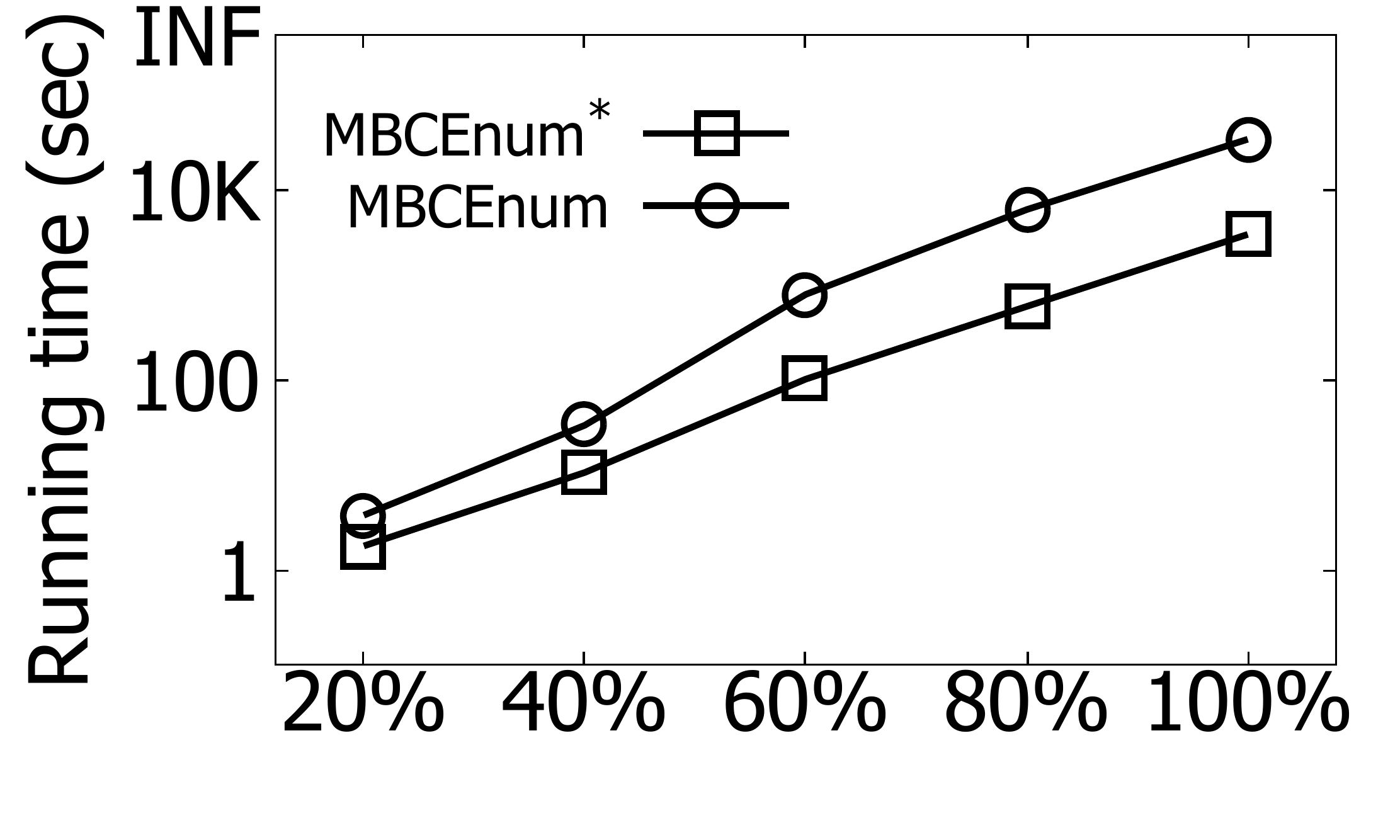}
}
\end{center}
%\vspace{-0.2cm}
\topcaption{Scalability of \kw{MBCEnum} and \kw{MBCEnum^*}, $k$=4}
\label{fig:scala}
%\vspace{-0.1cm}
\end{figure}

%\begin{figure}[t]
%\subfigure[Pokec]{
%\includegraphics[width=0.49\columnwidth]{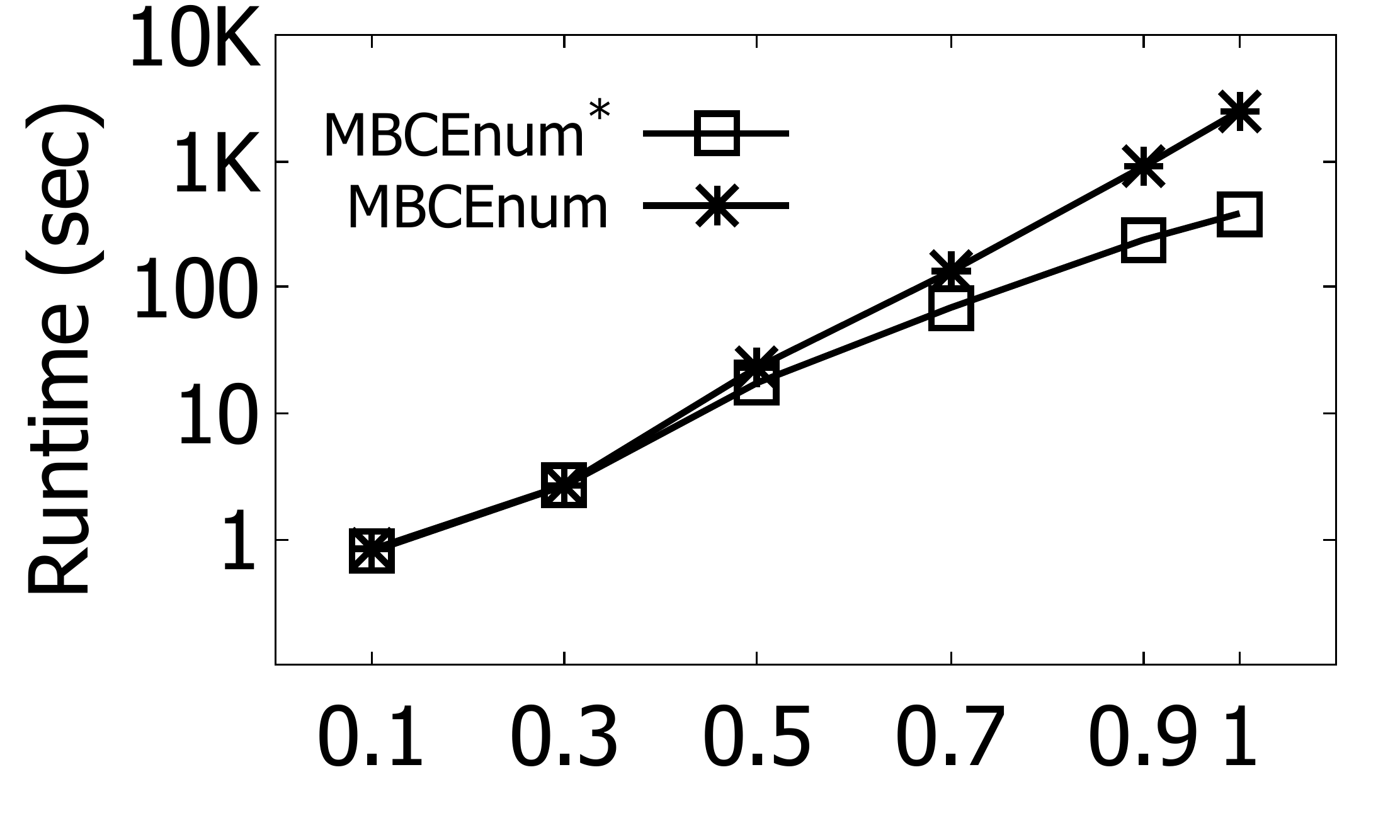}
%}\subfigure[Orkut]{
%\includegraphics[width=0.49\columnwidth]{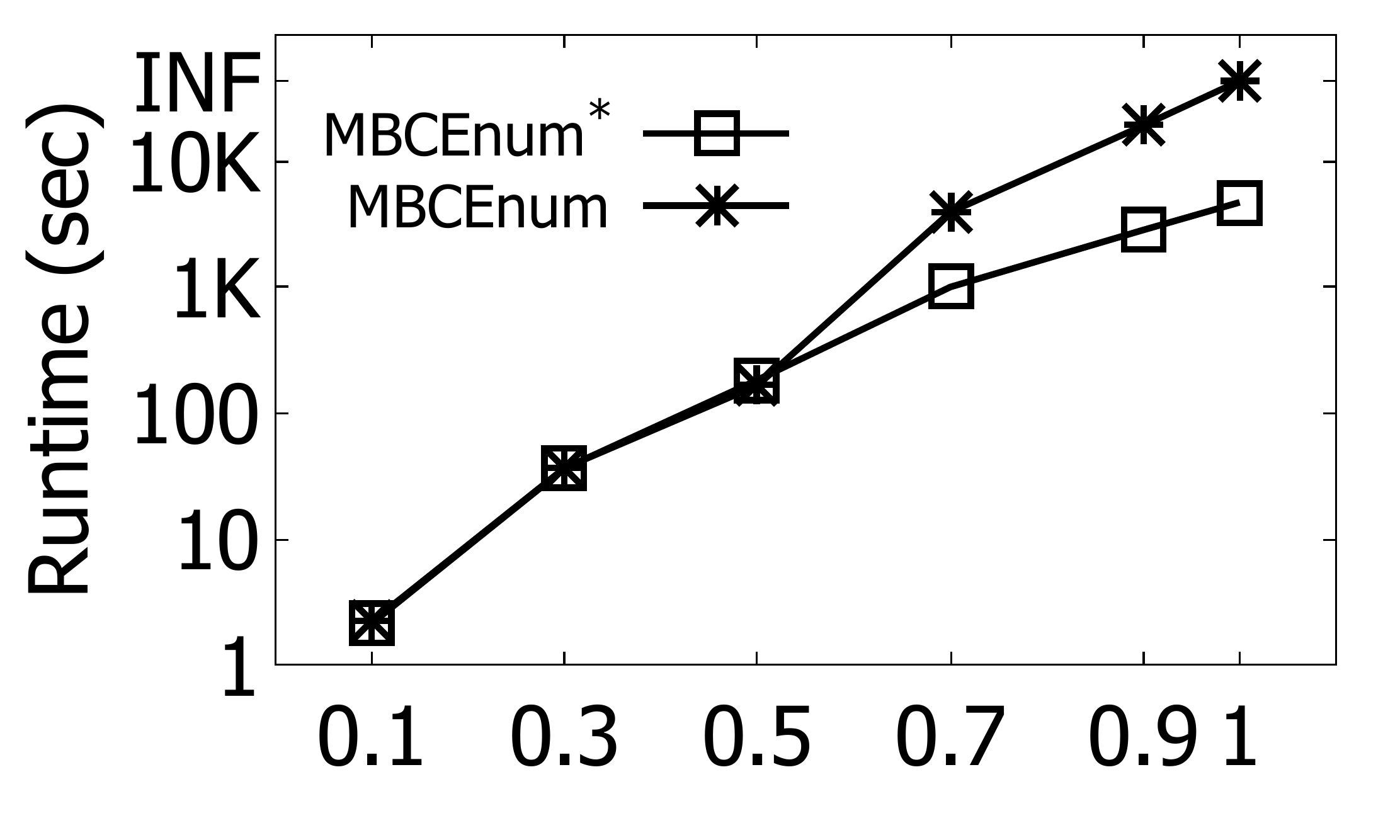}
%}
%\topcaption{Scalability of \kw{MBCEnum} and \mbcenump ($k=5$)}
%\label{fig:scala}
%\end{figure}

\stitle{Exp-2: Scalability of \kw{MBCE} algorithms.} In this experiment, we test the scalability of \kw{MBCEnum} and \kw{MBCEnum^*} on two large datasets \kw{DBLP} and \kw{Douban}  by varying their vertices from 20\% to 100\%. \reffig{scala} shows the results.

As shown in \reffig{scala}, when $n$ increases, the running time of both algorithms increases as well, but \kw{MBCEnum^*} outperforms \kw{MBCEnum} for all cases on both datasets. For example, on \kw{DBLP}, when we sample $20\%$ vertices, the running time of \kw{MBCEnum} and \kw{MBCEnum^*} is 0.6 seconds and 0.5 seconds, respectively, while when sampling $80\%$ vertices,  their running times are 770.6 seconds and 4.0 seconds, respectively. It shows that \kw{MBCEnum^*} has a good scalability in practice.

\begin{table}[t]
\topcaption{Case study on \kw{AdjWordNet}}
\label{tab:case}
\centering
\def\arraystretch{1.1}
\setlength{\tabcolsep}{0.66em}
\begin{tabular*}{\linewidth}{ p{0.45\linewidth}| p{0.45\linewidth}}
\hline
    $C_L$&    $C_R$\\%&$d^+_{max}$&$d^-_{max}$\\
\hline
raw, rough, rude&refined, smooth, suave\\
\hline
relaxing, reposeful, restful&restless,uneasy, ungratified,

unsatisfied\\
\hline
interior, internal, intimate&away, foreign, outer, outside,

 remote\\
\hline
assumed, false, fictitious, fictive, mistaken, off-key, pretended, put-on, sham, sour, untrue&actual, existent, existing, factual, genuine, literal, real, tangible, touchable, true, truthful, unfeigned, veridical\\
\hline
active, animated, combat-ready, dynamic, dynamical, fighting, participating, alive, live
 &adynamic, asthenic, debilitated, enervated, undynamic, stagnant, light\\
\hline
following, undermentioned, next&ahead, in-the-lead, leading, preeminent, prima, star, starring, stellar\\
\hline
undesirable, unsuitable, unwanted&cherished, treasured, wanted, precious\\
\hline
\end{tabular*}
\vspace{-0.4cm}
\end{table}

\stitle{Exp-3: Case study on AdjWordNet.} In this experiment, we perform a case study on the real dateset \kw{AdjWordNet}. In this dataset, two synonyms have a positive edge and two antonyms have a negative edge, and \reftable{case} shows some results obtained by our algorithm. As shown in \reftable{case}, words in $C_L$ or $C_R$ have similar meaning while each word from $C_L$ is an antonym to all words in $C_R$. This case study verifies that  maximal balanced clique enumeration can be applied in the applications to find synonym and antonym groups on dictionary data.

\subsection{The Performance of MBCS Algorithms}
%\label{sec:performance2}
% \newskip\subfigcapskip \subfigcapskip = -4pt
\stitle{Exp-4: Efficiency of \kw{MBCS} algorithms when varying $k$.}
To evaluate the efficiency of \kw{MBCS} algorithms, we record the running time of them on eight datasets, $k=[2-10]$, the results are shown at \reffig{totaltime2}.

As shown at \reffig{totaltime2}, with the value of $k$ increasing, the running time of three algorithms decreases on most datasets. For each value of $k$, \mbcssp is the fastest algorithm of the three algorithms, while \mbcs is the most time-consuming. 
%For instance, on Douban, when $k$=6, the running time of \mbcs, \mbcss, \mbcssp is 1285.2s, 632.6s and 191.7s, respectively. 
Moreover, on large graphs as Livejournal, Orkut and Dbpedia, when $k$=2, \mbcs and \mbcss can not get the result within a reasonable time, only \mbcssp can get the result. The running time of \mbcssp on the three graphs are 413.4s, 3626.0s and 1833.6s, respectively. It's because that \mbcs has to search on the whole graph, while \mbcss uses search space partition to search the maximum balanced clique within a partial subgraph. Based on \mbcss, benefited from multiple optimization strategies for further reducing the search space, \mbcssp is the most efficient algorithm of them, it can efficiently search result on all datasets.

%on Youtube, as shown at \reffig{time33}, when $k$=2，the running time of three algorithms is 220.9s,126.3s and 14.4s, respectively. \mbcssp can obtain speedup of an order of magnitude to \mbcs. 

%As shown at \reffig{totaltime2}, Another trend of running time is, with the value of $k$ increasing, the running time of three algorithms decreases on most datasets. Because with larger value of $k$, more useless edges and vertices can be pruned by the optimization strategies related to $k$ during the search process, which lead to smaller search space and shorter computing time.

\begin{figure}[htp]
%\setlength{\abovecaptionskip}{20pt}
%\captionsetup[subfigure]{aboveskip=-1pt,belowskip=-1pt}
\begin{center}

\subfigure[ \kw{Slashdot} (Vary $k$)]{
\label{fig:time11}
\centering
\includegraphics[width=0.49\columnwidth]{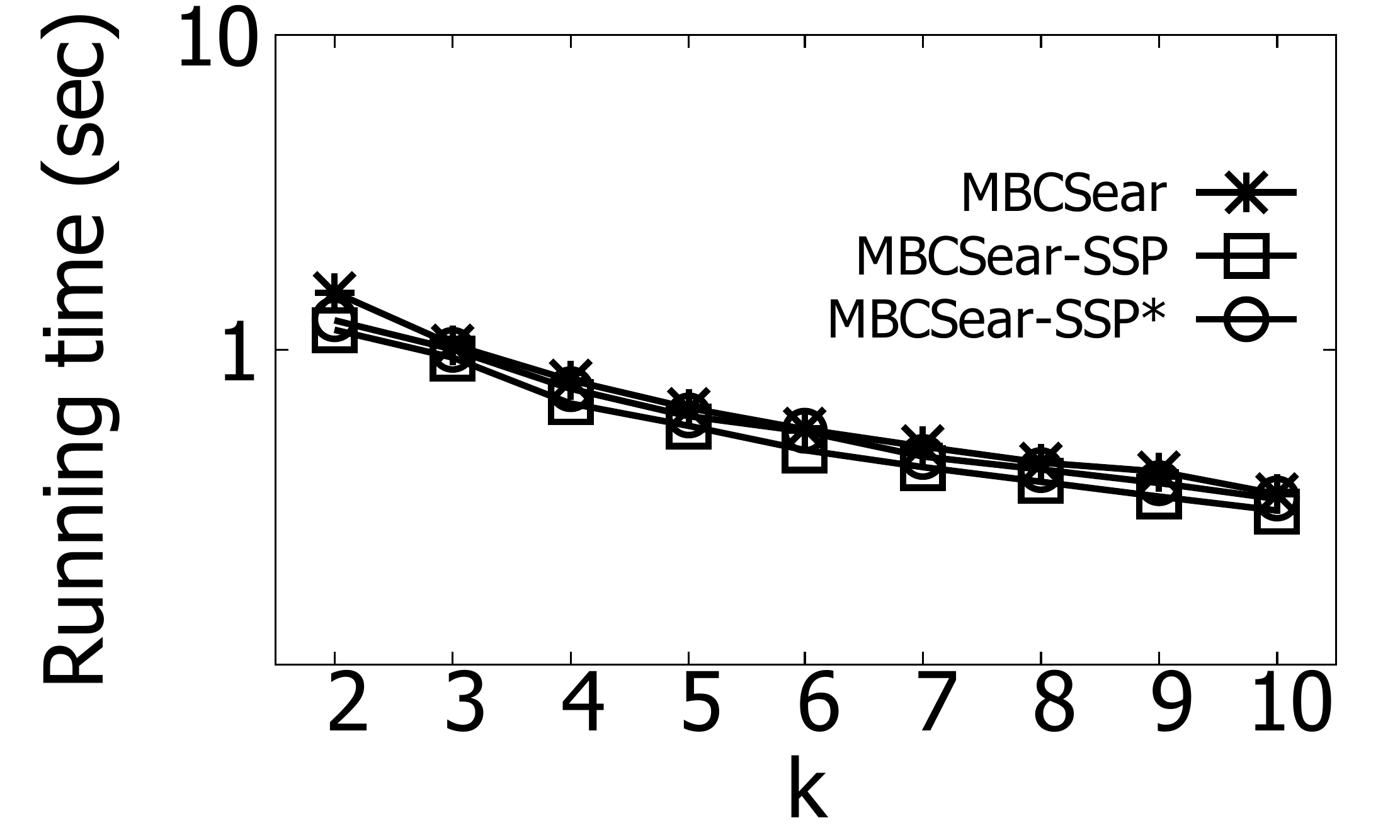}
}\subfigure[  \kw{Epinions} (Vary $k$)]{
\label{fig:time22}
\centering
\includegraphics[width=0.49\columnwidth]{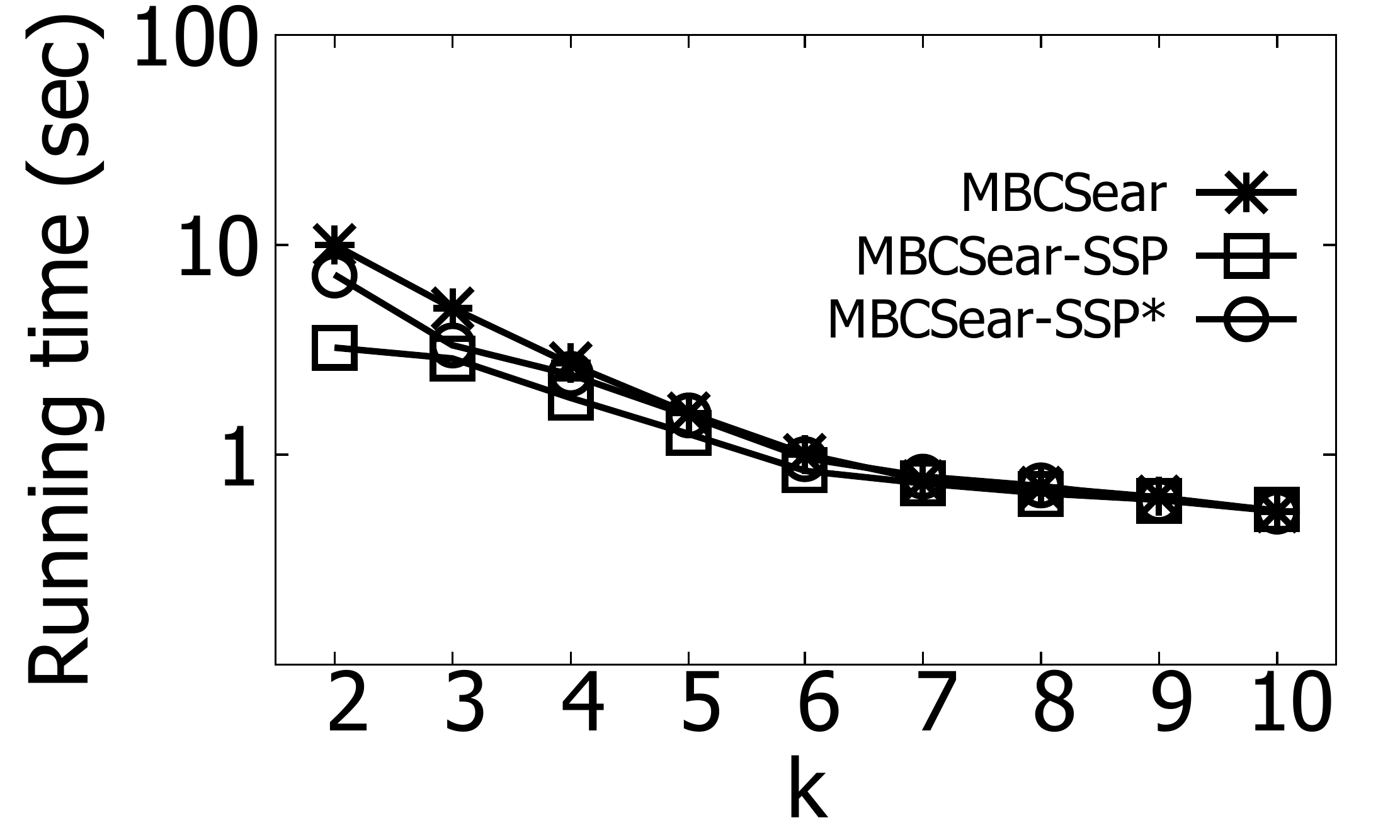}
}\\
%\subfigure[  Youtube (Vary $k$)]{
%\label{fig:time33}
%\centering
%\includegraphics[width=0.49\columnwidth]{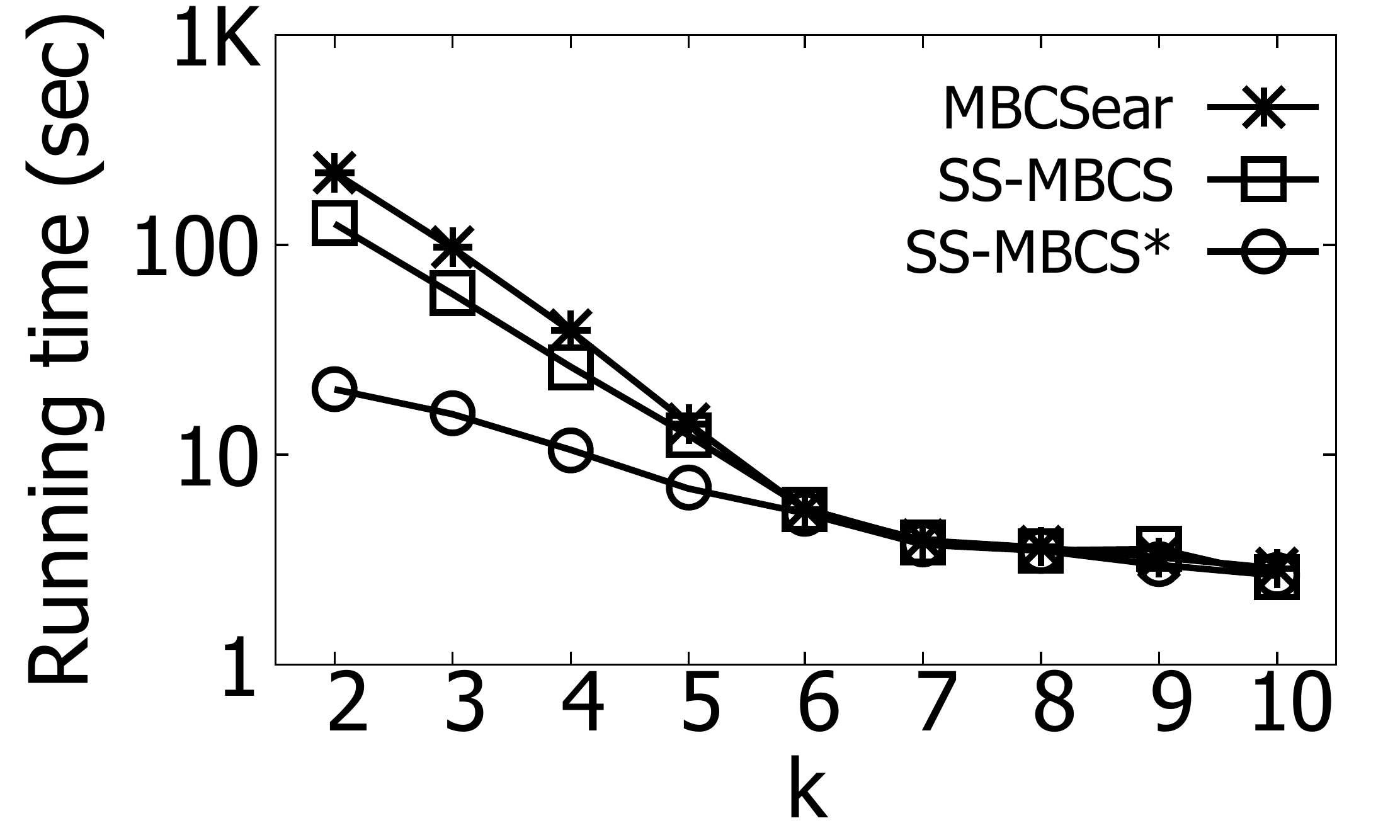}
%}
\vspace{-0.3cm}\subfigure[  \kw{DBLP} (Vary $k$)]{
\label{fig:time44}
\centering
\includegraphics[width=0.49\columnwidth]{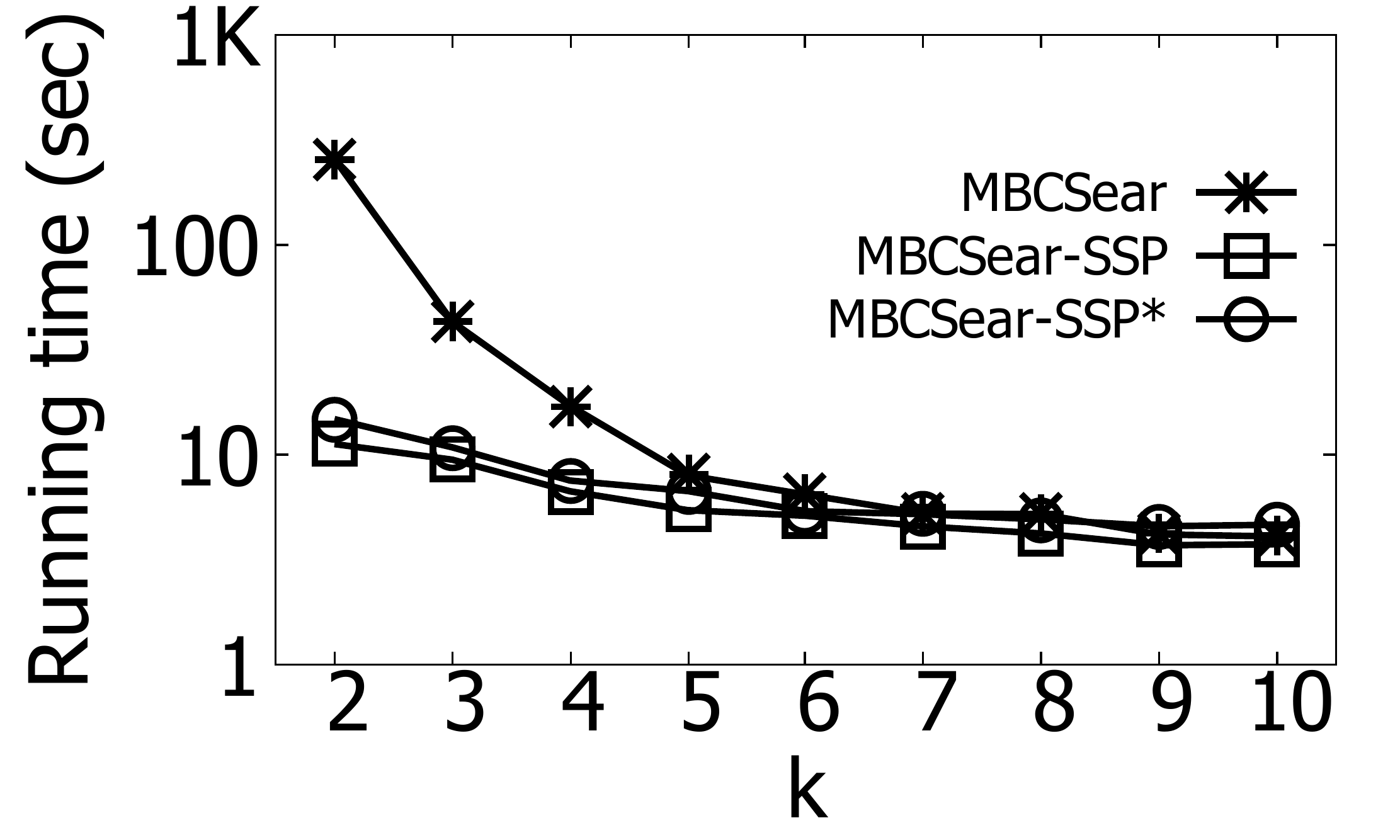}
}\subfigure[   \kw{Douban} (Vary $k$)]{
\label{fig:time55}
\centering
\includegraphics[width=0.49\columnwidth]{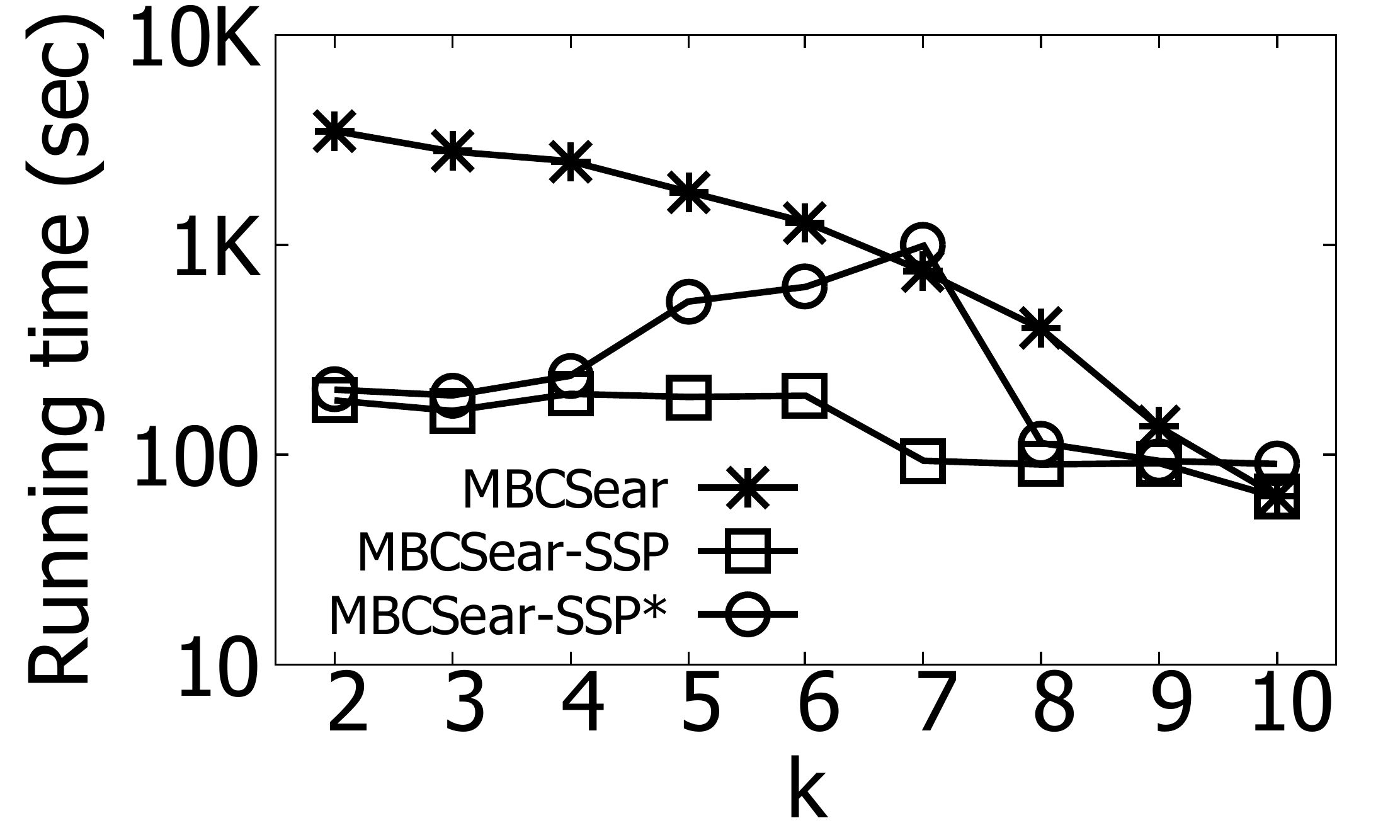}
%\subfigure[   \kw{Skitter} (Vary $k$)]{
%\label{fig:time55}
%\centering
%\includegraphics[width=0.49\columnwidth]{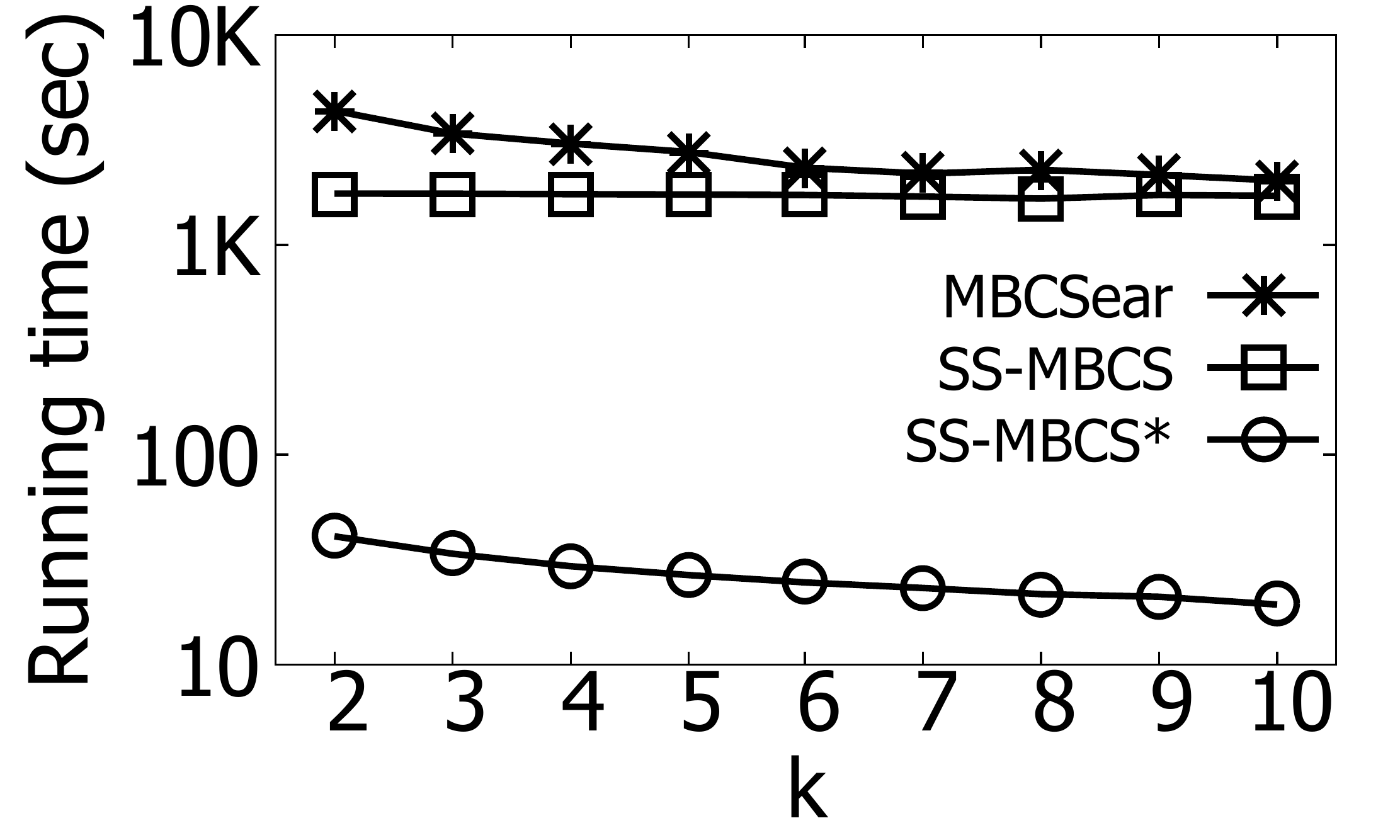}
}
\\\vspace{-0.3cm}
\subfigure[  \kw{Pokec} (Vary $k$)]{
\label{fig:time66}
\centering
\includegraphics[width=0.49\columnwidth]{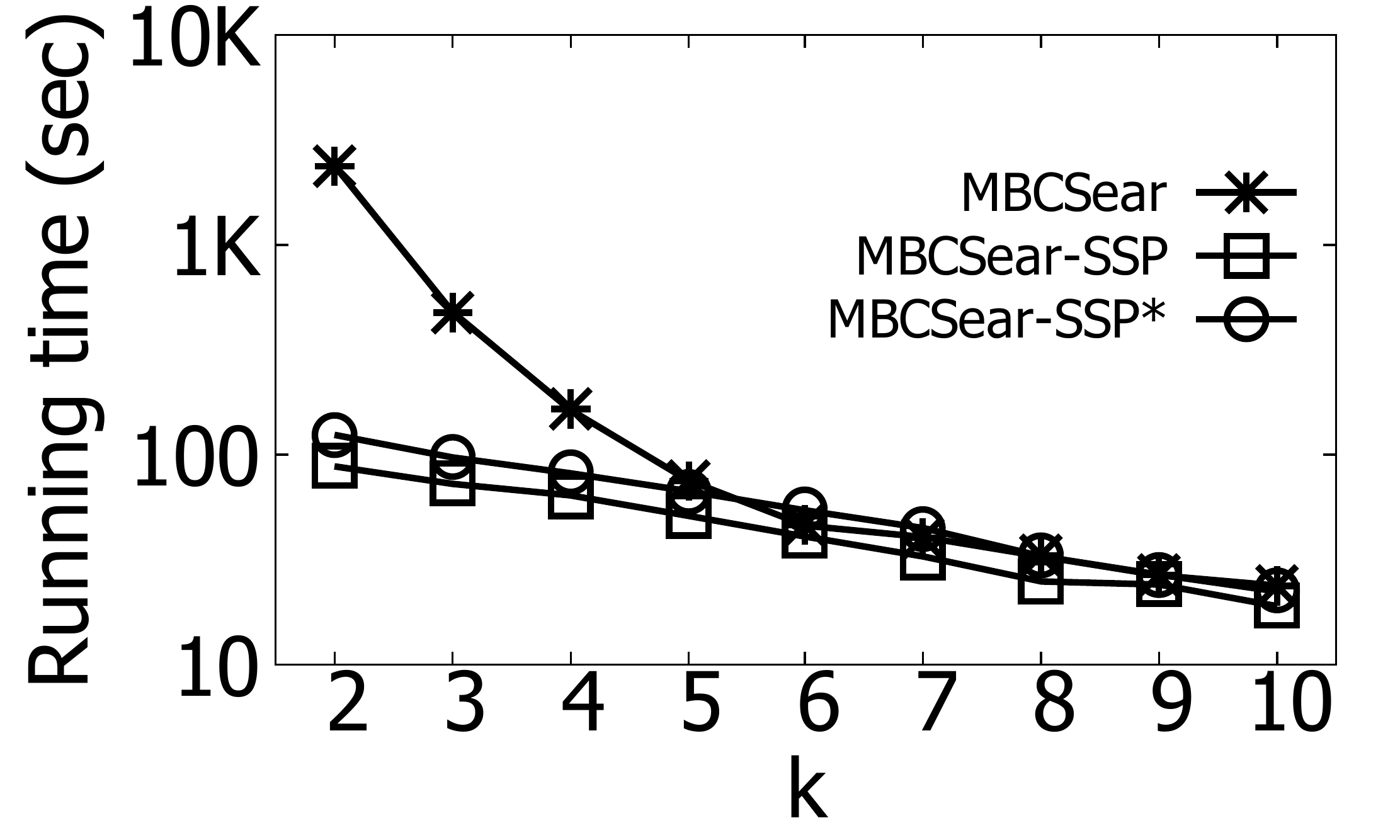}
}\subfigure[  Livejournal (Vary $k$)]{
\label{fig:time77}
\centering
\includegraphics[width=0.49\columnwidth]{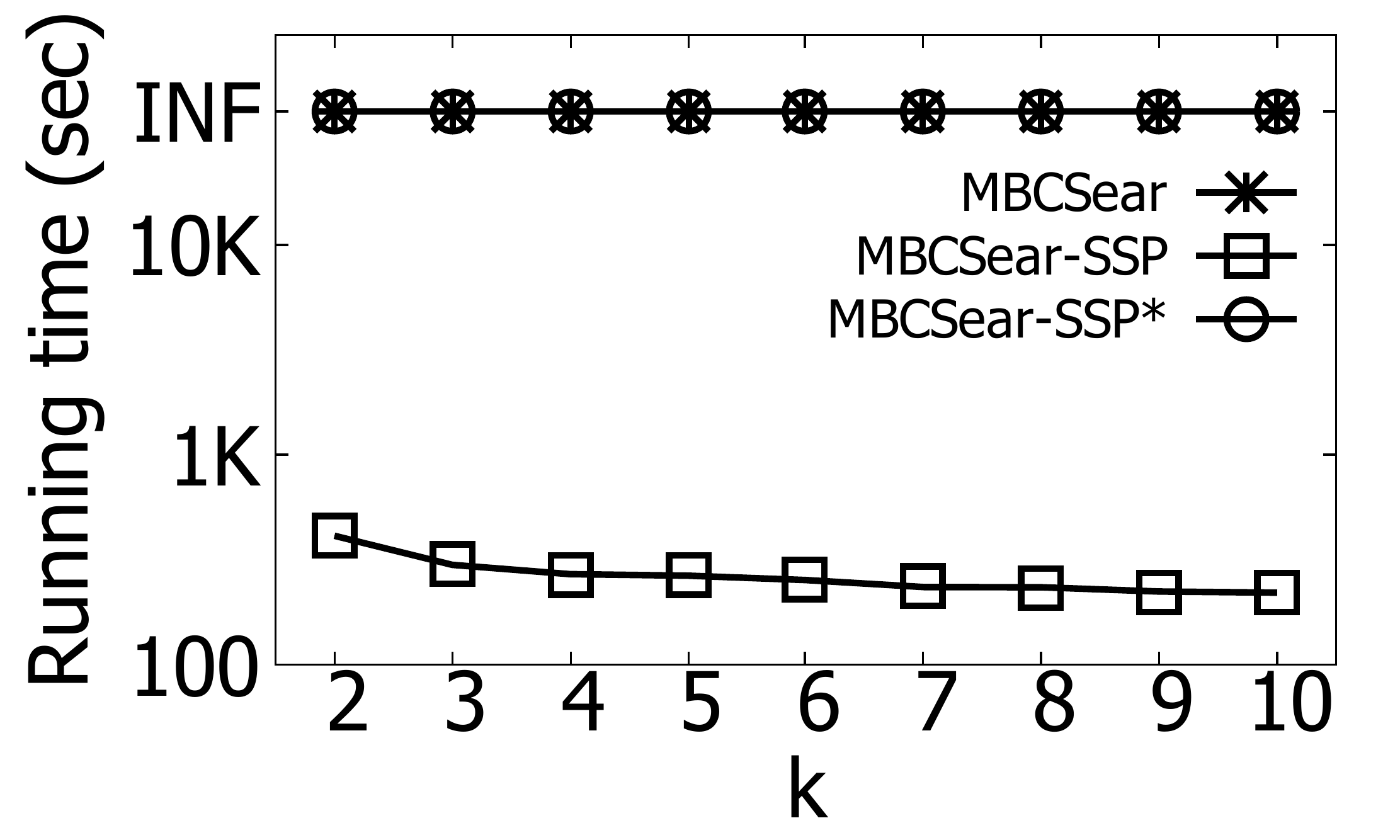}
}
\\\vspace{-0.3cm}
\subfigure[ \kw{Orkut} (Vary $k$)]{
\label{fig:time88}
\centering
\includegraphics[width=0.49\columnwidth]{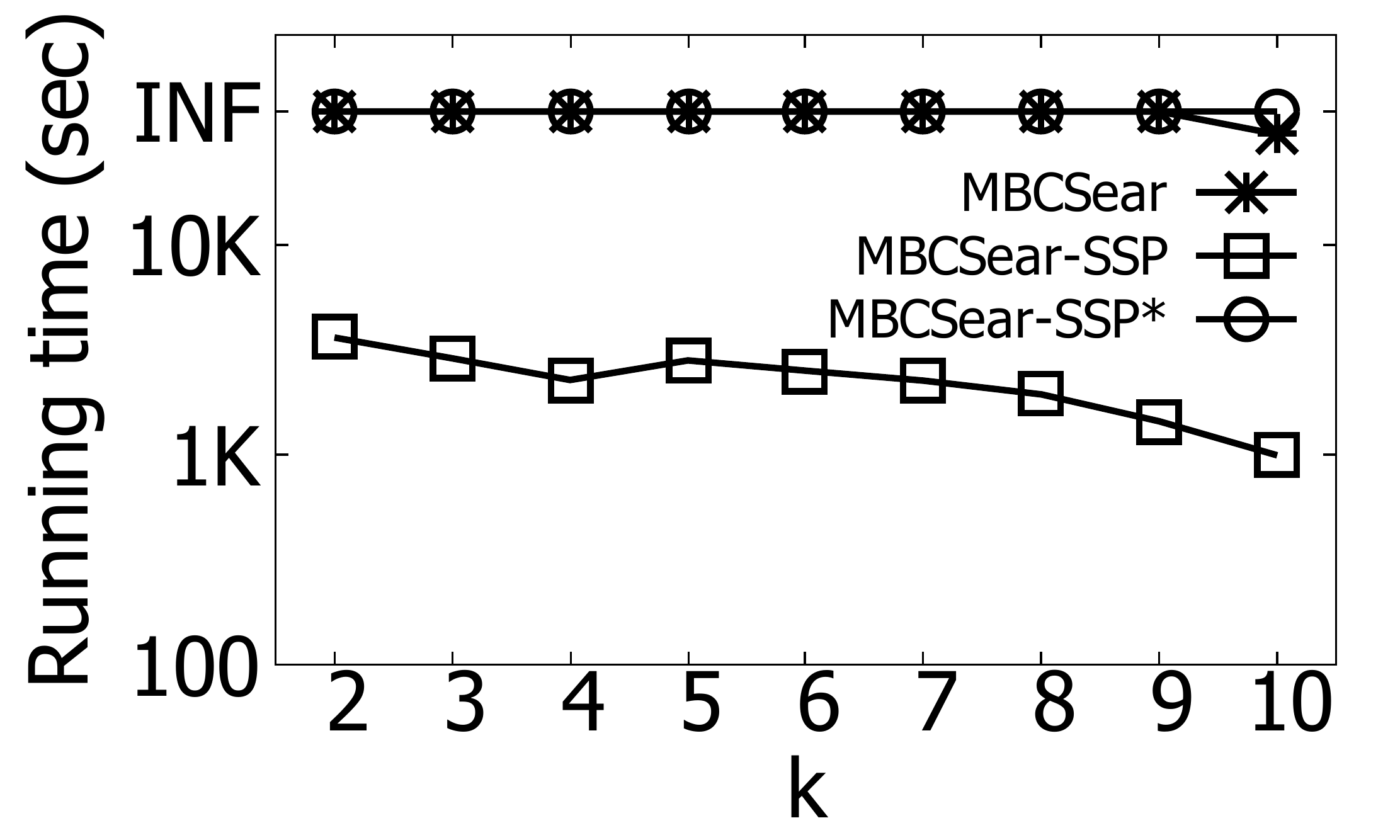}
}\subfigure[Dbpedia (Vary $k$)]{
\label{fig:time99}
\centering
\includegraphics[width=0.49\columnwidth]{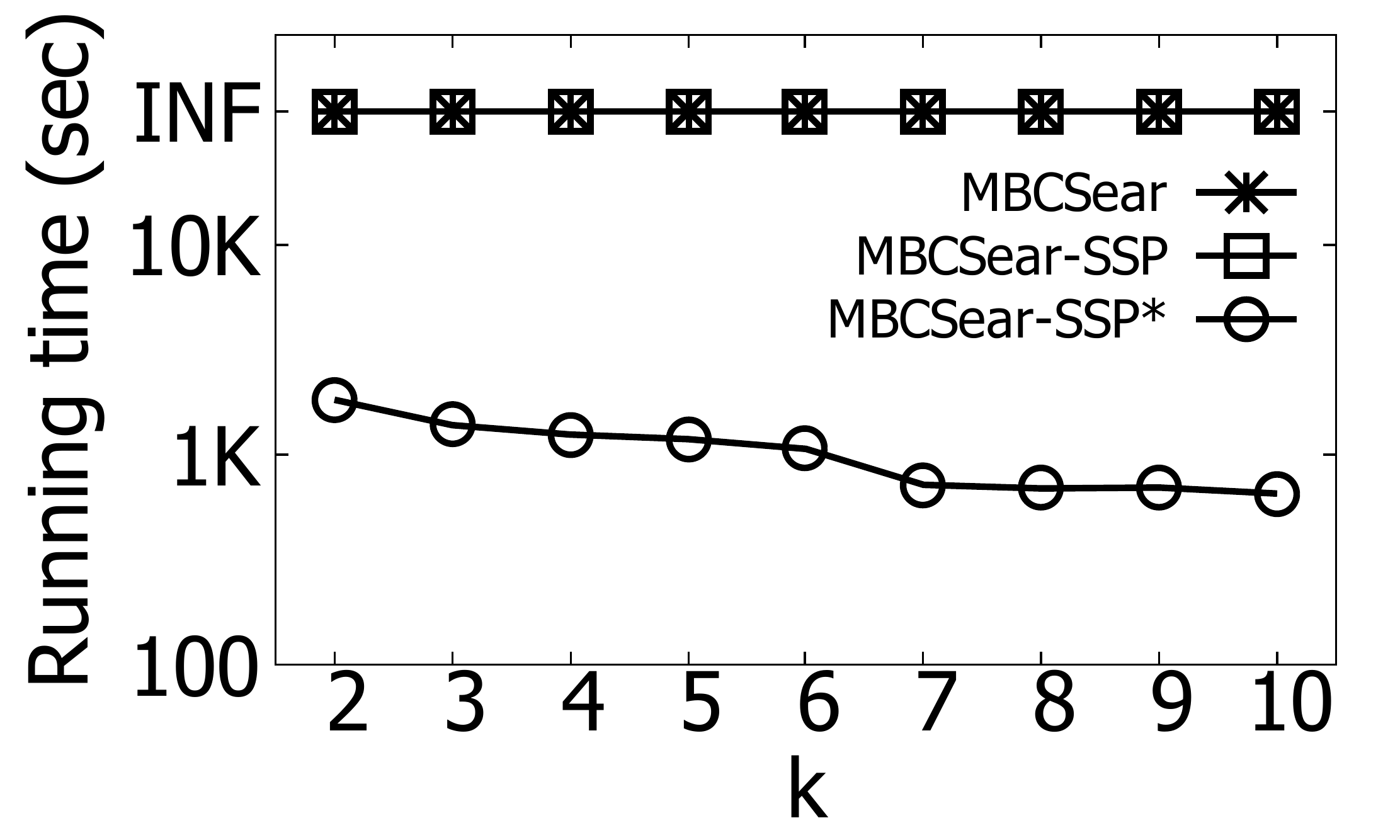}
}
\end{center}
\vspace{-0.4cm}
\topcaption{Running time of \kw{MBCS} algorithms varying $k$}
\label{fig:totaltime2}
\vspace{-0.4cm}
\end{figure}

\stitle{Exp-5: Effectiveness of \kw{MBCS} algorithms when varying $k$.} To intuitively compare the effectiveness of three \kw{MBCS} algorithms,  in this experiment, we record the amount of calculation of three algorithms on eight datasets, $k=[2-5]$. The calculation quantity is the time of invoking \mbcsu and \mbcsup, which can intuitively represent the search space of different algorithms. The experimental results are shown at \reffig{growcnt}.

As shown at \reffig{cnt1}, when $k=2$, on all datasets, the calculation quantity of \mbcssp is much less than that of other algorithms. Meanwhile, \mbcss's calculation quantity is less than \mbcs's. For instance, on DBLP(DB),  the calculation quantity of \mbcs, \mbcss and \mbcssp are 155,621, 328 and 183, respectively. When $k>2$, the trend  is similar. It's because \mbcssp and \mbcss are based on search space partitions, which can reduce the total search space effectively. The experimental results also confirm the reason for the efficiency of \mbcssp at \kw{Exp}-4.

\begin{figure}[t]
%\setlength{\abovecaptionskip}{20pt}
%\captionsetup[subfigure]{aboveskip=-1pt,belowskip=-1pt}
\begin{center}

\subfigure[ $k$=2]{
\label{fig:cnt1}
\centering
\includegraphics[width=0.49\columnwidth]{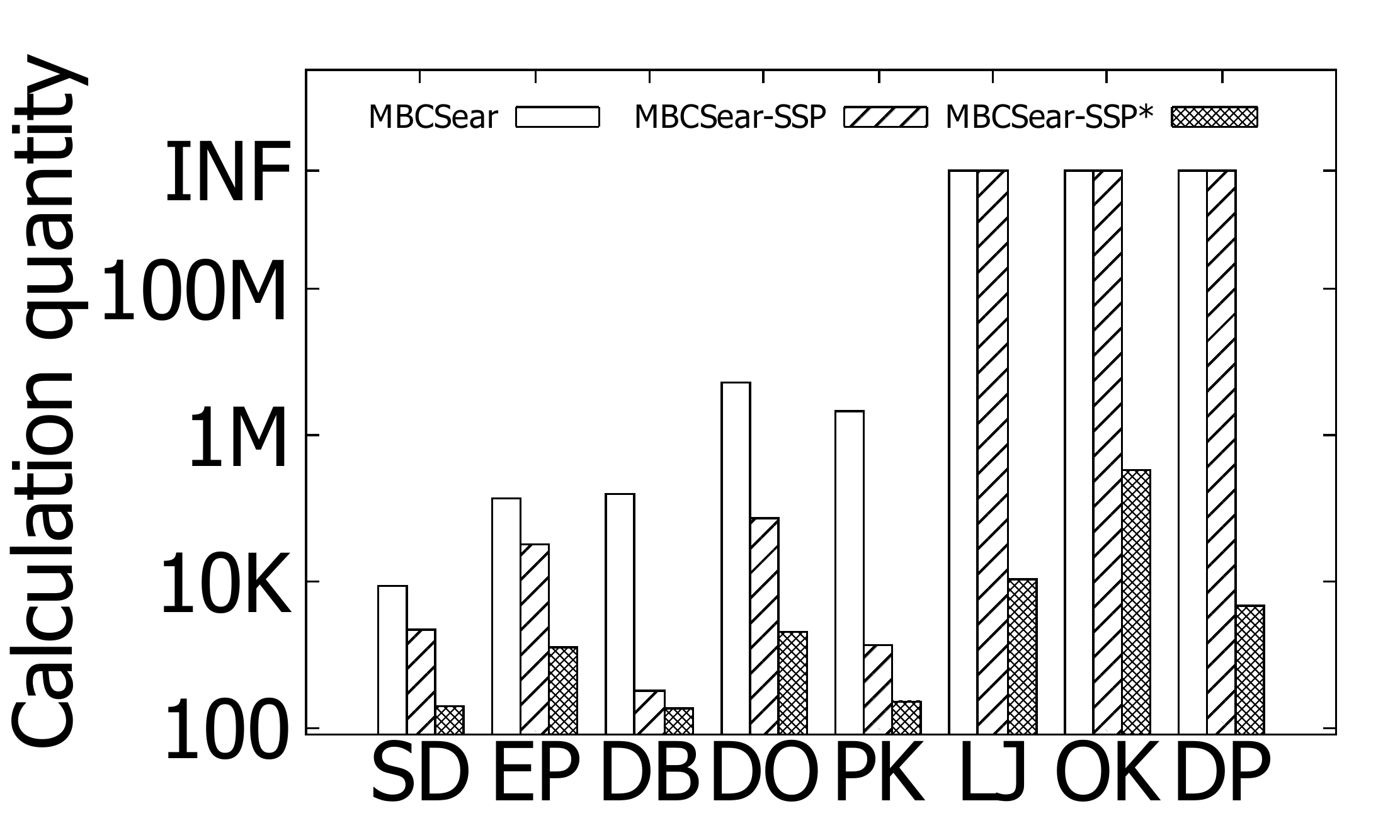}
}\subfigure[ $k$=3]{
\label{fig:cnt2}
\centering
\includegraphics[width=0.49\columnwidth]{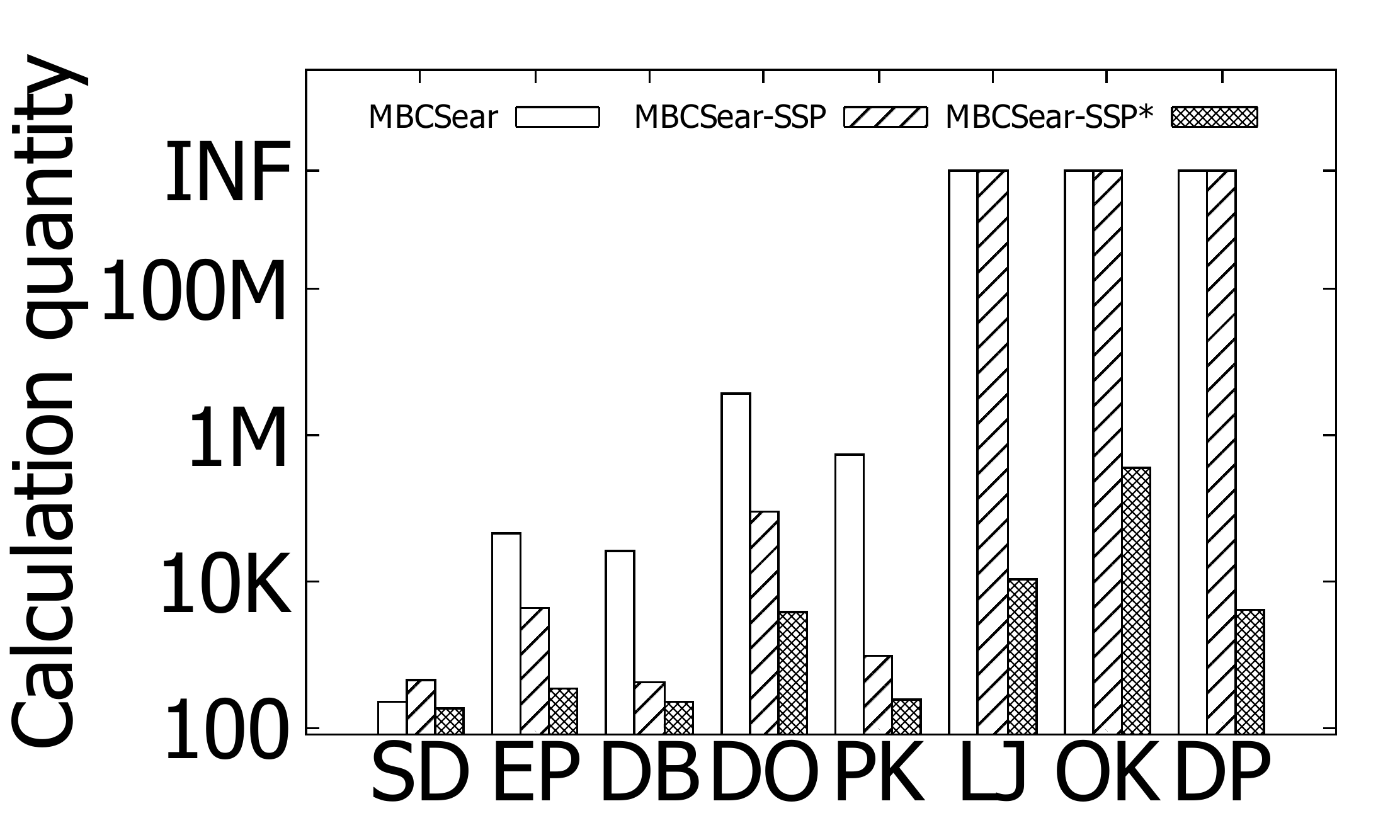}
}\\\vspace{-0.4cm}

\subfigure[ $k$=4]{
\label{fig:time3}
\centering
\includegraphics[width=0.49\columnwidth]{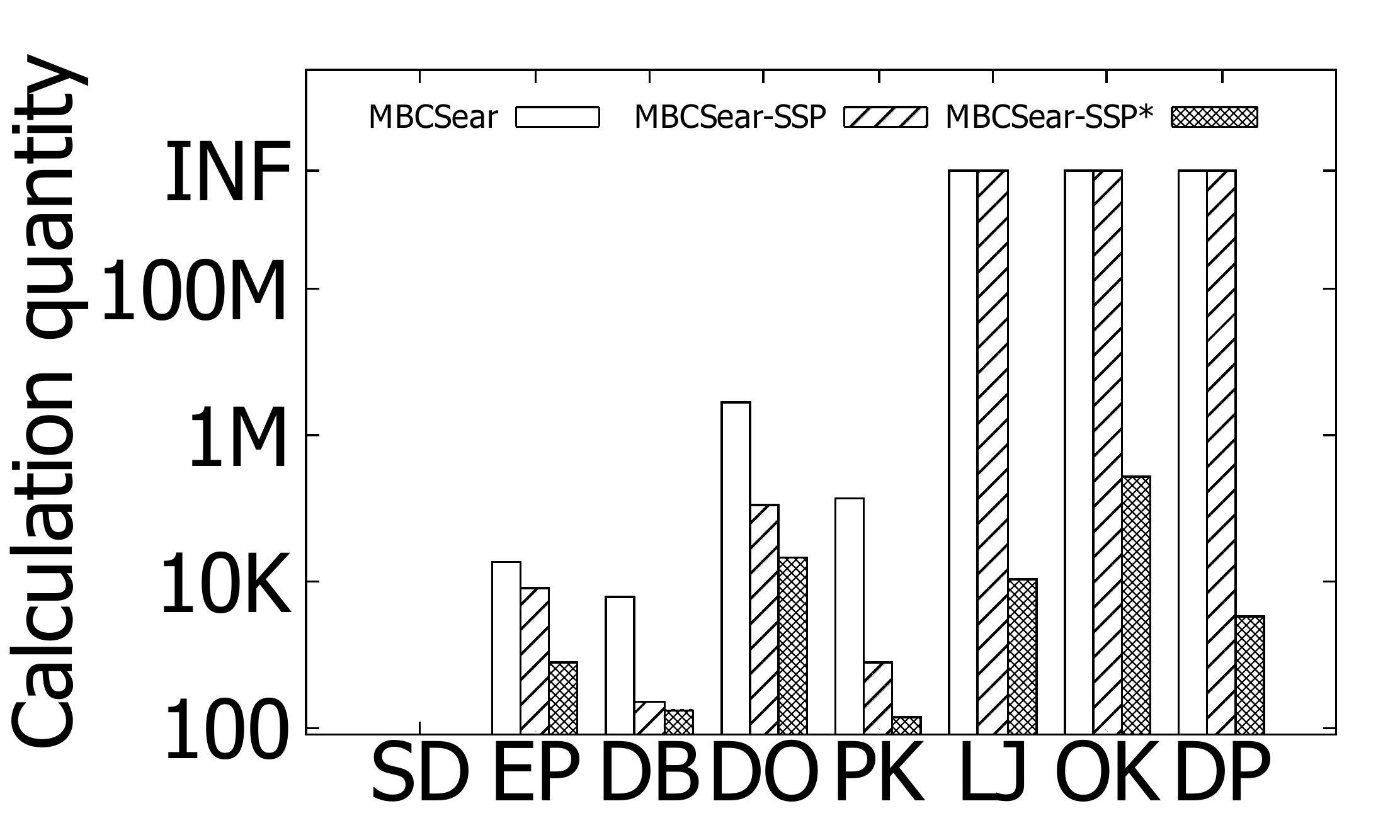}
}\subfigure[ $k$=5]{
\label{fig:time4}
\centering
\includegraphics[width=0.49\columnwidth]{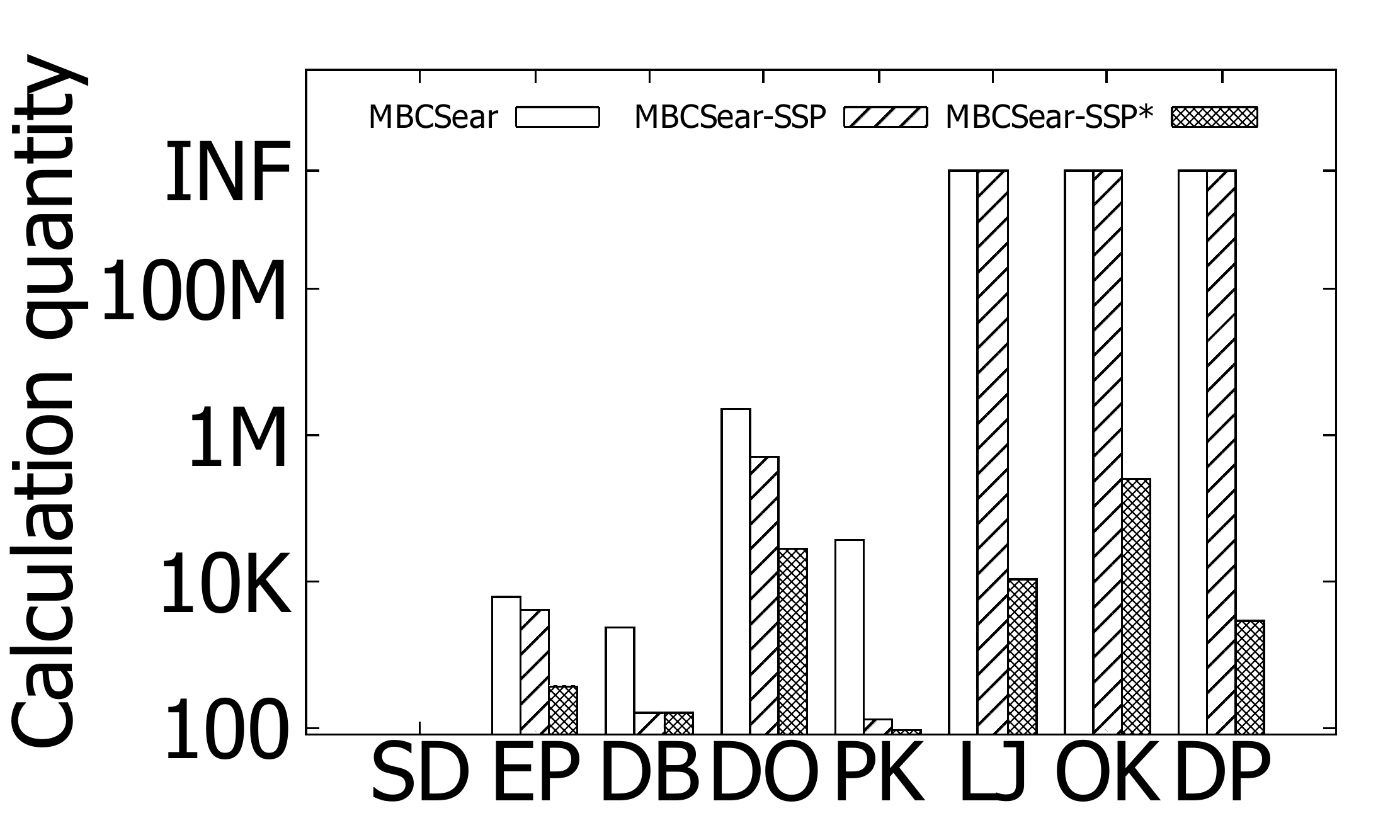}
}\vspace{-0.4cm}

\end{center}
\topcaption{Calculation quantity of different algorithms}
\label{fig:growcnt}
%\vspace{-0.4cm}
\end{figure}

\begin{table}
\caption{The search process of \mbcssp}
\vspace{-0.3cm}
\subtable[\mbcssp on Douban, $k$=2]{
%\topcaption{\mbcssp on Douban, $k$=2}
\label{tab:ss1}
\centering
\def\arraystretch{1.1}
\setlength{\tabcolsep}{0.65em}
{\small
\begin{tabular}{c|c|c|c|c|c}
\hline
 Index&Search Region&$|E^+_{G'}|$ & $|E^-_{G'}|$&$G'/G(\%)$&$\epsilon$ \\%&$d^+_{max}$&$d^-_{max}$\\
\hline
0&(2,78)&0&0&0&4\\
1&(2,39)&0&0&0&4\\
2&(2,20)&21,695&12,190&0.24&33\\
3&(13,13)&0&0&0&33\\
\hline
\end{tabular}
}
\vspace{-0.4cm}
%\caption{\mbcssp on Douban, $k$=2}
%\end{table}
%\end{subtable}
}
%\begin{table}

%\vspace{0.2cm}
\subtable[\mbcssp on Pokec, $k$=2]{
%\topcaption{\mbcssp on Pokec, $k$=2}
\label{tab:ss2}
\centering
\def\arraystretch{1.1}
\setlength{\tabcolsep}{0.7em}
\small
{
\begin{tabular}{c|c|c|c|c|c}
\hline
 Index&Search Region&$|E^+_{G'}|$ & $|E^-_{G'}|$&$G'/G(\%)$&$\epsilon$ \\%&$d^+_{max}$&$d^-_{max}$\\
\hline
0&(2,37)&0&0&0&4\\
1&(2,19)&6,605&2,521&0.30&29\\
2&(10,10)&0&0&0&29\\
\hline
\end{tabular}
}
\vspace{-0.4cm}
%\caption{\mbcssp on Pokec, $k$=2}
%\end{table}
%\end{subtable}
}
\label{tab:ss}
\vspace{-0.4cm}
\end{table}

\stitle{Exp-6: Search process of \mbcssp on real datasets.}
In this experiment, we show the search process of \mbcssp algorithm on Douban and Pokec datasets, $k$=2. Table \ref{tab:ss} shows every search region $(\overline{\kappa},\underline{\kappa})$, the size of its input graph $G'$ including positive edges number $|E^+_{G'}|$ and  negative edges number $|E^-_{G'}|$, the ratio of $G'$ in the original graph $G$, and the maximum balanced clique size $\epsilon$ found so far. Since Douban and Pokec are big datasets, the search process is time consuming, hence, \mbcssp adopts $Dec(\overline{k})=\lceil\frac{\overline{\kappa}}{2}\rceil$. 
%limit the number of search regions.

As Table \ref{tab:ss} shows, on Douban, the first search region $(\underline{\kappa}_0,\overline{\kappa}_0)$ is $(2,78)$, and $\epsilon$ is initialized as 4. \mbcssp first invokes \edgereductionv to pre-reduce useless edges in $G$. Due to the large value of $\overline{\kappa}_0$, all edges are pruned from $G$. Hence, $G'$ is empty here. For the second search region  $(2,39)$, $G'$ is still empty. For the third search region $(2,20)$, $G'$ has 21,695 positive edges and 12,190 negative edges, it only holds 0.24\% edges of the original graph which is much less than $G$. Then, \mbcssp finds the maximum balanced clique $C^*$ on $G'$. The size of  $C^*$ is 33. For the last search region $(13,13)$, $G'$ is empty, the search process is finished. The search process  on Pokec is similar. 
% On Pokec, as shown at Table \ref{tab:ss2}, the first search region is $(2,37)$, $G'$ is empty. For the second search region $(2,19)$, the pruned graph $G'$ has 6,605 positive edges and 2,521 negative edges, which only holds 0.3\% edges of $G$. \mbcssp gets the result on $G'$. The final search region is $(10,10)$, $G'$ is empty, the search process on Pokec is finished.

By observing the search process of \mbcssp on the two datasets, We find two significant phenomena. First, benefited from the search space partition paradigm, the number of search regions is limited. Second, the input graph $G'$ for each search region is much smaller than the original graph, because the edge reduction strategy can remove most of the invalid edges before the search starting. 
%In a result, the search space partition paradigm and the edge reduction strategy can effectively reduce the search space and improve the efficiency of \mbcssp.

%as we reduce the value of $\overline{\kappa}$ by dichotomy, and increase the value of $\underline{\kappa}$ according to the size of the current maximum balanced clique $\epsilon$, so that $\overline{\kappa}$ and $\underline{\kappa}$ can be equal after a limited number of searches.

\stitle{Exp-7: Scalability of \kw{MBCS} algorithms.} In this experiment, we evaluate the scalability of \kw{MBCS} algorithms on two biggest datasets Orkut and Dbpedia as Exp-2. The results are shown at \reffig{scal-mbcs}.

 As \reffig{scal-mbcs} shows, with the number of vertices increases, the running time of three algorithms increases as well. Among them, the growth rate of \mbcssp is the most stable. For instance, on Dbpedia with 40\% vertices, \mbcs cannot get result within a reasonable time, the running time of \mbcss and \mbcssp are 11953.0s and 319.9s, respectively. On Dbpedia with more than 40\% vertices, only \mbcssp can get result within a reasonable time. The trend of running time on Orkut is similar. Therefore, \mbcssp can scale to large-scale graphs.

%We randomly sample 20\%, 40\%, 60\% and 80\% vertices from the original graph to produce 4 sample graphs, then run \kw{MBCS} algorithms on the corresponding produced graphs and the original graph, the results are shown at \reffig{scal-mbcs}.

%As \reffig{scal-mbcs} shows, with the number of vertices increases, the running time of three algorithms increases as well. Among them, the growth rate of \mbcssp is the most stable. 
%For instance, on Dbpedia with 20\% vertices, the running time of three algorithms is almost the same. However, on Dbpedia with 40\% vertices, \mbcs cannot get result within a reasonable time, the running time of \mbcss and \mbcssp is 11953.0s and 319.9s, respectively. On other graphs with more than 40\% vertices, \mbcs and \mbcss cannot get result within a reasonable time. On Orkut, the trend of running time is similar. 

\begin{figure}[t]
%\setlength{\abovecaptionskip}{20pt}
%\captionsetup[subfigure]{aboveskip=-1pt,belowskip=-1pt}
\begin{center}
%\subfigure[\normalsize Livejournal]{
%
%\label{fig:scal11}
%\centering
%\includegraphics[width=0.49\columnwidth]{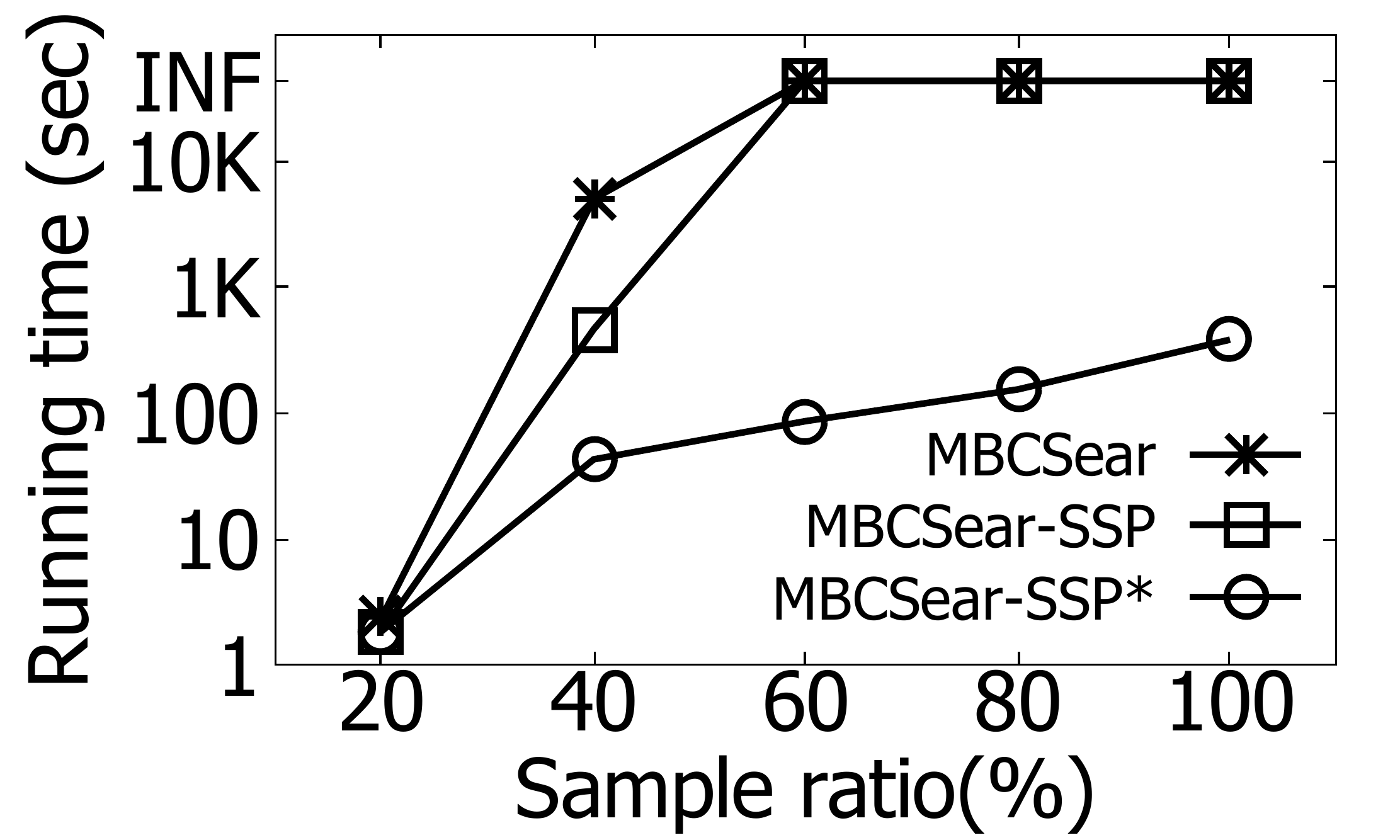}
%}
\subfigure[\small Orkut (Vary $n$)]{
\label{fig:scal22}
\centering
\includegraphics[width=0.49\columnwidth]{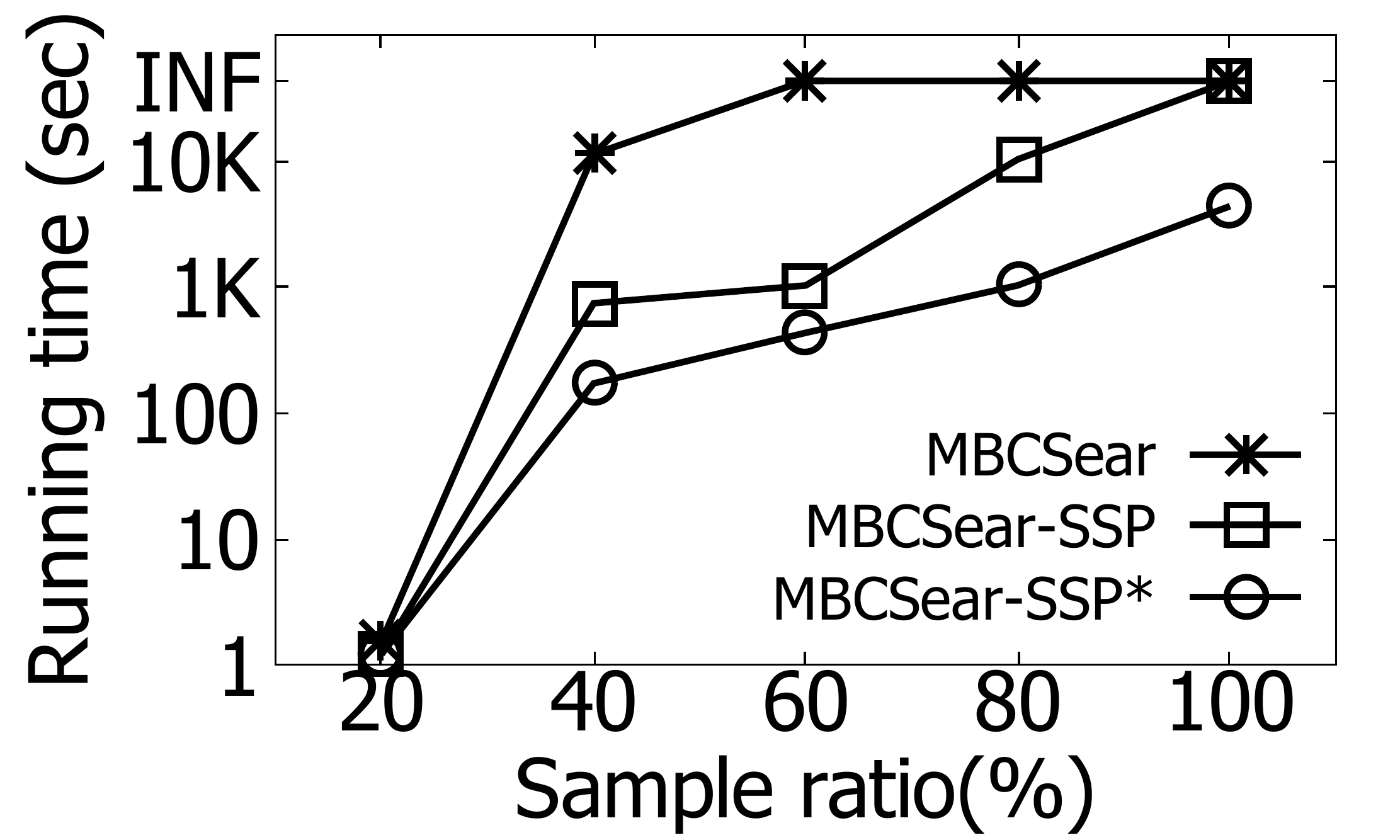}
}\vspace{-0.4cm}\subfigure[\small Dbpedia (Vary $n$)]{
\label{fig:scal33}
\centering
\includegraphics[width=0.49\columnwidth]{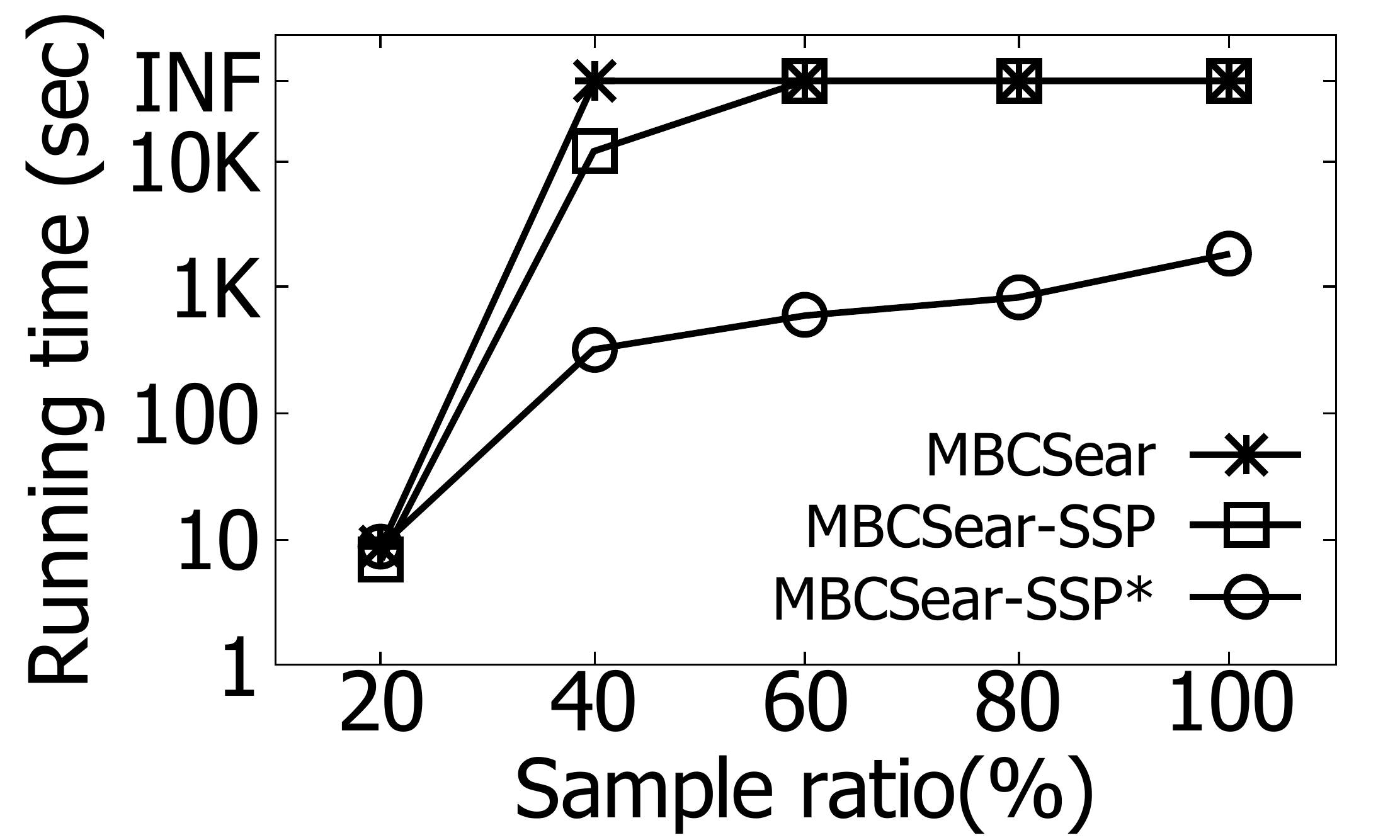}
}\\

\end{center}
\topcaption{Scalability of \kw{MBCS} algorithms, $k$=2}
\label{fig:scal-mbcs}
\vspace{-0.4cm}
\end{figure}

%\vspace{-0.4cm}

\section{Conclusions}
\label{sec:conclusion}
 In this paper, we study the maximal balanced clique enumeration problem in signed networks.  We propose a new enumeration algorithm tailored for signed networks. Based on the new enumeration algorithm, we explore two optimization strategies to further improve the efficiency of the enumeration algorithm. Besides, we study the maximum balanced clique search problem, and propose a novel search space partition-based search framework. Moreover, we explore multiple optimization strategies to further reduce the search space during search process. The experimental results on real datasets demonstrate the efficiency, effectiveness and scalability of our solutions.

% \stitle{Acknowledge.} Long Yuan is supported by  NSFC61902184 and NSF of Jiangsu Province BK20190453.   Xuemin Lin  is supported by 2018YFB1003504, NSFC61232006, ARC DP180103096 and DP170101628. Lu Qin is supported by ARC DP160101513.

%\balance
%% use section* for acknowledgment
%\ifCLASSOPTIONcompsoc
%  % The Computer Society usually uses the plural form
%  \section*{Acknowledgments}
%\else
%  % regular IEEE prefers the singular form
%  \section*{Acknowledgment}
%\fi

%\ifCLASSOPTIONcaptionsoff
%  \newpage
%\fi

%  \section*{Acknowledgment}
%
%%\begin{acks}
% Zi Chen is supported by China Postdoctoral Science Foundation 2021M701214.
% Long Yuan is supported by  NSFC61902184, NSF of Jiangsu Province BK20190453, and  Science and Technology on Information Systems Engineering Laboratory WDZC20205250411. Li Han is supported by Shanghai Sailing Program 21YF1411100.
%%\end{acks}

%\newpage
\bibliographystyle{IEEEtran}
\bibliography{reference}
\vspace{-1.8cm}

\end{document}